\newcommand{\floor}[1]{\lfloor #1 \rfloor}
\newtheorem{theorem}{Theorem}[section]
\newtheorem*{theorem*}{Theorem}
\newtheorem{lemma}[theorem]{Lemma}
\newtheorem{claim}[theorem]{Claim}
\newtheorem{proposition}[theorem]{Proposition}
\newtheorem{observation}[theorem]{Observation}
\newtheorem{fact}[theorem]{Fact}
\crefname{fact}{Fact}{Facts}
\theoremstyle{definition}
\newtheorem{definition}[theorem]{Definition}
\theoremstyle{remark}
\newcommand{\W}{\mathcal{W}}
\newcommand{\Oh}{O}
\newcommand{\tOh}{\tilde{\Oh}}
\newcommand{\sub}{\subseteq}
\newcommand{\cost}{\mathsf{cost}}
\newcommand{\D}{\mathcal{D}}
\newcommand{\dd}{\mathinner{.\,.}}
\newcommand{\sm}{\setminus}
\newcommand{\rev}[1]{\overline{#1}}
\newcommand{\DYCK}{\mathsf{Dyck}}
\newcommand{\dyck}{\mathsf{dyck}}
\title{Improved Approximation Algorithms for Dyck Edit Distance\\and RNA Folding}
\author[1]{Debarati Das\thanks{Supported by Basic Algorithms Research Copenhagen (BARC), grant 16582 from the VILLUM Foundation.}}
\author[2]{Tomasz Kociumaka\thanks{Partially supported by NSF 1652303, 1909046, and HDR TRIPODS 1934846 grants, and an Alfred P. Sloan Fellowship.}}
\author[2]{Barna Saha\protect\footnotemark[2]}
\affil[1]{University of Copenhagen, Denmark}
\affil[ ]{\texttt{debaratix710@gmail.com}}
\affil[2]{University of California, Berkeley, USA}
\affil[ ]{\texttt{\{kociumaka,barnas\}@berkeley.edu}}
\date{}
\begin{document}

\maketitle
\begin{abstract}
The Dyck language, which consists of well-balanced sequences of parentheses,
 is one of the most fundamental context-free languages. 
The Dyck edit distance quantifies the number of edits (character insertions, deletions, and substitutions) required to make a given length-$n$ parenthesis sequence well-balanced. RNA Folding involves a similar problem, where a closing parenthesis can match an opening parenthesis of the same type irrespective of their ordering. For example, in RNA Folding, both $\texttt{()}$ and $\texttt{)(}$ are valid matches, whereas the Dyck language only allows $\texttt{()}$ as a match. Both of these problems have been studied extensively in the literature. Using fast matrix multiplication, it is possible to compute their exact solutions in time $\Oh(n^{2.824})$ (Bringmann, Grandoni, Saha, V.-Williams, FOCS'16), and a $(1+\epsilon)$-multiplicative approximation is known with a running time of $\Omega(n^{2.372})$.

The impracticality of fast matrix multiplication often makes combinatorial algorithms much more desirable. Unfortunately, it is known that the problems of (exactly) computing Dyck edit distance and folding distance are at least as hard as Boolean matrix multiplication.
Thereby, they are unlikely to admit truly subcubic-time combinatorial algorithms. In terms of fast approximation algorithms that are combinatorial in nature, the state of the art for Dyck edit distance is a $\Oh(\log n)$-factor approximation algorithm that runs in near-linear time (Saha, FOCS'14), whereas for RNA Folding only an $\epsilon n$-additive approximation in $\tOh(\frac{n^2}{\epsilon})$ time (Saha, FOCS'17) is known. 

In this paper, we make substantial improvements to the state of the art for Dyck edit distance (with any number of parenthesis types). We design a constant-factor approximation algorithm that runs in $\tOh(n^{1.971})$ time (the first constant-factor approximation in subquadratic time). Moreover, we develop a $(1+\epsilon)$-factor approximation algorithm running in $\tOh(\frac{n^2}{\epsilon})$ time,
	which improves upon the earlier additive approximation. Finally, we design a $(3+\epsilon)$-approximation that takes $\tOh(\frac{nd}{\epsilon})$ time, where $d\ge 1$ is an upper bound on the sought distance.
	
As for RNA folding, for any $s\ge 1$, we design a multiplicative $s$-approximation algorithm that runs in  $O(n+\left( \frac{n}{s} \right)^3)$ time. To the best of our knowledge, this is the first nontrivial approximation algorithm for RNA Folding that can go below the $n^2$ barrier. All our algorithms are combinatorial in nature. 
\end{abstract}
\thispagestyle{empty}
\setcounter{page}{0}
\newpage

\section{Introduction}
The Dyck language is a well-known context free language consisting of well-balanced sequences of parentheses. Ranging from programming syntaxes, to  arithmetic and algebraic expressions, environments in LaTeX, and tags in HTML/XML documents -- we observe instances of Dyck language everywhere. For a comprehensive discussion on Dyck language and context free grammars see~\cite{h:78,k:12}. Given a sequence $x$ of $n$ parentheses (which may be unbalanced), the \emph{Dyck edit distance problem} asks for the minimum number of edits (character insertions, deletions, and substitutions) needed to make $x$ well-balanced. Interestingly, string edit distance, which is one of the fundamental string similarity measures, can be interpreted as a special case of Dyck edit distance\footnote{Given two strings $s$ and $t$, form a sequence of parentheses by concatenating $s$, interpreted as a sequence of opening parentheses, and the reverse complement of $t$, obtained by reversing $t$ and replacing each symbol with the corresponding closing parenthesis, not present in the original alphabet.}.

A simple dynamic programming can compute Dyck edit distance in $O(n^3)$ time. After nearly four decades, this cubic running time was improved in 2016 by Bringmann, Grandoni, Saha, and V. Williams~\cite{BGSW19}, who gave the first sub-cubic exact algorithm for a more general problem of language edit distance~\cite{ap72}. The algorithm uses not-so-practical fast Boolean matrix multiplication, arguably so because computing Dyck edit distance is at least as hard as Boolean matrix multiplication~\cite{abw:15}, and hence combinatorial subcubic algorithms are unlikely to exist. 

A problem closely related to Dyck edit distance is RNA Folding~\cite{nussinov1980fast}. Both in RNA Folding and Dyck edit distance, parentheses must match in an uncrossing way. However, in a RNA-folding instance, a close parenthesis of same type can match an open parenthesis irrespective of the order of occurrences. For example, under the RNA Folding distance, both $\texttt{()}$, and $\texttt{)(}$ are valid matches, whereas the Dyck language only allows $\texttt{()}$ as a match. In terms of exact computation, they exhibit the same time complexity~\cite{BGSW19,abw:15}\footnote{In these problems, we are aiming to minimize the number of non-matched parentheses as opposed to maximizing the matched parentheses.}.

Can we design fast approximation algorithms for Dyck edit distance and RNA Folding? The first progress on this question for Dyck edit distance was made by Saha~\cite{s:14}, who proposed a polylogarithmic-factor approximation algorithm that runs in near linear time. It is possible to  provide an $\epsilon n$ additive approximation for any $\epsilon >0$ in $\tOh(\frac{n^2}{\epsilon})$ time~\cite{s:17}. Therefore, unless the distance is $O(n)$, and we allow quadratic time, the above algorithm does not provide a constant factor approximation to Dyck edit distance. This latter result on additive approximation applies to RNA Folding as well.  Backurs and Onak~\cite{BO16} showed an exact algorithm for Dyck edit distance that runs in $O(n+d^{16})$ time which was recently improved by Fried, Golan, Kociumaka, Kopelowitz, Porat and Starikovskaya \cite{otherSubmission} to run in $O(n+d^5)$ time (and $\tOh(n+d^{4.783})$ using fast matrix multiplication). Therefore, as the state of the art stands, (i) there did not exist any nontrivial multiplicative approximation for RNA Folding in subquadratic time, and (ii) there did not exist any subquadratic-time constant factor approximation for Dyck edit distance that works for the entire distance regime.

Let us contrast this state of affairs with the progress on string edit distance approximation. As mentioned earlier, string edit distance is a special case of Dyck edit distance. Early work~\cite{lms:98,bjkk:04,bfc:06,ak:09} on approximating string edit distance resulted in the first near-linear time polylogarithmic-factor approximation in 2010 by Andoni, Krauthgamer, and Onak~\cite{ako:10}. It took another
eight years to obtain the first constant-factor approximation of edit distance in subquadratic time~\cite{CDGKS18} (see~\cite{BEGHS21} for a quantum analog). Finally, Andoni and Nosatzki improved the running time to near-linear while maintaining a constant factor approximation~\cite{AN20}. Using the best result in string edit distance approximation~\cite{AN20}, it is possible to improve the approximation factor of~\cite{s:14} to $O(\log{n})$. However, designing a constant factor approximation for Dyck edit distance in subquadratic time remains wide open. RNA Folding even though conceptually very similar to Dyck edit distance, the algorithm of \cite{s:14} does not apply to it. 

Saha's work on Dyck approximation~\cite{s:14} developed a random walk technique which has later been used for edit distance embedding and document exchange~\cite{DBLP:conf/stoc/ChakrabortyGK16,DBLP:conf/focs/BelazzouguiZ16}. Using this random walk, it is possible to  decompose a parenthesis sequence into many instances of string edit distance problem. However, this decomposition loses a logarithmic factor in the approximation, raising the question whether there exists an efficiently computable decomposition with a significant less loss.

\paragraph*{Contributions.}
We provide several efficient algorithms approximating Dyck edit distance, and RNA Folding.
\paragraph*{Faster Approximation Algorithms for Dyck Edit Distance}
\begin{itemize}
\item 
\textbf{Constant approximation in Subquadratic time.}
\emph{The main contribution of this paper is the first constant factor approximation algorithm for Dyck edit distance that runs in truly subquadratic time.} 
The running time of our algorithm is $\tOh(n^{1.971})$. We remark that to keep the algorithm and the analysis simple we do not optimize the exponent in the running time.
We employ and significantly extend the tools that have been previously developed in connection to string edit distance and related measures such as the windowing strategy, window-to-window computation, sparse and dense window decomposition etc.~\cite{CDGKS18,BEGHS21,DBLP:conf/stoc/GoldenbergRS20}. These methods are tied to problems involving two or more strings (unlike the Dyck edit distance, which is a single sequence problem). Given the universality of Dyck edit distance, the techniques developed in this paper may lead to further advancements for more generic problems like the language edit distance etc.~\cite{s:15,s:17,BGSW19}.

Our main algorithm handles the cases of large and small Dyck edit distance separately. 
    \item \textbf{Small Dyck edit distance.}
When the Dyck edit distance $d$ is small, we give a $(3+\epsilon)$-approximation algorithm that runs in $\tOh(\frac{nd}{\epsilon})$ time. We can contrast this result to the previously known time bound of $O(n+d^{16})$ by Backurs and Onak~\cite{BO16} which improves to $O(n+d^6)$ if substitutions are not allowed. 
In a parallel work~\cite{otherSubmission}, these running times were improved to $O(n+d^5)$ (combinatorially)
and $\tOh(n+d^{4.783})$ (using randomization).
Nevertheless, even in a hypothetical bast-case scenario that a combinatorial $O(n+d^3)$-time algorithm exists, an $\tOh(nd)$-time algorithm is still faster for all $d \gg \sqrt{n}$. 

\item \textbf{A Quadratic PTAS.} We give a $(1+\epsilon)$ approximation algorithm for Dyck edit distance that runs in $\tOh(\frac{n^2}{\epsilon})$ time. This improves upon the previous result of~\cite{s:17} that gets such a result only when $d=\Theta(n)$. Prior work that achieves multiplicative $(1+\epsilon)$ approximation uses fast Boolean matrix multiplication and has super-quadratic running time~\cite{s:15}. 
\end{itemize}

\paragraph*{RNA Folding}
For RNA Folding, we are aiming to minimize the number of non-matched characters. We often refer to this as the folding distance. For any $s >1$, we give an $s$ multiplicative approximation of RNA Folding
distance in time $O(n+\left(\frac{n}{s}\right)^3)$. This is the first result to our knowledge that goes below the quadratic running time.
It is to be noted here that the machinery developed in \cref{sec:consub}, and a new triangle property for Dyck distance (\cref{lem:triangle} in \cref{sec:prelim}) equally apply to the RNA folding problem as well. This implies a  constant-factor approximation of RNA folding in $\tOh(n^{1.97})$ time when the distance is large, say higher than $n^{0.99}$. 

\paragraph*{Discussion and Open Problems.} 
The resemblance between Dyck and string edit distance has already been studied in the literature. As mentioned earlier, the decomposition obtained by the random walk technique 
ensures only an $O(\log n)$ approximation~\cite{s:14}. In this work, instead of reducing the Dyck edit distance to string edit distance, we try to find a direct decomposition of the sequence $x$ into different substrings, where for each substring there is a peer such that they are matched by some optimal alignment (with some error leading to a constant-factor approximation). However unlike the string counterpart, Dyck edit distance does not have the structural property that if an optimal alignment matches the characters of a substring $s_1$ with the characters of a substring $s_2$, then the lengths of $s_1$ and $s_2$ are roughly the same (see Fig~\ref{fig:fig1}). Thus, in our decomposition, the substrings can have varied lengths. In fact, it turns out that if the Dyck edit distance is truly sublinear (i.e., $n^{1-\epsilon}$), then we need to consider roughly $n^\epsilon$ different length substrings to ensure a constant-factor approximation. We remark that this  is one of the  barriers in further pushing down the running time from subquadratic to near linear. We also note that if an analog of our $\tOh(nd)$-time algorithm can be provided for RNA Folding, then we would also get a constant-factor subquadratic algorithm for RNA folding for all distance regimes.

The Dyck recognition problem has been studied extensively in different models including the streaming~\cite{mmn:stoc10,kls:mfcs11,cckm:focs10} and property testing framework~\cite{alon2001regular,prr:rand03,FMS18}. 
However, no sublinear-time approximation algorithms exist either for Dyck edit distance or  RNA Folding. Our algorithm for RNA Folding in this paper (which also applies to Dyck edit distance) runs in $O(n+\left(\frac{n}{s}\right)^3)$ time, and requires a linear time pre-processing step that eliminates pairs of matching adjacent characters, which leaves strongly structured instances. This preprocessing step is currently the main bottleneck in going in the sublinear-time setting.




\subsection{Technical Overview}

\begin{figure}[htp]
    \centering
    \includegraphics[scale=0.4]{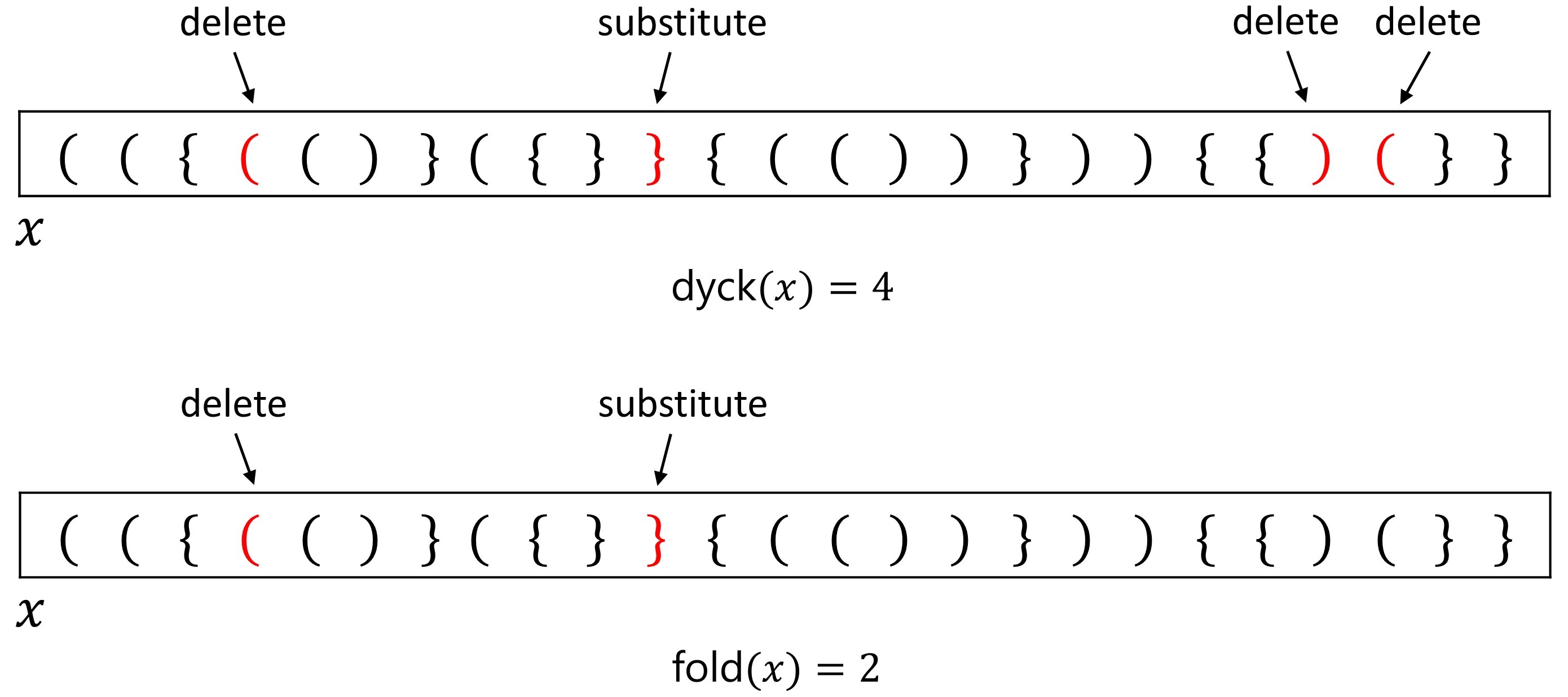}
    \caption{Example of Dyck and folding edit distance. }\label{fig:dyckexmp}
\end{figure}

As input, we are given a string $x$ of length $n$ over alphabet $\Sigma$ that consists of two disjoint sets $T$ and $\rev{T}$ of opening and closing parentheses respectively. The Dyck edit distance of $x$ is the minimum number of parentheses insertions, deletions, and substitutions required to make $x$ well-parenthesized.

\paragraph*{Quadratic-time PTAS.}
The standard $\Oh(n^3)$-time algorithm for the Dyck edit distance is a dynamic-programming procedure that computes the distance of each substring of the input string. The bottleneck of this approach is that, to compute the distance of each substring $x(i\dd j]$, starting at index $i+1$ and ending at index $j$, one needs to iterate over all possible decompositions of the substring into a prefix $x(i\dd k]$ and a suffix $x(k\dd j]$ where the intermediate index $k\in (i\dd j)$ (this corresponds to the fact that the concatenation of two well-parenthesized expression is a well-parenthesized expression).\footnote{For $i,j\in \mathbb{Z}$, we denote $[i\dd j]=\{k\in \mathbb{Z} : i \le k \le j\}$, $[i\dd j)=\{k\in \mathbb{Z} : i \le k < j\}$, $(i\dd j]=\{k\in \mathbb{Z} : i < k \le j\}$, and $(i\dd j)=\{k\in \mathbb{Z} : i < k < j\}$.} We call index $k$ a pivot corresponding to a decomposition. The $\tOh(\frac{n^2}{\epsilon})$-time $\epsilon n$ additive approximation of~\cite{s:17} reduces the number of considered pivots to $\tOh(\frac{1}{\epsilon})$; thus $\tOh(\frac{n}{\epsilon d})$ ($d=\dyck(x)$) different pivots would be necessary for a $(1+\epsilon)$ multiplicative approximation (which is same as $\epsilon d$ additive approximation).
On the other hand, a simple $\Oh(n^2d)$-time algorithm recently developed in~\cite{otherSubmission} is based on a combinatorial observation that $\Oh(d)$ pivots are sufficient after the $\Oh(n)$-time preprocessing from~\cite{BO16,s:14}. 
We start with a brief overview of this algorithm. For any index $i\in [0\dd n]$, we define the height of $i$ to be $h(i)=|\{j\in [1\dd i] : x[j]\in T\}| - |\{j\in [1\dd i] : x[j]\in \rev{T}\}|$ i.e., the difference between the number of opening and closing parentheses in substring $x[1\dd i]$.
An index $v$ is called a valley if $h(v-1)>h(v)<h(v+1)$, i.e., $x[v]$ is a closing parenthesis whereas $x[v+1]$ is an opening parenthesis. \cite{BO16} showed $x$ can be preprocessed in linear time to generate another string $x'$ such that $\dyck(x)=\dyck(x')$ and $x'$ has at most $2d$ 
valleys. \cite{otherSubmission} further claimed that, without loss of generality, it is enough to consider pivots that are at distance $0$ or $1$ from a valley (we henceforth denote the set of such pivots by $K$) plus $\Oh(1)$ pivots next to the boundary of the considered range $(i\dd j)$; note observation yields a $\Oh(n^2d)$-time algorithm. 

In Section~\ref{sec:PTAS}, we provide an algorithm that further restricts the set of pivots $K$ considered for each range $(i\dd j)$ and provides a $(1+\epsilon)$ approximation of $\dyck(x)$ in $\tOh(\epsilon^{-1}n^2)$ time (Theorem~\ref{thm:approxdyck}).
This is inspired by how~\cite{s:17} considered only $\tOh(\epsilon^{-1})$ pivots out of each range $(i\dd j)$. The original argument relies on two observations: that using pivot $k'$ instead of $k$ incurs at most $\Oh(|k-k'|)$ extra edit operations,
and that, for an $\epsilon n$ additive approximation, we can afford $\Oh(\frac{\epsilon \min(j-k,\, k-i)}{\log n})$ extra operations when using pivot $k\in (i\dd j)$. In our multiplicative approximation, we refine the second observation by replacing $\min(j-k,k-i)$ with $\min(|K\cap (j\dd k)|, |K\cap (i\dd k)|)$. 
On the other hand, the first observation is not useful because the set $K\cap(i\dd j)$ is already relatively sparse. Thus, instead of restring each range $(i\dd j)$ to use few pivots $k$, we restrict each pivot $k$ to be used within few ranges $(i\dd j)$. This is feasible with respect to the approximation ratio because the costs for $x(i\dd j]$ and $x(i'\dd j']$ may only differ by $\Oh(|i-i'|+|j-j'|)$, and because the universe of ranges remains dense in the $\Oh(n^2 d)$-time algorithm (unlike the universe of pivots).

\paragraph*{Constant-factor approximation in $\tOh(nd)$ time.}
Overcoming the $\Oh(n^2)$ barrier with a DP approach poses significant challenges: there are $\Theta(n^2)$ substrings to consider and, for $d\ge \sqrt{n}$, this quantity does not decrease (in the worst case)
even if we run the preprocessing of~\cite{BO16,s:14} and exclude substrings with costs exceeding $d$.
Thus, we artificially restrict the DP to substrings whose all prefixes have at least as many opening parentheses as closing ones and whose all suffixes have at least as many closing parentheses as opening ones. Surprisingly, as shown in \cref{sec:3apx}, this yields a 3-approximation of the original cost.
Furthermore, if we additionally require that the number of opening parentheses and the number of closing parentheses across the entire substring are within $2d$ from each other (otherwise, the Dyck edit distance trivially exceeds the threshold), we end up with $\Oh(nd)$ substrings.
Reusing the pivot sparsification of \cref{sec:PTAS}, they can be processed in $\tOh(\epsilon^{-1}nd)$ total time
at the cost of increasing the approximation ratio from $3$ to $3+\epsilon$.

\begin{figure}[htp]
    \centering
    \includegraphics[scale=0.4]{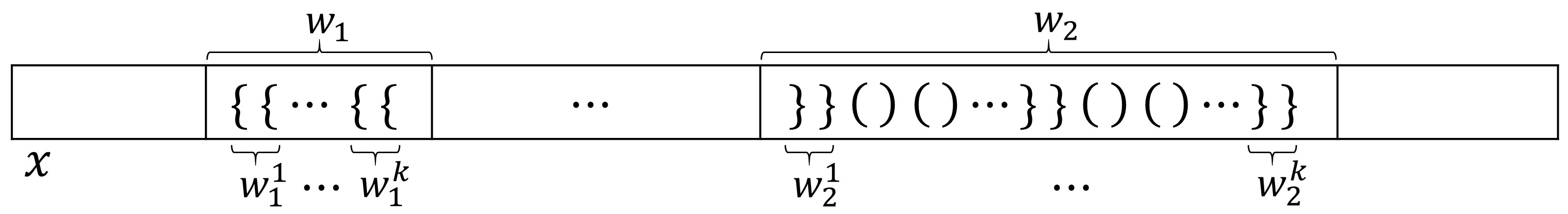}
    \caption{An example showing two very different length substrings can be matched with cost $0$. }\label{fig:fig1}
\end{figure}

\paragraph*{Constant approximation in $\tOh(n^{1.971})$ time.}
In Section~\ref{sec:consub}, we provide an algorithm Dyck-Est that provides a constant approximation in time $\tOh(n^{1.971})$. 
    At a high level, the framework used in our algorithm is similar to the three-step framework used in~\cite{CDGKS18} that provides constant approximation of string edit distance in subquadratic time. 
Thus, we start by providing a brief recap of~\cite{CDGKS18}, and point out the major bottlenecks for applying this framework directly to our problem. Given two strings $x,y$ of length $n$, the algorithm of~\cite{CDGKS18} starts by constructing a set of windows $\mathcal{W}_x$ for $x$ and a set of windows $\mathcal{W}_y$ for $y$ where each window is a length-$d$ subinterval of $[1\dd n]$, representing a substring of $x$ (or of $y$). The motivation behind this construction is the following: provided the edit distance between any pair of windows from  $\mathcal{W}_x$ and $\mathcal{W}_y$, one can compute a constant approximation of $ED(x,y)$ (edit distance of $x,y$) using a dynamic program algorithm in time $O(\frac{n^2}{d^2})$. The challenge here is that if for all pair of windows from  $\mathcal{W}_x$ and $\mathcal{W}_y$, the edit distance is computed using a trivial dynamic program algorithm then the total running time becomes quadratic.~\cite{CDGKS18} managed to show that in some favorable situation, one can use random sampling to select a subset of window pairs from $\mathcal{W}_x\times \mathcal{W}_y$ such that evaluating the edit distance of the window pairs in the subset is enough to construct a solution close to an optimal alignment of $x,y$. 
In the other extreme they showed instead of computing edit distance for each window pairs explicitly, one can use triangle inequality to get a constant approximation of the optimal cost. The use of triangle inequality was first proposed by~\cite{BEGHS21} except they replaced the random sampling by a quantum counterpart; thus providing a quantum algorithm computing constant approximation of edit distance in subquadratic time. Several other works have subsequently used this framework to solve related problems~\cite{DBLP:conf/stoc/GoldenbergRS20,BSS20,RSSS:19}.

Using random walk~\cite{s:14} provides an algorithm that partitions string $x$ into a set of disjoint subsequences where each subsequence starts with a single sequence of open parentheses and ends with a single sequence of closing parentheses, such that the sum of the edit distance of these subsequences provides an $O(\log n)$ approximation of $\dyck(x)$. This reduction establishes a strong structural resemblance between string and Dyck edit distance though the dynamic program algorithms for these problems portrait a significant difference in the time complexity. Apart from this similar to string edit distance, Dyck edit distance also satisfies triangular inequality; precisely one can show for any three strings $x,y,z$, $\dyck(x\overline{z})\le \dyck(x \overline{y})+\dyck(y\overline{z})$, here $\overline{z}$ is the reverse complement of $z$ (obtained by reversing $z$ and flipping the direction of each parenthesis).
We remark that unlike string edit distance, establishing triangular inequality for Dyck edit distance is not straightforward and require several technical arguments.
Motivated by these similarities we try to adapt the three-step framework algorithm to approximate Dyck edit distance. However, it turns out that the very first step \emph{window decomposition} fails for our purpose. 
Thus, to design a subquadratic-time $O(1)$ approximation algorithm for Dyck edit distance we
\begin{enumerate}
    \item Propose a new window decomposition strategy and provide an analysis that shows that any optimal alignment of $x$ can be represented by matching the window pairs generated by our strategy.
    
    \item Establish triangular inequality for Dyck edit distance. 
\end{enumerate}

Next we discuss the limitations of the windowing strategy of~\cite{CDGKS18} and explain how to overcome them. 

~\cite{CDGKS18} starts by dividing each input strings into fixed sized substrings and then after finding the distance between different pair of substrings (such that the optimal alignment is covered) a dynamic program algorithm is used to combine them.
In our case the first difference is that the input to the Dyck edit distance problem is just a single string. Regardless, following the strategy of dimension reduction, one idea could be to partition the input string $x$ into windows $w_1,\dots,w_\ell \subset [1\dd n]$ of length $d$ (where $d=n^{\theta(1)}$) with the hope that provided the Dyck edit distance for each substring $x[w_i]\circ x[w_j]$ (here $x[w_i]$ denotes the substring of $x$ restricted to the indices in $w_i$ and $x[w_i]\circ x[w_j]$ represents concatenation of substrings $x[w_i]$ and $x[w_j]$), one can use the cubic time dynamic program algorithm to compute an estimation of $\dyck(x)$ (Dyck edit distance of $x$) in time $\tOh(\frac{n^3}{d^3})$. However, it turns out that unlike the string edit distance this straight forward decomposition does not work directly for the following reasons. 

\begin{itemize}
    \item In case of string edit distance if an optimal alignment (with cost $d$) matches $x[i]$ with $y[j]$ then $x[i+1]$ can be matched only with a character in $x[i+1],\dots, x[i+d+1]$.
    Thus, if we consider a window $w_1$ in $x$ and a window $w_2$ in $y$ such that $ED(x[w_1],x[w_2])$ is small then we can say $|w_1|\approx |w_2|$. This structural property completely breaks down for Dyck edit distance. For example in Fig~\ref{fig:fig1} for the two windows $w_1,w_2$ though their sizes are very different, $\dyck(x[w_1]\circ x[w_2])=0$. This shows partitioning $x$ into a single length windows does not suffice.
    \item 
    To overcome the above-mentioned issue let us assume that we allow variable length windows. However, it is possible that an optimal alignment matches a window $w_1$ of length $d$ with a window $w_2$ of length $\gg d$
   and thus if we allow windows to have length $\gg d$ (it can be as large as $\Omega(n)$) then evaluating an estimation of the cost of these large window pairs may not be efficient. 
One way out could be to further divide both $w_1$ and $w_2$ into smaller windows $w_1^1,w_1^2,\dots, w_1^k$ and $w_2^1,w_2^2,\dots, w_2^k$ respectively
and separately compute cost of each substring $x[w_1^i]\circ x[w_2^j]$.
Here as $|w_1^i|$ and $|w_2^j|$ are not too large, their distance can be computed efficiently but
notice as $|w_1^i|$ can be very small (as small as $n^{o(1)}$); in this case the total number of subproblems (total number of window pairs that we evaluate can grow) can explode and thus the dynamic program algorithm combining these subproblems may become inefficient.
\end{itemize}
Thus, for Dyck edit distance one of the most challenging task is to divide the input string $x$ into a set of variable sized windows such that the window sizes are neither too small (this ensures that the total number of subproblems (window pairs) we evaluate is not too large; thus the DP combining them is efficient) nor too large (so that evaluating the distance of a pair of windows is efficient as well) and any optimal alignment of $x$ can be represented by matching these window pairs.
Formally we need to construct a set of windows
$\mathcal{J}$ where each window has length at most $d$ (we set this $d$ to be a polynomial in $n$), $|\mathcal{J}|\approx \frac{n}{d}$ and there exists a subset $\mathcal{S}=\{(w_1,w'_1),\dots,(w_\ell,w'_\ell)\}\subseteq \mathcal{J}\times \mathcal{J}$ such that $\mathcal{S}$ is a consistent window decomposition of $[1\dd n]$ (i.e., no window pairs cross each other) and there exists an optimal alignment $M$ that matches $w_i$ with $w'_i$ (modulo some error ensuring constant approximation).
Constructing such a set $\mathcal{J}$ and showing it actually satisfies the above-mentioned property is one of the novelty of our algorithm and significantly differs from the window decomposition of~\cite{CDGKS18}. Next we discuss the construction and analyze some important properties of $\mathcal{J}$.

\paragraph{Window Decomposition.}
We construct the set of windows into two stages. 

\vspace{1mm}

\noindent
\textbf{Stage 1.} In stage 1 given a window size parameter $s_1$ and a distance threshold parameter $\theta$ (here, instead of directly estimating $\dyck(x)$, we distinguish between $\dyck(x)\le \theta n$ and $\dyck(x)>c \theta n$ for constant $c$), we construct a set of windows $\mathcal{J}$ where for each window $w\in \mathcal{J}$, $w\subseteq [1\dd n]$, $|w|\le 5s_1$, the last index of $w$ (denoted $e(w)$) and the length of $w$ are multiples of $\theta s_1$.
Notice though the window decomposition strategy is comparatively simple, it is nontrivial to show that any optimal alignment $M$ can be represented by matching window pairs from set $\mathcal{J}\times \mathcal{J}$. We remark that this analysis is particularly challenging and requires new combinatorial insights. Formally we show the following.

\begin{restatable}{lemma}{lemlarge}\label{lem:large}
	For every string of length $n$ such that $\theta s_1$ is a multiple of $n$, 
	the interval $[1\dd n]$ can be decomposed into a consistent set of window pairs $\mathcal{S}\sub \mathcal{J}\times \mathcal{J}$ such that $\sum_{(w,w')\in \mathcal{S}} \dyck(x[w]\circ x[w']) \le \dyck(x)+\Oh(\theta n)$.
\end{restatable}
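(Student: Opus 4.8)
The plan is to start from an optimal alignment $M$ of $x$ (i.e., a set of matched parenthesis pairs realizing $\dyck(x)$, together with the unmatched positions) and carve $[1\dd n]$ into consecutive blocks whose endpoints are multiples of $\theta s_1$, then argue that each block can be paired with a ``peer'' block under $M$ so that the paired blocks form a consistent (non-crossing) decomposition and the local Dyck cost on each pair, summed up, does not exceed $\dyck(x) + \Oh(\theta n)$. The natural first step is to round: snap each position of interest to the nearest multiple of $g := \theta s_1$; since there are $n/g$ grid points and each rounding move costs $\Oh(1)$ extra edits in the worst case, we will be able to afford $\Oh(g)$-type slack per window and hence $\Oh(\theta n)$ slack overall (this mirrors the pivot-sparsification accounting sketched earlier in the paper).

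The core combinatorial step is the window-pairing itself. I would trace the outermost ``arc'' structure of $M$: an arc matches some opening region to some closing region, and between them sits a (recursively structured) interior. Reading $x$ left to right, I would cut a new window every time the running height returns to a previously seen grid-aligned level, so that each window $w$ is naturally associated under $M$ with the window $w'$ containing the matched partners of (most of) its characters; the uncrossing property of $M$ guarantees that these associations nest, which is exactly what "consistent set of window pairs $\mathcal S\sub \mathcal J\times\mathcal J$" requires. The length bound $|w|\le 5s_1$ would be enforced by a second rule: whenever a window would grow past $\approx s_1$, force a cut at the next grid point; the constant $5$ (rather than, say, $2$) absorbs the interaction between the "height-return" cuts, the "too long" cuts, and the rounding of endpoints and lengths to multiples of $g$. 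Each forced cut that is not respected by $M$ severs at most $\Oh(1)$ matched pairs per grid crossing, and there are $\Oh(n/g)$ crossings in total, so the total number of matched pairs destroyed by all the cuts is $\Oh(n/g)\cdot\Oh(g) \cdot$? — more carefully, a cut at a grid point of height $h$ can cut at most $\Oh(s_1)$ arcs that span it, but only $\Oh(n/g)\cdot\Oh(\text{something})$ — this counting is the delicate part and I would instead bound it by observing that an arc of $M$ is destroyed only if one of its two endpoints lies within $\Oh(g)$ of a cut, and assign each destroyed arc to such an endpoint, so that each of the $n$ positions is charged $\Oh(1)$ times, giving $\Oh(n)$? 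That is too weak; the right statement is that each of the $\Oh(n/g)$ cut points destroys $\Oh(g)$ arcs by the rounding slack, so the bound is $\Oh(n/g)\cdot\Oh(g)=\Oh(n)$ — still too weak — so the real argument must use that only "short" arcs (length $\Oh(s_1)$) are locally severed while "long" arcs are instead handled by re-associating whole windows, and the short arcs severed near any cut number only $\Oh(g)$, of which we keep all but the cut; summing the genuinely lost ones gives $\Oh(\theta n)$. I would therefore state and prove a clean charging lemma: every matched pair of $M$ that is not preserved by the decomposition has at least one endpoint in the union of length-$\Oh(g)$ intervals around the $\Oh(n/g)$ grid cut points, and within each such interval at most $\Oh(g)$ pairs are lost, for $\Oh(n/g)\cdot\Oh(g)=\Oh(?)$ — I will pick the window granularity so that this telescopes to $\Oh(\theta n)$, i.e. the number of cut points is $\Oh(n/s_1)$ (one per window, not one per grid point) and each loses $\Oh(\theta s_1)$ pairs, giving exactly $\Oh(\theta n)$.

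Once the pairing is fixed, the cost accounting is routine: restricting $M$ to the pair $(w,w')$ yields a partial alignment of $x[w]\circ x[w']$ whose cost is at most the number of $M$-matched pairs with one endpoint in $w\cup w'$ that were broken, plus the $M$-unmatched positions inside $w\cup w'$; summing over the consistent decomposition $\mathcal S$, every original unmatched position is counted $\Oh(1)$ times and every broken pair is counted $\Oh(1)$ times, so the total is $\dyck(x) + \Oh(\theta n)$ by the charging lemma. The only subtlety is making sure that a matched pair of $M$ with \emph{both} endpoints inside the same window pair survives as a valid match in $x[w]\circ x[w']$ — this is where the "peer" definition must be exactly right, and it follows from non-crossing: if $(a,b)\in M$ with $a\in w$, then all of $(a\dd b)$ lies in windows nested between $w$ and $w'$, so $b\in w'$.

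\textbf{Main obstacle.} The hard part is the charging/counting in the window-pairing step: bounding the number of $M$-matched pairs destroyed by the forced cuts (both the "height-return" cuts and the "length exceeded" cuts) by $\Oh(\theta n)$ rather than a weaker $\Oh(n)$ or $\Oh(\theta n^2/s_1)$. This requires choosing the cut rule so that there are only $\Oh(n/s_1)$ cut points and arguing that each severs only $\Oh(\theta s_1)$ arcs, which in turn forces a careful definition of "window" that simultaneously respects the height structure of $M$, keeps lengths in $[\,\Omega(\theta s_1),\,5s_1]$, aligns endpoints and lengths to the $\theta s_1$ grid, and keeps the pairing non-crossing — balancing all four constraints at once is the technical crux, and I expect it to need a dedicated structural lemma about how an uncrossing matching interacts with a height-based greedy partition.
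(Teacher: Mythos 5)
Your high-level plan is the right one and matches the paper's: fix an optimal alignment $M$, decompose $[1\dd n]$ into $\Oh(n/s_1)$ window pairs that respect $M$ (each of length $\Oh(s_1)$), then snap each endpoint to the $\theta s_1$ grid at a cost of $\Oh(\theta s_1)$ per pair, for a total loss of $\Oh(n/s_1)\cdot\Oh(\theta s_1)=\Oh(\theta n)$. You also correctly identify the crux: you need a decomposition that simultaneously (a) has only $\Oh(n/s_1)$ pairs, (b) keeps every window of length $\Oh(s_1)$, (c) is consistent (non-crossing), and (d) never separates a matched pair of $M$ across two distinct window pairs. But you do not supply a construction meeting all four constraints, and you say so.

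The specific gap is the one you flag as the ``main obstacle.'' Your proposed cut rule — cut when the height returns to a grid-aligned level, plus force a cut when a window exceeds $\approx s_1$ — does not work, for two reasons. First, the height function $h$ of $x$ does not track $M$'s arc structure: $M$ incurs $\dyck(x)$ edits, so an arc of $M$ need not connect positions of equal height, and ``height returns'' need not correspond to $M$-respecting cut points at all. The paper avoids this by driving the decomposition entirely off $M$ itself (e.g., it looks at $M(j)$ for $j$ near the left boundary, and at ``pivots'' $p$ splitting the matching, not at heights of $x$). Second, and more fundamentally, a greedy left-to-right cut rule of the kind you sketch does not control $|\mathcal{S}|$ in the bad case: $M$ can match a very short segment at the left boundary of the current interval with a very short segment at the right boundary, leaving a huge middle. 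If you just ``cut and continue,'' you can produce window pairs of total length $o(s_1)$ again and again, and $|\mathcal{S}|$ can blow up past $\Oh(n/s_1)$. The paper's process $P_M$ handles exactly this case by detecting a pivot $p$ that is far ($>s_1$) from \emph{both} boundaries and branching the recursion into two subintervals $[i_1\dd p]$ and $[p+1\dd i_2]$, each of size $\Omega(s_1)$; the $\Oh(n/s_1)$ bound then comes from a recursion-tree count where every leaf produces a length-$\ge s_1$ pair and every internal node either produces such a pair or has two children. Your sketch has no analogue of this branching, and without it the charging lemma you want to ``state and prove'' has no rule to hang on.

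So: right target, right accounting once the decomposition exists, but the decomposition itself — the recursive, pivot-aware, $M$-driven case analysis that keeps $|\mathcal{S}|=\Oh(n/s_1)$ — is missing, and it is the substance of the lemma.
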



Next we provide a proof sketch of the Lemma~\ref{lem:large}. Note, as for every window in $w\in \mathcal{J}$, the length and the last index is a multiple of $\theta s_1$ (here $|w|\le 5s_1$) in order to upper bound $\sum_{(w,w')\in \mathcal{S}} \dyck(x[w]\circ x[w'])$ we need 
the size of set $\mathcal{S}$ to be $O(n/s_1)$. This suggests that for most of the window pairs, the sum of their lengths should be roughly $s_1$ (though unlike edit distance, here the windows can have varied lengths). 
We now briefly argue that such a set $\mathcal{S}$ indeed exists by providing a construction of it. Precisely we show for every optimal alignment we can construct a corresponding set $\mathcal{S}$ satisfying the above-mentioned properties.
We start by fixing an optimal alignment $M$ (Note we do not have any prior information of $M$. We consider one for the analysis purpose.). Given $M$, we construct $\mathcal{S}$ recursively where at each step we process an interval $[i_1\dd i_2]\subseteq [n]$ as follows. 
\begin{enumerate}
    \item If there exists a pair of windows $(w_1,w_2)$ around the left and the right boundary of $[i_1\dd i_2]$ (i.e., $w_1$ starts at $i_1$ and $w_2$ ends at $i_2$) such that $M$ matches $w_1$ with $w_2$ and $|w_1|+|w_2|$ is between $s_1$ and $4s_1$ we add pair $(w_1,w_2)$ to $\mathcal{S}$ and recurse on the rest of the interval $[i_1\dd i_2]\setminus (w_1\cup w_2)$. 
    
    \item Another favorable scenario is where there exists a pivot $p$ (here  $p\in [i_1\dd i_2)$ is a pivot satisfying $\dyck(x[i_1\dd i_2])=\dyck(x[i_1\dd p])+\dyck(x(p,i_2])$ and $p$ can be easily identified from $M$) roughly at distance $s_1$ either from the left or the right boundary of the interval $[i_1\dd i_2]$. In this case we can define a window pair $(w_1,w_2)$ covering the region from the boundary to the pivot and recurse on the rest. 
    
    \item In the last scenario let $(w_1,w_2)$ be the pair of windows around the left and the right boundary of $[i_1\dd i_2]$ such that $M$ matches $w_1$ with $w_2$ (notice here $|w_1|+|w_2|$ can be small) and let $p$ the leftmost pivot (note here the pivot is far i.e., at distance $>s_1$ from both left and right boundary), add $(w_1,w_2)$ to $\mathcal{S}$ and separately recurse on the intervals left and right of $p$ (excluding $w_1\cup w_2$).
\end{enumerate}

Here we can make two important observations. First, all the window pairs generated in this process are disjoint, and they provide a decomposition of $[1\dd n]$. Secondly we can ensure at least for the first two cases the sum of the length of the window pairs is around $s_1$. 
However, this is not true for the third case, and it becomes difficult to bound $|\mathcal{S}|$, as we may always land on the third case and produce window pairs of small lengths and thus  $|\mathcal{S}|$ increases. 
Fortunately we can show that this is unlikely and the third case cannot occur too often. 
This is because in this case the pivot $p$ is far (at distance roughly $>s_1$) from both the left and right boundary and therefore the current interval is always divided into two large intervals (where each interval has size at least roughly $s_1$) and thus the recursion depth should be small.
Following this, we can show that total number of windows in $\mathcal{S}$ is $O(n/s_1)$. This is important as in the next step we cap each window in $\mathcal{S}$ by shifting both the starting and the end index to the right nearest multiple of $\theta s_1$ to ensure $\mathcal{S}\subseteq \mathcal{J}\times \mathcal{J}$. Notice as this adds $\theta s_1$ extra cost per pair compared to the optimal one, using the bound on size of $\mathcal{S}$, we can ensure constant approximation. To summarize, we show for each window pair in $\mathcal{J}\times \mathcal{J}$ if we can compute an estimation of its cost then this trivially provides a cost estimation for all window pairs in $\mathcal{S}$. Thus, as $\mathcal{S}$ partitions $x$, using a dynamic program algorithm on window pairs from $\mathcal{J}\times \mathcal{J}$, we can optimize the total cost.

\vspace{1mm}

\noindent
\textbf{Stage 2.}
To compute the cost of each window pair in $\mathcal{J}\times \mathcal{J}$
we further partition each large window into smaller windows and estimate the cost of these smaller windows.
Thus, in stage 2 given a second window size parameter $s_2$, we further divide interval $[1\dd n]$ into variable sized shorter windows $\mathcal{K}$, where the length of a window is a multiple of $\theta s_2$. Following a similar argument as above we can show for each window pair $(w,w')\in \mathcal{J}\times \mathcal{J}$ the set $w\cup w'$ can be decomposed into a consistent set of window pairs $\mathcal{S}_{(w,w')}\subseteq \mathcal{K}\times \mathcal{K}$ such that $\sum_{(q,q')\in \mathcal{S}_{(w,w')}}\dyck(x[q]\circ x[q'])$ provides a constant approximation of $\dyck(x[w]\circ x[w'])$ (\cref{lem:small} is formulated analogously to \cref{lem:large}).

\paragraph{CertifyingWindowPairs.} Next
in step two and three we design algorithm CertifyingWindowPairs() that finds a cost estimation for each window pair $(w,w')\in \mathcal{J}\times \mathcal{J}$.
This algorithm has two components; CertifyingSmall() and CertifyingLarge(). 
CertifyingWindowPairs() shares a similar flavor with the Covering algorithm of~\cite{CDGKS18}. 
Here also to get a sub-quadratic time bound, instead of evaluating the cost of each window pair explicitly, we need to use triangle inequality.
In \cref{lem:triangle}, we show that Dyck edit distance satisfies the triangle inequality; but unlike for the regular edit distance, this property of Dyck edit distance is non-trivial, and we prove it using a subtle inductive argument. 
We provide the details of algorithms CertifyingSmall() and CertifyingLarge() in Section~\ref{sec:cost_est}.

\paragraph*{Folding Distance.}
The key difference between the \emph{folding distance} (originating from the RNA folding problem) compared to the Dyck edit distance is that the alphabet is no longer partitioned into the set $T$ of opening parentheses and the set $\rev{T}$ of closing parentheses. In other words, every character $c$ can be matched with its complement $\rev{c}$ regardless of their order in the text.
In particular, this means that the notions of heights and valleys are not meaningful anymore.
Nevertheless, for any fixed alignment, one can distinguish the unmatched,
opening (matched with a character to the right), and closing (matched with a character to the left) characters.
Moreover, with a reduction similar to that of~\cite{BO16}, we can make sure that there is an unmatched character between any two characters matched with each other.
Although this reduction does not seem helpful in sparsifying the set of pivots to be considered,
it does bring a strong structural property: for every cost-$d$ alignment,
the matched characters form $\Oh(d)$ disjoint windows such the alignment matches (pairs) entire windows without any errors.
The strategy behind our $\Oh(s)$-factor approximation is to sacrifice $\Oh(s)$ boundary characters out of each window pair and, in exchange, make sure that the `closing windows' (the right ones out of every pair matched windows) have both endpoints at positions divisible by $s$. 
We then use \textsc{Internal Pattern Matching} queries of~\cite{KRRW15,phd} to efficiently search for `opening windows' that could match our closing windows. Doing so, we cannot guarantee that the opening windows have their endpoints divisible by $s$, but we can sparsify the set of candidates so that they start at least $s$ positions apart. This results in $\Oh((\frac{n}{s})^3)$ window pairs to be considered and leads to the overall running time of $\Oh(n+(\frac{n}{s})^3)$ [Theorem~\ref{thm:fold}].

\section{Preliminaries}\label{sec:prelim}
The alphabet $\Sigma$ consists of two disjoint sets $T$ and $\rev{T}$ of \emph{opening} and \emph{closing} parentheses, respectively, with a bijection $\rev{\cdot} : T\to \rev{T}$ mapping each opening parenthesis to the corresponding closing parenthesis. We extend this mapping to an involution $\rev{\cdot}:T\cup \rev{T}\to T\cup\rev{T}$ and,
in general, to an involution $\rev{\cdot} : \Sigma^*\to \Sigma^*$ mapping each string 
$x[1] x[2]\cdots x[n]$ to its complement $\rev{x[n]}\cdots \rev{x[2]}\,\rev{x[1]}$. 
Given two strings $x,y$, we denote their concatenation by $xy$ or $x\circ y$.

The \emph{Dyck} language $\DYCK(\Sigma)\sub \Sigma^*$ consists of all well-parenthesized expression over $\Sigma$;
formally, it can be defined using a context-free grammar whose only non-terminal $S$ admits productions $S\rightarrow SS$, $S\rightarrow \varnothing$ (empty string), and $S\rightarrow aS\overline{a}$ for all $a\in T$. 

\begin{definition}\label{def:dyckdist}
The \emph{Dyck edit distance} $\dyck(x)$ of a string $x\in \Sigma^*$
is the minimum number of character insertions, deletions, and substitutions required to transform $x$ to a string in $\DYCK(\Sigma)$.
\end{definition}

\newcommand{\Z}{\mathbb{Z}}

We say that $M\sub \{(i,j) \sub \Z^2 : i < j\}$ is a \emph{non-crossing matching} if any two distinct pairs $(i,j),(i',j')\in M$ satisfy $i<j < i' < j'$ or $i < i' < j' < j$. Such a matching can also be interpreted as a function $M : \Z \to \Z\cup\{\bot\}$ with $M(i)=j$ if $(i,j)\in M$ or $(j,i)\in M$ for some $j\in \Z$, and $M(i)=\bot$ otherwise.

For a string $x\in \Sigma^n$, the \emph{cost} of a non-crossing matching $M\sub [n]^2$ on $x$ (henceforth $M$ is called an \emph{alignment} of~$x$) is defined as
\[\cost_M(x) =n-2|M| + \sum_{(i,j)\in M}\dyck(x[i]x[j]).\]

The following folklore fact, proved below for completeness, relates the Dyck edit distance with the optimum alignment cost.
\begin{restatable}{fact}{fctal}
For every string $x\in \Sigma^*$, the Dyck edit distance $\dyck(x)$ is the minimum cost $\cost_M(x)$ of an alignment $M$ of $x$.
\end{restatable}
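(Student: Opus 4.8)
The statement is an equivalence between two quantities, so the plan is to prove two inequalities: $\dyck(x)\le \min_M \cost_M(x)$ and $\dyck(x)\ge \min_M \cost_M(x)$.

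\emph{From an alignment to a sequence of edits.} First I would take an optimal alignment $M$ and exhibit a sequence of $\cost_M(x)$ edits transforming $x$ into a member of $\DYCK(\Sigma)$. The idea is to process the matched pairs $(i,j)\in M$ and the unmatched positions separately. Each unmatched position (there are $n-2|M|$ of them) is deleted, contributing $n-2|M|$ to the edit budget. For a matched pair $(i,j)$, we keep both characters $x[i],x[j]$ and spend $\dyck(x[i]x[j])$ edits to turn the two-character string $x[i]x[j]$ into a well-parenthesized string; since $x[i]x[j]$ has length two, $\dyck(x[i]x[j])\in\{0,1,2\}$ and the relevant transformation is either doing nothing (if already $a\bar a$), one substitution (to fix one of the two characters so the pair matches), or — I need to be careful here — when $\dyck(x[i]x[j])=2$ the cheapest option is to delete both; but deleting both would also remove them from the final string, which is fine since the empty string is in $\DYCK(\Sigma)$. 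The key structural point is that because $M$ is non-crossing, once we fix each matched pair to be of the form $a\bar a$ and delete everything unmatched, the resulting string is a concatenation/nesting of such pairs, hence derivable from $S$ via the grammar productions $S\to SS$, $S\to\varnothing$, $S\to aS\bar a$. So the total edit count is $n-2|M|+\sum_{(i,j)\in M}\dyck(x[i]x[j])=\cost_M(x)$, giving $\dyck(x)\le \min_M\cost_M(x)$.

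\emph{From an optimal edit sequence to an alignment.} For the reverse inequality, I would take an optimal edit sequence realizing $\dyck(x)$ and read off from it a non-crossing matching on $x$. An edit sequence transforming $x$ into some $y\in\DYCK(\Sigma)$ determines, for each surviving character of $x$, its image in $y$; and $y$, being well-parenthesized, has a canonical non-crossing matching of its positions. Pulling this matching back along the (injective, order-preserving) map from surviving positions of $x$ to positions of $y$ gives a non-crossing matching $M$ on $x$. Each pair $(i,j)\in M$ corresponds to a pair in $y$ of the form $a\bar a$; so $x[i]x[j]$ can be turned into $a\bar a$ using at most (number of substitutions the edit sequence applied to positions $i$ and $j$) edits, which bounds $\dyck(x[i]x[j])$ by that local substitution count. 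Every position of $x$ not in $M$ is either deleted by the edit sequence or substituted; and every inserted character of $y$ also costs one edit. Summing, one gets $\cost_M(x) = n-2|M| + \sum\dyck(x[i]x[j]) \le (\text{deletions}) + (\text{insertions}) + (\text{substitutions}) = \dyck(x)$, hence $\min_M\cost_M(x)\le \dyck(x)$.

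\emph{Main obstacle.} The routine direction is building edits from an alignment; the slightly delicate direction is the second one, where the bookkeeping that ``each edit is charged at most once'' must be done carefully — in particular making sure that a single substitution is not counted both towards fixing a matched pair and towards an unmatched position, and handling the case where the edit sequence deletes one character of what would be a matched pair (in which case we simply do not put that pair in $M$ and absorb the other character as unmatched). The cleanest way to avoid these pitfalls is to first normalize the optimal edit sequence so that it never substitutes a character that is later deleted, never inserts a character that is later deleted, and applies operations in a canonical left-to-right order; such a normalization does not increase the number of edits. With a normalized edit sequence the correspondence between edits and the three terms of $\cost_M(x)$ becomes a bijwith-inequality and the bound falls out.
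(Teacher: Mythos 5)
Your plan takes a genuinely different route from the paper's proof. The paper establishes both inequalities by induction: $\dyck(x)\le\cost_M(x)$ by induction on $|x|$ (case-splitting on whether position $1$ is unmatched, matched with $|x|$, or matched with an interior pivot $p$, and using that $M$ is non-crossing to split into independent subproblems), and the converse by induction on $2\dyck(x)+|x|$ (peeling off the first edit operation, or the first grammar production when $\dyck(x)=0$, and maintaining the alignment through the recursion). Your proposal instead builds explicit witnesses in both directions — in one direction deleting the $n-2|M|$ unmatched positions and locally repairing each matched pair, in the other pulling back the canonical matching of the target string $y\in\DYCK(\Sigma)$ along the order-preserving injection from surviving positions of $x$. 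Both are correct strategies; the direct construction is arguably more illuminating (it makes visible \emph{which} matching realizes the distance), while the induction keeps all the bookkeeping local and avoids having to argue separately that a string covered by a non-crossing matching of $a\bar a$ pairs lies in $\DYCK(\Sigma)$ (a small lemma you implicitly invoke and would need to spell out).

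One imprecision in your second direction: the sentence ``every position of $x$ not in $M$ is either deleted by the edit sequence or substituted'' is false. A surviving position $i$ that is \emph{not} substituted can still be absent from $M$, namely when its partner under the canonical matching of $y$ is an inserted character (e.g.\ $x=\texttt{(}$, $y=\texttt{()}$ obtained by one insertion: position $1$ survives unsubstituted yet $M=\emptyset$). The correct bookkeeping is: each position of $x$ outside $M$ is either deleted (charged to a deletion edit) or survives with an inserted partner in $y$ (charged to an insertion edit, and distinct such positions have distinct inserted partners, so this charging is injective). Your final inequality $\cost_M(x)\le\text{(del)}+\text{(ins)}+\text{(sub)}$ is correct, but the justification as you wrote it doesn't deliver it — you happen to have included the insertion term in the right-hand side without ever arguing why it absorbs anything. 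This is a one-line fix, but as stated there is a gap in the derivation, not merely in presentation.
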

\begin{proof}
	We first show that $\dyck(x)\le \cost_M(x)$ by induction on $|x|$.
	The claim is trivial if $|x|=0$.
	If $M(1)=\bot$, then we construct $M'=\{(i-1,j-1) : (i,j)\in M\}$ and $x'=x[2\dd |x|]$.
	By the inductive assumption, $\dyck(x)\le \dyck(x')+1\le \cost_{M'}(x') = \cost_M(x)$.
	If $M(1)=|x|$, then we construct $M'=\{(i-1,j-1) : (i,j)\in M\sm \{(1,|x|)\}$ and $x'=x[2\dd |x|)$.
	By the inductive assumption, $\dyck(x)\le \dyck(x')+\dyck(x[1]x[|x|])\le \cost_{M'}(x') + \dyck(x[1]x[|x|]) = \cost_M(x)$.
	Otherwise, we have $(1,p)\in M$ for some $p\in [2\dd |x|)$.
	In this case, we construct $M'=\{(i,j)\in M : j\le p\}$ and $x' = x[1\dd p]$,
	as well as $M'' = \{(i-p,j-p) : (i,j)\in M\text{ and }i>p\}$ and $x''=x[p+1\dd |x|]$.
	By the inductive assumption, $\dyck(x)\le \dyck(x')+\dyck(x'')\le \cost_{M'}(x') + \cost_{M''}(x'') = \cost_M(x)$;
	here, the last equality follows from the fact that $|M|=|M'|+|M''|$: any $(i,j)\in M$ with $i \le p$ and $j>p$ would violate the non-crossing property of $M$.
	
	As for the inverse inequality, we proceed by induction on $2\dyck(x)+|x|$;
	again, the claim is trivial for $|x|=0$.
	If $x\in \DYCK(\Sigma)$ and $x=ax'\bar{a}$ for $a\in T$ and $x'\in \DYCK(\Sigma)$, then the inductive assumption yields an alignment $M'$ with $\cost_{M'}(x')=0$. In this case, we set $M=\{(i+1,j) : (i,j)\in M'\}\cup\{(1,|x|)\}$, so that $\cost_M(x)=\cost_{M'}(x')=0$.
	If $x = x'x''$ for non-empty $x',x''\in \DYCK(\Sigma)$, then the inductive assumption yields alignments
	$M',M''$ with $\cost_{M'}(x')=\cost_{M''}(x'')=0$. In this case, we set $M=M'\cup\{(i+|x'|,j+|x'|) : (i,j)\in M''\}$,
	so that $\cost_M(x)=\cost_{M'}(x')+\cost_{M''}(x'')=0$.
	We henceforth assume that $\dyck(x)>0$ and consider the first operation in the sequence of operations transforming $x$ to a string in $\DYCK(\Sigma)$. The inductive hypothesis on the resulting string $x'$ yields an alignment $M'$ with $\cost_{M'}(x')=\dyck(x')=\dyck(x)-1$.
	If $x'$ is obtained from $x$ by a substitution of $x[p]$, then we set $M'=M$ noting that $\cost_M(x)\le \cost_{M'}(x')+1 = \dyck(x')+1=\dyck(x)$: the position $p$ is involved in at most one pair $(i,j)\in M'$ and this pair satisfies $\dyck(x[i]x[j])\le \dyck(x'[i]x'[j])+1$.
	If $x'$ is obtained from $x$ by deleting $x[p]$, then we set $M = \{(i,j)\in M' : j < p\} \cup \{(i,j+1) : (i,j)\in M' \text{ and } i< p \le j\}\cup\{(i+1,j+1): (i,j)\in M'\text{ and }p \le i\}$,
	noting that $\cost_M(x)=\cost_{M'}(x')+1=\dyck(x')+1=\dyck(x)$.
	Finally, if $x'$ is obtained from $x$ by inserting $x'[p]$, then $M = \{(i,j)\in M' : j < p\} \cup \{(i,j-1) : (i,j)\in M' \text{ and } i< p < j\}\cup\{(i-1,j-1): (i,j)\in M'\text{ and }p < i\}$,
	noting that $\cost_M(x)\le \cost_{M'}(x')+1=\dyck(x')+1=\dyck(x)$: if $M'$ pairs $p$ with some position,
	then the corresponding position of $x$ contributes to $|x|-2|M|$.
	\end{proof}

Next, we show that a function mapping $x,y\in \Sigma^*$ to $\dyck(x\rev{y})$ 
satisfies the triangle inequality.

\begin{lemma}\label{lem:triangle}
	All strings $x,y,z\in \Sigma^*$ satisfy $\dyck(x\rev{z}) \le \dyck(x\rev{y}y\rev{z})\le \dyck(x\rev{y})+\dyck(y\rev{z})$. 
\end{lemma}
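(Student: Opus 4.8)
The plan is to prove the two inequalities separately; the second one is easy and the first is the crux. For the right inequality $\dyck(x\rev{y}y\rev{z})\le \dyck(x\rev{y})+\dyck(y\rev{z})$, I would take optimal alignments $M_1$ of $x\rev{y}$ and $M_2$ of $y\rev{z}$ and simply place them side by side: the string $x\rev{y}y\rev{z}$ is the concatenation of $x\rev{y}$ and $y\rev{z}$, so $M_1\cup M_2$ (with indices of $M_2$ shifted by $|x\rev{y}|$) is a valid non-crossing matching of $x\rev{y}y\rev{z}$ whose cost is exactly $\cost_{M_1}(x\rev{y})+\cost_{M_2}(y\rev{z})$. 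By the alignment-cost characterization (the Fact proved above), this gives the bound.

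For the left inequality $\dyck(x\rev{z})\le \dyck(x\rev{y}y\rev{z})$, the key observation is that $\rev{y}y$ is ``locally reducible'': writing $y=y[1]\cdots y[m]$, the substring $\rev{y}y = \rev{y[m]}\cdots\rev{y[1]}\,y[1]\cdots y[m]$ has the palindrome-like structure where position $m$ and position $m+1$ are complementary, then $m-1$ and $m+2$, etc. So intuitively $\rev{y}y\in\DYCK(\Sigma)$ and deleting it should not increase the distance. The difficulty is that in an arbitrary optimal alignment $M$ of $x\rev{y}y\rev{z}$, the characters of the central block $\rev{y}y$ need not be matched to each other at all — they may be matched across into $x$ or into $\rev{z}$, or be unmatched — so we cannot just ``peel off'' $\rev{y}y$ directly. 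I would therefore prove the statement by induction on $|x|+|y|+|z|$ (or on $2\dyck(x\rev{y}y\rev{z})+|x|+|y|+|z|$, mirroring the Fact's proof), fixing an optimal alignment $M$ of $x\rev{y}y\rev{z}$ and doing a case analysis on how $M$ treats the two boundary characters $\rev{y[1]}$ (the last character of $\rev y$) and $y[1]$ (the first character of $y$) of the central block — or, if $y$ is empty, there is nothing to prove. The cases are: (i) $y[1]$ is matched by $M$ to $\rev{y[1]}$, i.e. the pair straddling the exact center; then that pair contributes $\dyck(\rev{y[1]}\,y[1])=0$, and removing those two positions gives an alignment of $x\rev{y'}y'\rev{z}$ where $y'=y[2\dd m]$, to which induction applies. (ii) $y[1]$ is unmatched (or $\rev{y[1]}$ is unmatched); delete that position, charging the $+1$ against the unmatched character, and induct with $y'=y[2\dd m]$. (iii) $y[1]$ is matched to some position strictly inside $y$ (equivalently, strictly to its right within the $y$-block), or $\rev{y[1]}$ to some position strictly inside $\rev{y}$; then because $M$ is non-crossing, the region strictly between these two positions is a self-contained sub-alignment, and one can splice it out to reduce $|y|$. (iv) the remaining case, where $\rev{y[1]}$ is matched to some position in $y\rev z$ and $y[1]$ is matched to some position in $x\rev y$; here non-crossingness forces a nested structure, and the pivot index between these two positions splits the alignment into a left part (an alignment of some $x\rev{y_1}$) and a right part (an alignment of some $y_2\rev z$) with $y=y_1$-something-$y_2$, letting us recurse on both halves and then reassemble using the right inequality already proved.

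The main obstacle, and where the ``subtle inductive argument'' the authors allude to lives, is case (iv) and making the bookkeeping of the recursion airtight: one must verify that the pieces produced after splicing are genuinely of the required form $x'\rev{y'}y'\rev{z'}$ (so that the inductive hypothesis applies and not merely some weaker statement about arbitrary concatenations), that the measure strictly decreases in every case, and that the costs add up exactly rather than only up to slack — in particular that no contribution is double-counted when a boundary pair has $\dyck(x[i]x[j])>0$. I would handle this by being careful to always induct on the stated three-argument form, using the concatenation (right) inequality to glue sub-results, and by tracking the cost via $\cost_M$ rather than via edit sequences, since the alignment viewpoint makes the splicing operations transparent. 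A useful auxiliary simplification is to first reduce to the case where $M$ matches no two positions that lie on the same side of the center of $\rev y y$ among the first few, by repeatedly applying case (i)/(ii)/(iii) until either $y$ is exhausted (done) or case (iv) applies; this isolates the single genuinely hard configuration.
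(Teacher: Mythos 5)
Your argument for the right inequality (concatenating optimal alignments of $x\rev y$ and $y\rev z$) is correct and equivalent to the paper's one-line justification that $\DYCK(\Sigma)$ is closed under concatenation.

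For the left inequality you take a genuinely different route from the paper, and the divergence is exactly where the gaps appear. The paper first reduces to $|y|=1$: peeling off the innermost complementary pair of $\rev y\, y$ one step at a time is itself an application of the $|y|=1$ case, so this is an outer induction on $|y|$. Only then does it induct on $2\dyck(x\rev y y\rev z)+|x|+|z|$, with a case analysis driven by what the alignment does to characters of $x$ and $\rev z$; once every character of $x$ and $\rev z$ is matched to one of the two central characters, $|x\rev z|\le 2$ and a short finite check closes the argument. You instead keep $y$ general and case on the two central characters $\rev{y[1]}$ and $y[1]$, and this runs into trouble. In case (iii), if $y[1]$ is matched to $y[j]$ with $j>1$, splicing out the self-contained interior leaves $x\,\rev y\,y[1]\,y[j\dd m]\,\rev z$, where the $\rev y$-block (still of length $m$) no longer mirrors the surviving $y$-block; this is not of the form $x'\rev{y'}y'\rev{z'}$, so the inductive hypothesis simply does not apply — you flag the worry but do not resolve it. A milder version of the same issue is in case (ii): you say to ``delete that position'' (singular) and then induct with $y'=y[2\dd m]$, but deleting only one of $\rev{y[1]},y[1]$ breaks the three-argument form; you must delete both, and then verify the cost does not increase (it doesn't, but that needs saying). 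Case (iv) as stated is actually impossible: a pair from $\rev{y[1]}$ at position $|x|+m$ to something on its right, together with a pair from $y[1]$ at position $|x|+m+1$ to something on its left, would cross. Presumably you meant the reversed orientations, but then $|x|+m$ need not be a pivot — outer pairs can nest around both central characters — and the sub-case where both central characters are matched to the \emph{same} side (both into $x\rev y$, or both into $y\rev z$) is not covered by any of (i)–(iv), so the case analysis is not exhaustive. The missing idea is precisely the paper's reduction to $|y|=1$, which eliminates the ``interior of $y$'' causing case (iii) and collapses the remaining hard configurations into a finite check on two-character strings. One small slip worth noting: in case (i), $\dyck(\rev{y[1]}\,y[1])=0$ only when $y[1]\in\rev T$; when $y[1]\in T$ the two-character string is a closing parenthesis followed by an opening one and the distance is $2$. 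This does not invalidate that case (removing the pair only decreases cost), but the ``$=0$'' claim is not correct as written.
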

\begin{proof}
	The second inequality follows from the fact that the Dyck language is closed under concatenations.
	As for the first inequality, we observe that it suffices to consider $|y|=1$: the case of $|y|=0$ is trivial,
	and the case of $|y|>1$ can be derived from that of $|y|=1$ by processing $y$ letter by letter.
	Now, we proceed by induction on $2\dyck(x\rev{y}y\rev{z})+|x|+|z|$.
	If any optimum alignment of $x\rev{y}y\rev{z}$ modifies a character in $x$ or $\rev{z}$,
	we apply the inductive assumption for an instance $(x',y,z')$ obtained from this modification: $\dyck(x\rev{z}) \le \dyck(x'\rev{z}')+1 \le \dyck(x'\rev{y}y\rev{z}')+1 = \dyck(x\rev{y}y\rev{z})$.
	If any optimum alignment of  $x\rev{y}y\rev{z}$ matches any two adjacent characters of $x$,
	any two adjacent characters of $z$, or the first character of $x$ with the last character of $z$,
	we apply the inductive assumption for an instance $(x',y,z')$ obtained by removing these two characters:
	$\dyck(x\rev{z}) \le \dyck(x'\rev{z}') \le \dyck(x'\rev{y}y\rev{z}') = \dyck(x\rev{y}y\rev{z})$.
	In the remaining case, all characters of $x$ and $\rev{z}$ must be matched to $\rev{y}$ or $y$,
	so $|x\rev{z}|\le 2$. 
	If $|x\rev{z}|\le \dyck(x\rev{y}y\rev{z})$, then trivially $\dyck(x\rev{z})\le |x\rev{z}|\le \dyck(x\rev{y}y\rev{z})$,
	so we may assume $\dyck(x\rev{y}y\rev{z})<|x\rev{z}|$.
	The case of $\dyck(x\rev{y}y\rev{z})=0$ and $|x\rev{z}|=1$ is impossible because only strings of even length belong to the Dyck language.
	Thus, we may assume that $|x\rev{z}|=2$ and $\dyck(x\rev{y}y\rev{z})\le 1$.
	If $|x|=2$, then the optimum matching of $x\rev{y}y\rev{z}$ must be $\{(1,4),(2,3)\}$,
	and the sequence transforming $\dyck(x\rev{y}y\rev{z})$ to a word in $\DYCK(\Sigma)$
	must include substituting $\rev{y}$ or $y$ (whichever is an opening parenthesis).
	In particular, $x[1]$ must be an opening parenthesis, so $\dyck(x\rev{z})=\dyck(x)\le 1 = \dyck(x\rev{y}y\rev{z})$.
	If $|\rev{z}|=2$, then the optimum matching of $x\rev{y}y\rev{z}$ must be $\{(1,4),(2,3)\}$,
	and the sequence transforming $\dyck(x\rev{y}y\rev{z})$ to a word in $\DYCK(\Sigma)$
	must include substituting $\rev{y}$ or $y$ (whichever is a closing parenthesis).
	In particular, $z[1]$ must be an opening parenthesis, so $\dyck(x\rev{z})=\dyck(\rev{z})\le 1 = \dyck(x\rev{y}y\rev{z})$.
	Finally, if $|x|=|\rev{z}|=1$, then the optimum matching of $x\rev{y}y\rev{z}$ must be $\{(1,2),(3,4)\}$.
	If $\dyck(x\rev{y}y\rev{z})=0$, then we must have $x=y=z\in T$, so $\dyck(x\rev{z})=0\le \dyck(x\rev{y}y\rev{z})$.
	Otherwise, $x\in T$ or $z\in T$, so $\dyck(x\rev{z})\le 1 = \dyck(x\rev{y}y\rev{z})$.
\end{proof}

In the remainder of this section, we recall several results from~\cite{BO16,otherSubmission} that we then use in our $\tOh_{\epsilon}(n^2)$-time PTAS (\cref{sec:PTAS}) and $\tOh_{\epsilon}(nd)$-time $(3+\epsilon)$-approximation (\cref{sec:3apx}).

\begin{figure}
\begin{center}
\begin{tikzpicture}[scale=0.6]

	\draw (0,0)-- node[shift={(-0.2,0.1)}]{$\texttt{(}$} (1,1) -- node[shift={(-0.2,0.1)}]{$\texttt{[}$} (2,2) -- node[shift={(0.2,0.1)}]{$\texttt{)}$} (3,1) --node[shift={(-0.2,0.1)}]{$\texttt{[}$} (4,2) --node[shift={(-0.2,0.1)}]{$\texttt{(}$} (5,3) --node[shift={(0.2,0.1)}]{$\texttt{]}$} (6,2) -- node[shift={(0.2,0.1)}]{$\texttt{]}$} (7,1) -- node[shift={(-0.2,0.1)}]{$\texttt{(}$} (8,2) -- node[shift={(0.2,0.1)}]{$\texttt{]}$} (9,1) -- node[shift={(0.2,0.1)}]{$\texttt{)}$} (10,0) -- node[shift={(0.2,0.1)}]{$\texttt{)}$} (11,-1);

	\draw[help lines, color=gray!30, dashed] (0,-1) grid (11,3);
\draw[->,thick] (0,0)--(11.25,0) node[right]{$ $};
\draw[->,thick] (0,-1)--(0,3.25) node[shift={(0.2,0.1)}]{$ $};

\draw[blue, dotted, very thick, bend right=10] (0.5,0.5) to (10.5,-0.5);
\draw[blue, dotted, very thick, bend right] (1.5,1.5) to (8.5,1.5);
\draw[blue, dotted, very thick, bend right] (3.5,1.5) to (6.5,1.5);
\draw[blue, dotted, very thick, bend right] (4.5,2.5) to (5.5,2.5);

\filldraw (3,1) circle (0.1);
\filldraw (7,1) circle (0.1);
\draw[fill=white] (2,2) circle (0.1);
\draw[fill=white] (4,2) circle (0.1);
\draw[fill=white] (6,2) circle (0.1);
\draw[fill=white] (8,2) circle (0.1);
\end{tikzpicture}
\end{center}
\caption{A plot of the height function $h$ for $x=\texttt{([)[(]](]))}$. The blue dotted lines represent an alignment $M=\{(1,11),(2,9),(4,7),(5,6)\}$ of cost $4$. The valleys $\{3,7\}$ are marked as black circles. The set $K=\{2,3,4,6,7,8\}$ of \cref{fct:dp} also includes points marked as white circles.}\label{fig:h}
\end{figure}
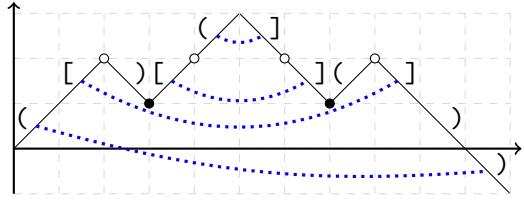

\begin{definition}[Heights]
For a fixed string $x\in \Sigma^n$, the \emph{height} function $h:[0\dd n]\to [-n\dd n]$
is defined so that $h(i)=|\{j\in [1\dd i] : x[j]\in T\}| - |\{j\in [1\dd i] : x[j]\in \rev{T}\}|$ for $i\in [0\dd n]$.
\end{definition}

\begin{fact}[\cite{BO16}]\label{fct:preprocess}
	There is a linear-time algorithm that, given a string $x\in \Sigma^n$,
	produces a string $x'\in \Sigma^{\le n}$ such that $\dyck(x)=\dyck(x')$
	and $x'$ has at most $2\dyck(x)$ \emph{valleys}, i.e., positions $v\in [1\dd n)$ such that $h(v-1)>h(v)<h(v+1)$.
\end{fact}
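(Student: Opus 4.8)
\textbf{Proof plan for \cref{fct:preprocess}.}

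The plan is to exhibit a local rewriting rule that removes valleys one at a time without changing the Dyck edit distance, and then argue that exhaustively applying it terminates quickly and leaves few valleys. Recall that a valley is a position $v$ with $h(v-1)>h(v)<h(v+1)$, which forces $x[v]\in\rev{T}$ (a closing parenthesis) and $x[v+1]\in T$ (an opening parenthesis); thus a valley is exactly a location where a closing parenthesis is immediately followed by an opening parenthesis. The key claim is that if $x=u\,\rev{a}\,b\,w$ where $\rev a$ is the closing parenthesis at a valley and $b$ is the opening parenthesis right after it, then deleting both characters, i.e. passing to $x'=u\,w$, preserves the Dyck edit distance: $\dyck(x)=\dyck(x')$. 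The inequality $\dyck(x)\le\dyck(x')+2$ is not quite what we want; instead I would argue $\dyck(x')\le\dyck(x)$ by taking an optimal alignment $M$ of $x$ and showing its restriction gives a valid alignment of $x'$ of no greater cost, and $\dyck(x)\le\dyck(x')$ by the reverse — actually the clean statement is that \emph{such a pair is redundant}: in any optimal alignment we may assume $\rev a$ and $b$ are either matched to each other or the alignment behaves nicely around them. Concretely: given an optimal alignment $M$ of $x$, I would do a small case analysis on $M(v)$ and $M(v+1)$ (each is either $\bot$, points inside the pair, or points outside) and show that one can always modify $M$ to obtain an alignment of $x'=u w$ whose cost is at most $\cost_M(x)$; conversely any alignment of $x'$ extends to an alignment of $x$ of the same cost by matching the removed pair $\rev a,b$ to each other at cost $\dyck(\rev a b)=0$ only if $\rev a, b$ form a valid match — but $\rev a$ is closing and $b$ is opening, so $\rev a b\notin\DYCK(\Sigma)$ in general; hence the correct extension leaves them unmatched, giving $\dyck(x)\le\dyck(x')+2$, which is too weak.

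The right approach, and the one I would actually pursue, is the exchange argument purely on the side of an optimal alignment of $x$: show that there is an optimal alignment $M$ of $x$ in which the valley pair $(\rev a\text{ at }v,\ b\text{ at }v+1)$ is either (i) both unmatched, or (ii) $M(v)<v$ and $M(v+1)>v+1$ with $\rev a$ matched to its left and $b$ matched to its right. In case (i), deleting the two characters saves exactly $2$ from the cost, so $\dyck(x')\le\dyck(x)-2<\dyck(x)$, contradicting that we only delete when it keeps the distance — so instead I want the statement ``$x'$ has distance $\le\dyck(x)$ and we keep only one valley per removal''. I think the cleanest route, matching what \cite{BO16} actually prove, is: (1) establish that $\dyck(x)=\dyck(x')$ where $x'$ deletes one adjacent matching pair $\texttt{()}$-type \emph{when it is part of an optimal alignment}, and separately handle valleys by noting that near a valley an optimal alignment can be assumed to match $x[v]$ to the left and $x[v+1]$ to the right; then ``cutting'' the string at the valley into a concatenation whose pieces are aligned independently does not help a valley survive. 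Rather than reconstruct their precise manipulation, I would: define the reduction as repeatedly finding a valley $v$ and replacing $x[v]x[v+1]=\rev a b$ by the empty string, prove $\dyck$ is unchanged via the alignment-cost characterization (\cref{fct:preprocess}'s companion fact already proved in the excerpt), and bound the number of iterations by $n$ since the length strictly drops each time.

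For the ``at most $2\dyck(x)$ valleys'' conclusion: once no further reduction applies, I would show directly that in the reduced string, consider an optimal alignment $M^*$; every valley $v$ of the reduced string must have $M^*(v)=\bot$ or $M^*(v+1)=\bot$ or one of $x[v],x[v+1]$ substituted — i.e. each valley "charges" at least one unit to the optimal cost, and each unit of cost is charged by at most two valleys (the valley to its left and the valley to its right in the height profile), giving $\#\text{valleys}\le 2\dyck(x)$. This charging step — proving that a surviving valley forces an edit operation nearby in \emph{every} optimal alignment, and that the charging is $2$-to-$1$ — is the main obstacle; it requires understanding exactly why the reduction rule being inapplicable means that a valley cannot be "absorbed for free" into the matching, and carefully defining the charging map so that it does not overload any single edit. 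Since this is precisely the content of \cref{fct:preprocess} as cited from \cite{BO16}, I would present the reduction and the distance-preservation in full (using the alignment-cost fact), and for the valley count invoke the height-function accounting: each valley is a strict local minimum of $h$, consecutive valleys are separated by a strict local maximum (a "peak"), and an optimal alignment must pay to "fill" each valley, with the payment localized enough that no edit is blamed for more than two valleys.
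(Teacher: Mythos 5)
The paper does not prove \cref{fct:preprocess}; it is imported verbatim from \cite{BO16}, so there is no in-paper proof to compare against. Evaluating your argument on its own, it contains a genuine error in the reduction rule itself.

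You repeatedly commit to deleting the two characters \emph{at a valley}: replacing $x=u\,\rev{a}\,b\,w$ (with $\rev a\in\rev T$ at position $v$ and $b\in T$ at position $v+1$) by $x'=uw$, and you claim $\dyck(x)=\dyck(x')$. This is false. For $x=\rev a\,b$ (a single valley) we have $\dyck(x)=2$ but $\dyck(\varepsilon)=0$. In general deleting a valley pair can strictly decrease $\dyck$ by up to $2$, since the valley pair is precisely the configuration that is \emph{expensive} rather than free. You actually observe this mid-argument (``the correct extension leaves them unmatched, giving $\dyck(x)\le\dyck(x')+2$, which is too weak'') yet your final implementation (``define the reduction as repeatedly finding a valley $v$ and replacing $x[v]x[v+1]=\rev a b$ by the empty string'') reverts to exactly this broken rule.

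The preprocessing of \cite{BO16} (also used in \cite{s:14}, and echoed in the RNA-folding preprocessing inside the proof of \cref{thm:fold}) is the opposite rule: repeatedly delete an adjacent pair $x[i]x[i+1]$ with $x[i]\in T$ and $x[i+1]=\rev{x[i]}$, i.e.\ an opening parenthesis immediately followed by its matching closer, which is a \emph{peak} of $h$ with $\dyck(x[i]x[i+1])=0$. This is the deletion that provably preserves $\dyck$: any alignment of $x'$ extends to one of $x$ by adding the free pair $(i,i+1)$, and an exchange argument on an optimal alignment $M$ of $x$ shows one may assume $(i,i+1)\in M$, so restricting $M$ to $x'$ does not increase the cost. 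You mention this correct rule in passing (``deletes one adjacent matching pair $\texttt{()}$-type'') but never develop it, and your concrete proposal uses the wrong one. Exhaustive application of the correct rule is what gives the structural invariant that at every local maximum of $h$ the adjacent characters do \emph{not} match; that invariant is what forces every surviving extremum to incur cost in any alignment, which is the engine of the $2\dyck(x)$ bound. Without it, the charging argument you sketch at the end has nothing to charge to: a string obtained by deleting valleys can still contain adjacent matched pairs, and its valleys need not be expensive. The charging step itself (each valley forces at least one edit nearby, and no edit is blamed more than twice) is a reasonable outline, but it must be carried out with the correct reduction in place and with a concrete case analysis on $M(v)$ and $M(v+1)$, including the case where both are matched at cost $0$ to opposite sides.
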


\SetKwFunction{D}{D}
For a fixed string $x\in \Sigma^n$, let us define a function $\D$ such that $\D(i,j)=\dyck(x(i\dd j])$
for $i,j\in [0\dd n]$ with $i\le j$.
Note that $\D(i,i)=0$ for $i\in [0\dd n]$, $\D(i,i+1)=1$ for $i\in [0\dd n)$,
and $\D(i,j)$ satisfies the following recursion for $i,j\in [0\dd n]$ with $j-i\ge 2$:
\begin{equation}\label{eq:basic_recursion}\D(i,j) = \min \begin{cases}
	\D(i,k) + \D(k,j) \quad \text{for } k\in (i\dd j),\\
	\D(i+1,j-1)+\dyck(x[i+1]x[j]).
\end{cases}\end{equation}
This yields the classic $\Oh(n^3)$-time algorithm computing $\D(0,n)=\dyck(x)$.
The following result, combined with \cref{fct:preprocess}, improves this time complexity to $\Oh(n + n^2 \dyck(x))$.

\begin{restatable}[{\cite[Lemma 2.1]{otherSubmission}}]{fact}{fctdp}\label{fct:dp}
	For a string $x\in \Sigma^n$, let $K\sub [0\dd n]$ consist of all positions at distance 0 or 1 from valleys. 
	For all $i,j\in [0\dd n]$ with $j-i\ge 2$, we have
	\begin{equation}\label{eq:faster_recursion}
		\D(i,j)= \min\begin{cases}\D(i,k) + \D(k,j) \quad \text{for } k\in (i\dd j)\cap (K\cup \{i+1,i+2,j-1,j-2\}),\\
		\D(i+1,j-1)+\dyck(x[i+1]x[j]).\end{cases}\end{equation}
\end{restatable}

\begin{observation}[{\cite[Fact 3.1]{otherSubmission}}]\label{obs:hd}
	For all strings $x\in \Sigma^n$ and integers $0\le i \le k \le j \le n$,
	we have $h(k) \ge \max(h(i),h(j))-2\D(i,j)$. In particular, $|h(i)-h(j)|\le 2\D(i,j)$.
\end{observation}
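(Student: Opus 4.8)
The plan is to pass to the substring $x' := x(i\dd j]$, whose height function satisfies $h_{x'}(l)=h(i+l)-h(i)$ for $l\in[0\dd j-i]$ (and $h_{x'}(0)=0$). Setting $l=k-i$, $m=j-i$, and recalling $\D(i,j)=\dyck(x')$, the desired bound $h(k)\ge\max(h(i),h(j))-2\D(i,j)$ is equivalent to the two inequalities $h_{x'}(l)\ge -2\dyck(x')$ (which reads $h(k)\ge h(i)-2\D(i,j)$) and $h_{x'}(l)\ge h_{x'}(m)-2\dyck(x')$ (which reads $h(k)\ge h(j)-2\D(i,j)$). Once both are established, the ``in particular'' clause follows by specializing to $k=i$ and $k=j$.

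For the first inequality I would fix an optimal alignment $M$ of $x'$ (which exists, with $\cost_M(x')=\dyck(x')$, by the Fact relating $\dyck$ to minimum alignment cost) and write $h_{x'}(l)=\sum_{p=1}^{l}\sigma(x'[p])$, where $\sigma(c)=+1$ if $c$ is an opening parenthesis and $\sigma(c)=-1$ otherwise. I would then split $[1\dd l]$ into: the positions forming a pair of $M$ with both endpoints in $[1\dd l]$; the smaller endpoints of pairs of $M$ that cross $l$ (the partner lying in $(l\dd m]$, so by non-crossing such a position is automatically the smaller endpoint); and the positions left unmatched by $M$. The elementary facts needed, all checked by inspecting $\dyck$ on two-letter strings (notably $\dyck(\texttt{)(})=2$), are $\sigma(x'[a])+\sigma(x'[b])\ge -2\dyck(x'[a]x'[b])$ for every $(a,b)\in M$, $\sigma(x'[a])\ge -2\dyck(x'[a]x'[b])$ for the smaller endpoint $a$ of a crossing pair (if $\sigma(x'[a])=-1$ then necessarily $\dyck(x'[a]x'[b])\ge 1$), and $\sigma(c)\ge -1$ always. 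Summing over the three groups, and noting that the pairs inside $[1\dd l]$ and the crossing pairs are disjoint subsets of $M$, gives $h_{x'}(l)\ge -2\sum_{(a,b)\in M}\dyck(x'[a]x'[b])-\#\{\text{unmatched positions in }[1\dd l]\}$, which (using $0\le\#\{\text{unmatched in }[1\dd l]\}\le\#\{\text{unmatched positions of }x'\}$) is at least $-2\bigl(\sum_{(a,b)\in M}\dyck(x'[a]x'[b])+\#\{\text{unmatched positions of }x'\}\bigr)=-2\cost_M(x')=-2\dyck(x')$. The second inequality follows by the same argument on the suffix $(l\dd m]$ with all inequalities reversed, or, more quickly, by applying the first inequality to $\rev{x'}$ using $\dyck(\rev{x'})=\dyck(x')$ and $h_{\rev{x'}}(m-l)=h_{x'}(l)-h_{x'}(m)$.

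Since the estimates are purely local, the computation is routine; the one point that needs attention is why the unmatched positions do not spoil the constant: an unmatched character contributes $-1$ to $h_{x'}(l)$, and the missing factor of $2$ is recovered only because $\cost_M(x')$ also charges $1$ for every unmatched position (and their count is nonnegative), while a crossing pair whose smaller endpoint is a closing parenthesis likewise costs $-1$ in height but is paid for by its contribution $\dyck(x'[a]x'[b])\ge 1$ to the alignment cost. An alignment-free alternative is a direct induction on $j-i$ via the recursion~\eqref{eq:basic_recursion}: in the split case $\D(i,j)=\D(i,k')+\D(k',j)$ one invokes the inductive hypothesis on whichever of $[i\dd k']$, $[k'\dd j]$ contains $k$ (and on the pivot $k'$ in the other), and in the fold case $\D(i,j)=\D(i+1,j-1)+\dyck(x[i+1]x[j])$ one combines the inductive hypothesis on $[i+1\dd j-1]$ with $|h(i+1)-h(i)|=|h(j)-h(j-1)|=1$, distinguishing $\dyck(x[i+1]x[j])=0$ (a clean match, which raises $h(i+1)$ and $h(j-1)$) from $\dyck(x[i+1]x[j])\ge 1$.
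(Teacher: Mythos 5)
The paper does not prove \cref{obs:hd}; it imports it verbatim from~\cite{otherSubmission} (Fact~3.1 there), so there is no in-paper argument to compare against. Your proof is correct and self-contained. The reduction to the substring $x(i\dd j]$, the split of $[1\dd l]$ into pairs contained in the prefix, crossing pairs (keeping only the left endpoint), and unmatched positions, and the per-item estimates $\sigma(x'[a])+\sigma(x'[b])\ge -2\dyck(x'[a]x'[b])$, $\sigma(x'[a])\ge -2\dyck(x'[a]x'[b])$, and $\sigma\ge -1$ all check out; the tight case is a matched pair $\texttt{))}$, which contributes $-2$ to the height and $1$ to the cost. Observing that the in-prefix pairs and the crossing pairs are disjoint subsets of $M$, and that unmatched positions in $[1\dd l]$ form a subset of all unmatched positions, correctly yields $h_{x'}(l)\ge -2\cost_M(x')=-2\dyck(x')$, and the companion bound $h_{x'}(l)\ge h_{x'}(m)-2\dyck(x')$ follows by the reversal trick $h_{\rev{x'}}(m-l)=h_{x'}(l)-h_{x'}(m)$ together with $\dyck(\rev{x'})=\dyck(x')$.

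Your alternative induction on $j-i$ via the recursion~\eqref{eq:basic_recursion} also works: in the split case one applies the hypothesis to the side containing $k$ and uses $h(k')\ge h(\cdot)-2\D(\cdot,k')$ on the other side; in the fold case, the subcase $\dyck(x[i+1]x[j])=0$ has $h(i+1)=h(i)+1$ and $h(j-1)=h(j)+1$ so the bound only improves, while $\dyck(x[i+1]x[j])\ge1$ absorbs the possible $-1$ drop from $h(i)$ to $h(i+1)$ (and from $h(j)$ to $h(j-1)$). Both routes are valid; the alignment-based version is arguably cleaner in that it isolates the local two-character inequalities, while the DP-based one avoids invoking the alignment/cost equivalence (\cref{def:dyckdist} and the surrounding Fact) and stays entirely within the recursion already used by the algorithm.
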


\section{Quadratic-Time PTAS}\label{sec:PTAS}
\SetKwFunction{AD}{AD}

\begin{algorithm}[b!]
\SetKwBlock{Begin}{}{end}
$\AD(i,j)$\Begin{
	\lIf{$j=i$}{\Return{$0$}}
	\lIf{$j=i+1$}{\Return{$1$}}
	$c := \AD(i+1,j-1)+\dyck(x[i+1]x[j])$\;
	$\tau_{i,j} := \tau\cdot {2^{\min(\nu(i),\nu(j))}}$\;
	$K_{i,j} :=$ the set of $\tau_{i,j}$ smallest and $\tau_{i,j}$ largest elements of $K\cap (i\dd j)$\;
	\ForEach{$k\in K_{i,j}\cup(\{i+1,i+2,j-2,j-1\}\sm \{i,j\})$}{
		$c := \min(c, \AD(i,k)+\AD(k,j))$\;
	}
	\Return{$c$}\;
	}
\caption{Recursive implementation of $\protect\AD(i,j)$}\label{alg:Dyck-Approx}
\end{algorithm}

In this section, we develop an $\tOh(\epsilon^{-1}n^2)$-time algorithm that approximates $\dyck(x)$ within a $(1+\epsilon)$ factor. The starting point of our solution is the dynamic program derived from \cref{fct:dp}.
Instead of computing the exact value $\D(i,j)=\dyck(x(i\dd j])$, that depends on $\D(i,k)+\D(k,j)$ for all pivots $k\in  (i\dd j)\cap (K\cup \{i+1,i+2,j-1,j-2\})$, we compute an approximation $\AD(i,j)\approx \dyck(x(i\dd j])$ in \cref{alg:Dyck-Approx} that depends only on $\D(i,k)+\D(k,j)$ for pivots  $k\in  (i\dd j)\cap (K_{i,j}\cup \{i+1,i+2,j-1,j-2\})$, where $K_{i,j}$ consists of $\tau_{i,j}$ leftmost and rightmost elements of $K\cap (i\dd j)$. Here, $\tau_{i,j}$ is proportional to the largest power of two dividing both $i$ and $j$.
Formally, we set $\tau_{i,j} := \tau\cdot {2^{\min(\nu(i),\nu(j))}}$, where $\tau$ is a parameter to be set later
and $\nu : \Z \to \Z_{\ge 0}\cup\{\infty\}$ is a function
that maps an integer $r\in \Z$ to $\nu(r):= \max\{k\in \Z : 2^k \text{ divides } r\}$, with the convention that $\nu(0)=\infty$.

In the following lemma, we inductively bound the quality of $\AD(i,j)$ as an additive approximation  of $\D(i,j)$.
In particular, we show that $\D(i,j) \le \AD(i,j)\le \D(i,j) + \frac{8}{\tau} |K|\log |K|$.
\begin{lemma}\label{lem:AD}
	If $\tau \ge 2$, then, for each $i,j\in [0\dd n]$ with $i\le j$, we have $\D(i,j)\le \AD(i,j)\le \D(i,j) + \frac{8}{\tau} c_{i,j}\log c_{i,j}$,
	where $c_{i,j} :=|K\cap (i\dd j)|$, and we assume $0\log 0 = 0$.
\end{lemma}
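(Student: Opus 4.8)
The plan is to prove the two inequalities separately, by induction on $j-i$, following the recursive structure of \cref{alg:Dyck-Approx}. Throughout, write $c_{i,j} = |K\cap(i\dd j)|$, and recall that $\AD(i,j)$ is computed by taking the minimum, over a restricted set of pivots $k$ (namely $K_{i,j}$ together with the $O(1)$ boundary pivots $\{i+1,i+2,j-2,j-1\}$), of $\AD(i,k)+\AD(k,j)$, as well as the non-splitting option $\AD(i+1,j-1)+\dyck(x[i+1]x[j])$.

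The lower bound $\D(i,j)\le \AD(i,j)$ is the easy direction: by induction, every term $\AD(i,k)+\AD(k,j)$ that $\AD(i,j)$ minimizes over is at least $\D(i,k)+\D(k,j)\ge \D(i,j)$ (using \cref{eq:basic_recursion}, since $k\in(i\dd j)$), and similarly the non-splitting term is at least $\D(i+1,j-1)+\dyck(x[i+1]x[j])\ge \D(i,j)$. Base cases $j=i$ and $j=i+1$ are immediate.

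For the upper bound, I would fix an optimal pivot $k^\star\in (i\dd j)\cap(K\cup\{i+1,i+2,j-1,j-2\})$ realizing the recursion of \cref{fct:dp} (or handle the non-splitting case, which is trivial by induction since $\AD(i+1,j-1)\le \D(i+1,j-1)+\frac{8}{\tau}c_{i+1,j-1}\log c_{i+1,j-1}$ and $c_{i+1,j-1}\le c_{i,j}$). If $k^\star$ is one of the boundary pivots, it is always available to $\AD(i,j)$, and again we finish by induction, noting $c_{i,k^\star},c_{k^\star,j}\le c_{i,j}$ — but we must be slightly careful since $c_{i,k^\star}+c_{k^\star,j}$ could equal $c_{i,j}$ without strict decrease; this is fine because $x\log x$ is superadditive, so $\frac{8}{\tau}(c_{i,k^\star}\log c_{i,k^\star}+c_{k^\star,j}\log c_{k^\star,j})\le \frac{8}{\tau}c_{i,j}\log c_{i,j}$. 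The interesting case is when $k^\star\in K\cap(i\dd j)$ but $k^\star\notin K_{i,j}$, i.e. $k^\star$ is neither among the $\tau_{i,j}$ leftmost nor the $\tau_{i,j}$ rightmost elements of $K\cap(i\dd j)$. Then I would pick a substitute pivot $k\in K_{i,j}$ — say the $\tau_{i,j}$-th leftmost element of $K\cap(i\dd j)$ if $k^\star$ lies in the left half, so that $|K\cap(i\dd k)| = \tau_{i,j}$ while $k<k^\star$ — and bound $\AD(i,j)\le \AD(i,k)+\AD(k,j)$. The key geometric point is that replacing $k^\star$ by $k$ costs at most $O(|K\cap(k\dd k^\star)|) \le O(|K\cap(i\dd k^\star)|)$ extra Dyck operations: indeed $\D(i,j)\le \D(i,k)+\D(k,j)$ and one can bound $\D(i,k)+\D(k,j) - (\D(i,k^\star)+\D(k^\star,j))$ in terms of how many $K$-positions lie strictly between $k$ and $k^\star$, using a telescoping/monotonicity argument on $\D$ together with the fact that moving a pivot past a single non-valley position changes the split cost by $O(1)$ and the preprocessed string has its "interesting" structure concentrated at valleys.

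The main obstacle — and the step I would spend the most care on — is exactly this last bookkeeping: setting up the right inductive quantity so that the "displacement cost" $O(|K\cap(k\dd k^\star)|)$ telescopes correctly against the $\frac{8}{\tau}c\log c$ budget across the whole recursion tree. The intuition is that $\AD$ restricts each split to pivots among the $\tau_{i,j}$ extreme $K$-elements on each side, and $\tau_{i,j}=\tau\cdot 2^{\min(\nu(i),\nu(j))}$ is chosen so that at "coarse" levels of the recursion (where $i,j$ are divisible by large powers of two) we are allowed more pivots; a potential/charging argument over dyadic levels then shows the total error accumulated is $\frac{8}{\tau}c_{i,j}\log c_{i,j}$. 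Concretely, I would prove that when we substitute $k$ for $k^\star$, the extra cost is at most $\frac{8}{\tau_{i,j}}\cdot|K\cap(i\dd k^\star)|$ or so, observe that $|K\cap(i\dd k^\star)|\le c_{i,j}$ and that one of the two children then has $K$-count at most $c_{i,j}/2^{\,?}$ smaller in a way that pays for the $\log$, and combine with the inductive bounds on $\AD(i,k)$ and $\AD(k,j)$. Verifying that the constants work out to exactly $8/\tau$ (as opposed to some larger constant) is the routine-but-delicate part; the conceptual content is the pivot-substitution estimate plus superadditivity of $x\log x$.
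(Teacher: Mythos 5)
Your lower bound, base cases, and high-level induction are all fine, and you correctly identify that the hard case is an optimal pivot $k^\star\in K\cap(i\dd j)$ that falls outside $K_{i,j}$. But your mechanism for handling that case is genuinely different from the paper's, and I don't think it works. You propose to keep the range $(i\dd j)$ and \emph{substitute} the pivot, replacing $k^\star$ by some $k\in K_{i,j}$, and you claim that the resulting displacement cost $\D(i,k)+\D(k,j)-\D(i,j)$ is $O\bigl(|K\cap(k\dd k^\star)|\bigr)$. No such bound is established in the paper, and it is not clear it holds: the only generic bound on pivot displacement is in terms of the number of \emph{positions} moved, $\Oh(|k-k^\star|)$, since $|\D(i,k)-\D(i,k^\star)|\le |k-k^\star|$ and likewise for the right side. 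Two consecutive elements of $K$ can be $\Theta(n)$ positions apart (the preprocessing guarantees few valleys, not that valleys are close together), so nothing prevents the displacement cost from being much larger than the number of intervening $K$-positions. Moreover, even granting such a bound, if $k^\star$ sits deep in the middle of $K\cap(i\dd j)$ then the nearest element of $K_{i,j}$ can be $\Theta(c_{i,j})$ $K$-positions away, which is far too large to be charged against the $\frac{8}{\tau}c_{i,j}\log c_{i,j}$ budget.

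What the paper actually does is keep the optimal pivot $k$ and \emph{shift the endpoints} instead. Let $r=\min(c_{i,k},c_{k,j})$ and take $i',j'$ to be the innermost multiples of $2^{\lceil\log((r+1)/\tau)\rceil}$ in $[i\dd j]$. Then $\tau_{i',j'}\ge r+1\ge\min(c_{i',k},c_{k,j'})+1$, which forces $k\in K_{i',j'}$, and the endpoint shift costs only $(i'-i)+(j-j')\le 4r/\tau$ (paid once as $\AD(i,j)\le (i'-i)+\AD(i',j')+(j-j')$, using the always-available boundary pivots, and again to relate $\D(i',k),\D(k,j')$ back to $\D(i,k),\D(k,j)$). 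This extra $\Oh(r/\tau)$ is then absorbed by the observation that $2\min(c_{i,k},c_{k,j})\le c_{i,j}$, which is exactly where the telescoping you were looking for happens — but via the balance of the two $K$-counts, not via intervening $K$-positions. Your intuition that $\tau_{i,j}=\tau\cdot 2^{\min(\nu(i),\nu(j))}$ gives more pivots at ``coarse'' dyadic levels is exactly right; the missing idea is that the recursion exploits this by rounding $i,j$ to the nearest suitably coarse multiples, rather than by moving the pivot.
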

\begin{proof}
	We proceed by induction on $j-i$. 	For $j-i\le 1$, we have $\AD(i,j)=\D(i,j)$.
	For $j-i \ge 2$, the lower bound $\D(i,j)\le \AD(i,j)$ follows directly from \cref{fct:dp}.
	Unless $\D(i,j)=\D(i,k)+\D(k,j)$ for some $k\in (i\dd j)\cap K$,
	the upper bound also follows by \cref{fct:dp}.
	Let $r=\min(c_{i,k},c_{k,j})$ and let $i',j'$ be the smallest and the largest multiples of $2^{\lceil\log ((r+1)/\tau)\rceil}$ within $[i\dd j]$. We claim that $k\in K_{i',j'}$.

	Note that $\tau(i'-i) < \tau 2^{\lceil\log ((r+1)/\tau)\rceil} < 2(r+1)$, so $\tau(i'-i)\le 2r$;
	symmetrically,  $\tau(j-j')\le 2r$.
	Due to $\tau\ge 2$, we thus have $i'-i\le r \le c_{i,k}< k-i$ and $j-j'\le r \le c_{j,k} < j-k$,
	so $k\in (i'\dd j')$. Moreover, $\tau_{i',j'}\ge \tau\cdot 2^{\lceil\log ((r+1)/\tau)\rceil}
	\ge r+1 \ge \min(c_{i',k},c_{k,j'})+1$, so $k\in K_{i',j'}$ holds as claimed.
	Thus, \begin{align*}
		\AD(i,j)&\le (i'-i)+ \AD(i',j')+(j-j')\\
		 & \le  \tfrac{2r}{\tau}+\AD(i',k)+\AD(k,j') + \tfrac{2r}{\tau}\\
		 &\le \D(i',k) + \tfrac{8}{\tau} c_{i',k}\log c_{i',k} + \D(k,j') + \tfrac{8}{\tau} c_{k,j'}\log c_{k,j'} +\tfrac{4r}{\tau} \\
		 &\le (i'-i)+\D(i,k) + \tfrac{8}{\tau} c_{i,k}\log c_{i,k} + \D(k,j)+ (j-j')+ \tfrac{8}{\tau} c_{k,j}\log c_{k,j} +\tfrac{4r}{\tau}	\\
		 &\le  \D(i,j) +\tfrac{8}{\tau} (c_{i,k}\log c_{i,k} +  c_{k,j}\log c_{k,j}+r)\\
		 &=  \D(i,j) +\tfrac{8}{\tau} (\max(c_{i,k},c_{k,j})\log \max(c_{i,k},c_{k,j}) +  \min(c_{i,k},c_{k,j})\log (2\min(c_{i,k},c_{k,j})))\\
		 &\le \D(i,j) + \tfrac{8}{\tau} (\max(c_{i,k},c_{k,j})\log c_{i,j} + \min(c_{i,k},c_{k,j}) \log c_{i,j}) \\
		 &\le  \D(i,j) + \tfrac{8}{\tau} c_{i,j}\log c_{i,j}.\qedhere\end{align*}
\end{proof}

Our final solution simply uses \cref{alg:Dyck-Approx} with an appropriate choice of the parameter $\tau$
and the input string preprocessed using \cref{fct:preprocess} so that $|K|=\Oh(\dyck(x))$.
\begin{theorem}\label{thm:approxdyck}
	There is an algorithm Dyck-Approx that, given a string $x\in \Sigma^n$ and a parameter $\epsilon\in (0,1)$,
	in $\tOh(\epsilon^{-1}n^2)$ time computes a value $v$ such that $\dyck(x)\le v \le (1+\epsilon)\dyck(x)$.
\end{theorem}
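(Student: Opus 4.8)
The plan is to instantiate \cref{alg:Dyck-Approx} with a suitable value of $\tau$ after a linear-time preprocessing. First I would run the algorithm of \cref{fct:preprocess} on $x$ to obtain a string $x'$ with $\dyck(x')=\dyck(x)$ and at most $2\dyck(x)$ valleys; consequently the set $K\subseteq[0\dd|x'|]$ of positions at distance $0$ or $1$ from a valley satisfies $|K|\le 6\dyck(x)=\Oh(\dyck(x))$. From now on we work with $x'$ and write $d=\dyck(x)=\dyck(x')$; if $d=0$ the algorithm detects this in linear time (the whole string is already in $\DYCK(\Sigma)$ after preprocessing, or one invocation of $\AD$ on a trivially small instance suffices), so assume $d\ge 1$.

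Next I would invoke \cref{lem:AD} with $i=0$, $j=|x'|$, $c_{0,|x'|}=|K\cap(0\dd|x'|)|\le|K|=\Oh(d)$, setting $\tau := \lceil c\,\epsilon^{-1}\log n\rceil$ for a suitable absolute constant $c$ (and noting $\tau\ge 2$ for $\epsilon\in(0,1)$ and $n$ large enough; small $n$ is handled by brute force). \cref{lem:AD} then gives
\[
\dyck(x) = \D(0,|x'|) \le \AD(0,|x'|) \le \D(0,|x'|) + \tfrac{8}{\tau}|K|\log|K| \le \dyck(x) + \tfrac{8}{\tau}\,\Oh(d\log d).
\]
Since $\tau = \Theta(\epsilon^{-1}\log n)$ and $d\le n$, the additive error $\tfrac{8}{\tau}\Oh(d\log d)$ is at most $\epsilon d = \epsilon\dyck(x)$ for an appropriate choice of $c$; hence the returned value $v:=\AD(0,|x'|)$ satisfies $\dyck(x)\le v\le (1+\epsilon)\dyck(x)$, as required. (If one wants a clean statement for $d=0$, note the additive error is then $0$ by the $0\log 0=0$ convention, so $v=0$ is exact.)

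It remains to bound the running time. The recursion computes values $\AD(i,j)$ only for pairs $(i,j)$ reachable from $(0,|x'|)$; I would memoize these $\Oh(n^2)$ values in a table. For each table entry, the work is dominated by enumerating $K_{i,j}$, which has $|K_{i,j}|\le 2\tau_{i,j}\le 2\tau=\Oh(\epsilon^{-1}\log n)$ elements, plus the $\Oh(1)$ boundary pivots; extracting the $\tau_{i,j}$ smallest and largest elements of $K\cap(i\dd j)$ can be supported in $\tOh(\tau)$ time per query by preprocessing $K$ into an order-statistics structure (or, more simply, by precomputing for each endpoint the relevant prefix/suffix lists), and computing $\dyck(x[i+1]x[j])$ is $\Oh(1)$. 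Thus each of the $\Oh(n^2)$ entries costs $\tOh(\epsilon^{-1})$ time, for a total of $\tOh(\epsilon^{-1}n^2)$, plus the $\Oh(n)$ preprocessing.

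The main obstacle I anticipate is the bookkeeping around the rounding used inside \cref{lem:AD}: the proof of that lemma already handles the crucial step of showing that for any pivot $k$ realizing the optimum split there is a ``rounded'' pair $(i',j')$ with $k\in K_{i',j'}$ and $|i-i'|+|j-j'|=\Oh(r/\tau)$, so for \cref{thm:approxdyck} itself the remaining issues are merely (a) confirming that only $\Oh(n^2)$ distinct arguments arise and that the memoized recursion terminates (immediate, since $j-i$ strictly decreases in every recursive call), and (b) checking that the per-entry data-structure queries for the $\tau_{i,j}$ extreme elements of $K\cap(i\dd j)$ run within the claimed $\tOh(\epsilon^{-1})$ budget. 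Neither is deep, but (b) is the one place where a careless implementation could cost an extra $\log$ or linear factor, so I would state explicitly which auxiliary structure over $K$ is used.
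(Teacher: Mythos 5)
Your argument for the approximation guarantee is essentially the paper's (preprocess via \cref{fct:preprocess}, instantiate \cref{lem:AD} with $\tau=\Theta(\epsilon^{-1}\log n)$, and observe $|K|=\Oh(\dyck(x))$ so the additive error $\frac{8}{\tau}|K|\log|K|$ is at most $\epsilon\dyck(x)$); that part is fine, with the inessential difference that the paper uses $\log|K|$ in place of $\log n$ inside $\tau$.

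Your running-time argument has a genuine error, though: the inequality $\tau_{i,j}\le\tau$ is false. By definition $\tau_{i,j}=\tau\cdot 2^{\min(\nu(i),\nu(j))}$, which is \emph{at least} $\tau$ and can be far larger; e.g.\ with $\nu(0)=\infty$ and $j$ a power of two, $\tau_{0,j}=\tau j$. So the uniform per-entry cost $\tOh(\epsilon^{-1})$ does not hold, and ``$\Oh(n^2)$ entries $\times$ $\tOh(\tau)$ each'' is not a valid way to reach $\tOh(\epsilon^{-1}n^2)$. What rescues the bound is that the entries with large $\tau_{i,j}$ are rare: fixing $j$ and using $\tau_{i,j}\le\tau 2^{\nu(j)}$, the total cost is at most $n^2+\sum_{i}\sum_{j}\tau 2^{\nu(j)}\le n^2+n\tau\sum_{j=2}^n 2^{\nu(j)}$, and $\sum_{j\le n}2^{\nu(j)}=\sum_{\nu=0}^{\lfloor\log n\rfloor}\lfloor n/2^\nu\rfloor 2^\nu=\Oh(n\log n)$, giving $\Oh(n^2\tau\log n)=\tOh(\epsilon^{-1}n^2)$. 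That amortized summation (rather than a per-entry bound) is exactly the step the paper makes and your write-up is missing; you should replace the claim $\tau_{i,j}\le\tau$ with this computation.
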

\begin{proof}
In the preprocessing, we use \cref{fct:preprocess} in order to guarantee that there are at most $2\dyck(x)$ valleys and thus $|K|\le 6\dyck(x)$. 
Next, we call $\AD(0,n)$ with $\tau = \lceil 48 \epsilon^{-1} \log |K|\rceil$
and an array of size $(n+1)\times (n+1)$ memorizing the outputs of recursive calls.
The resulting value satisfies $\dyck(x)\le \AD(0,n) \le \dyck(x)+\frac{8}{\tau} |K|\log |K| \le (1+\epsilon)\dyck(x)$ by \cref{lem:AD}.
The running time is proportional to \begin{multline*}
	n^2 \sum_{i=0}^{n}\sum_{j=i+2}^n \tau_{i,j} \le n^2 + \sum_{i=0}^{n}\sum_{j=i+2}^n {\tau 2^{\nu(j)}}
	\le n^2 + n\tau \sum_{j=2}^n 2^{\nu(j)}\le n^2 + n\tau \sum_{\nu = 0}^{\lfloor{\log n}\rfloor}\left\lfloor{\frac{n}{2^\nu}}\right\rfloor 2^\nu =
	\\ = \Oh(n^2 \tau \log n)=\tOh(\epsilon^{-1} n^2).\qedhere\end{multline*}
\end{proof}

\section{Constant-Factor Approximation for Small Distances}\label{sec:3apx}
\SetKwFunction{GD}{GD}
\SetKwFunction{BD}{AGD}

In this section, we speed up the algorithm of \cref{sec:PTAS}
at the cost of increasing the approximation ratio from $1+\epsilon$ to $3+\epsilon$.
The key idea behind our solution is to re-use the DP of \cref{fct:dp} and \cref{alg:Dyck-Approx}
with an extra constraint that the transition from $(i,j)$ to $(i+1,j-1)$ (which corresponds to adding $(i+1,j)$ to the alignment $M$, i.e., matching $x[i+1]$ with $x[j]$) is forbidden if there is a deep valley within $(i\dd j)$.
This condition is expressed in terms of the following function:
\begin{definition}
For a fixed string $x\in \Sigma^n$ and $i,j\in [0\dd n]$ with $i\le j$, let $h(i,j)=\min_{k=i}^j h(k)$.
\end{definition}

Namely, we require that $h(i+1,j-1) > h(i,j)$ holds for all $(i+1,j)\in M$. For example, in the alignment $M$ of \cref{fig:h}, $(2,9)\in M$ violates this condition due to $h(1,9)=1 = h(2,8)$, whereas the remaining pairs satisfy this condition.
Formally, we transform the recursion of \cref{fct:dp} into the following one,
specified through a function  $\GD:[0\dd n]^2\to [0\dd n]$ such that 
 $\GD(i,i)=0$ for $i\in [0\dd n]$, $\GD(i,i+1)=1$ for $i\in [0\dd n)$,
and, for all $i,j\in [0\dd n]$ with $j-i\ge 2$:
\[\GD(i,j) = \min \begin{cases}
	\GD(i,k) + \GD(k,j) \quad \text{for } k\in (i\dd j)\cap (K\cup \{i+1,i+2,j-2,j-1\}),\\
	\GD(i+1,j-1)+\dyck(x[i+1]x[j]) \quad \text{if }h(i+1,j-1)>h(i,j).
\end{cases}\]

Somewhat surprisingly, this significant limitation on the allowed alignments $M$ incurs no more that a factor-$3$ loss in optimum alignment cost. Specifically, if we take an arbitrary alignment $M$ of $x$
and remove all pairs $(i+1,j)$ with $h(i+1,j-1) = h(i,j)$, the resulting alignment $M'$ satisfies
$\cost_{M'}(x)\le 3\cost_M(x)$. This can be proved by induction on the structure of $M$ 
using a potential function  $h(i)+h(j)-2h(i,j)$ as a ``budget'' for future deletions of matched  pairs.
Nevertheless, the proof of the following lemma operates directly on $\D$ and $\GD$.
\begin{restatable}{lemma}{lemgood}\label{lem:good}
Let $x\in \Sigma^n$. For all $i,j\in [0\dd n]$ with $i\le j$, we have 
$\D(i,j)\le \GD(i,j) \le 3 \D(i,j)-h(i)-h(j)+2h(i,j)$.
\end{restatable}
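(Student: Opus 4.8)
The statement has two inequalities. The lower bound $\D(i,j)\le \GD(i,j)$ is immediate: every transition available in the recursion for $\GD$ is also available in the recursion for $\D$ (the $\GD$-recursion is obtained from the one in \cref{fct:dp} by \emph{dropping} some options), so a straightforward induction on $j-i$ gives $\GD(i,j)\ge \D(i,j)$. The substance is the upper bound $\GD(i,j)\le 3\D(i,j)-h(i)-h(j)+2h(i,j)$, which I would prove by induction on $j-i$, mirroring the recursion that defines $\D$ in \eqref{eq:basic_recursion}: the optimum for $\D(i,j)$ is attained either by a \emph{split} $\D(i,k)+\D(k,j)$ with $k\in(i\dd j)$, or by a \emph{match} $\D(i+1,j-1)+\dyck(x[i+1]x[j])$.

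\textbf{The split case.} If $\D(i,j)=\D(i,k)+\D(k,j)$, apply the inductive hypothesis to $(i,k)$ and $(k,j)$ and add:
\[
\GD(i,j)\le \GD(i,k)+\GD(k,j)\le 3\D(i,j)-h(i)-2h(k)-h(j)+2h(i,k)+2h(k,j).
\]
Now I need $-2h(k)+2h(i,k)+2h(k,j)\le 2h(i,j)$. Since $h(i,j)=\min(h(i,k),h(k,j))$ and both $h(i,k),h(k,j)\le h(k)$, we have $h(i,k)+h(k,j)\le h(k)+\min(h(i,k),h(k,j))=h(k)+h(i,j)$, which rearranges to exactly what is needed. (One must be a bit careful restricting the pivot $k$ to the set $K\cup\{i+1,i+2,j-2,j-1\}$ allowed in the $\GD$-recursion, but \cref{fct:dp} guarantees the optimum split for $\D$ can be taken with such a $k$, so this causes no loss.)

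\textbf{The match case.} Here $\D(i,j)=\D(i+1,j-1)+\dyck(x[i+1]x[j])$. There are two subcases. If $h(i+1,j-1)>h(i,j)$, the match transition is legal for $\GD$ too, and the inductive hypothesis on $(i+1,j-1)$ plus $\GD(i+1,j-1)\le\GD(i,j)$'s defining recursion gives
\[
\GD(i,j)\le 3\D(i+1,j-1)+3\dyck(x[i+1]x[j])-h(i+1)-h(j-1)+2h(i+1,j-1);
\]
since $\dyck(x[i+1]x[j])\le 1$ and $|h(i+1)-h(i)|,|h(j-1)-h(j)|\le 1$ and $h(i+1,j-1)\ge h(i,j)$, I can absorb the discrepancies into the extra $3\dyck(x[i+1]x[j])-\dyck(x[i+1]x[j])=2\dyck(x[i+1]x[j])$ slack—actually the cleanest bookkeeping is: $h(i+1)\ge h(i)-1$, $h(j-1)\ge h(j)-1$, and $h(i+1,j-1)$ is either $h(i,j)$ or larger; when $\dyck(x[i+1]x[j])=0$ the pair $x[i+1]x[j]$ is a matched pair so $h(i+1)=h(i)+1=h(i,j)+\cdots$—one checks the constant $3$ and the $-h(i)-h(j)+2h(i,j)$ term survive. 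The harder subcase is $h(i+1,j-1)=h(i,j)$ (a ``deep valley'' inside $(i\dd j)$ sits at the boundary level), so the match is forbidden for $\GD$. Then some index $k\in(i\dd j)$ realizes $h(k)=h(i,j)$, and I would instead bound $\GD(i,j)\le\GD(i,k)+\GD(k,j)$ via that $k$ (it lies at/near a valley, so it is an admissible pivot after the \cref{fct:preprocess} normalization, or one moves to a nearby admissible index losing $O(1)$ which the potential absorbs). Applying the inductive hypothesis as in the split case and using $h(k)=h(i,j)$:
\[
\GD(i,j)\le 3\D(i,k)+3\D(k,j)-h(i)-h(j)-2h(k)+2h(i,k)+2h(k,j).
\]
Since $h(i,k),h(k,j)\le h(k)=h(i,j)$, the last four $h$-terms sum to at most $-2h(i,j)+\cdots$; more precisely $-2h(k)+2h(i,k)+2h(k,j)\le 2h(i,j)$ again, and $\D(i,k)+\D(k,j)\le\D(i,j)$ because $k$ is a valid pivot for $\D$. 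This yields $\GD(i,j)\le 3\D(i,j)-h(i)-h(j)+2h(i,j)$, closing the induction.

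\textbf{Main obstacle.} The delicate point is the forbidden-match subcase: I must exhibit a usable pivot $k$ with $h(k)=h(i,j)=h(i+1,j-1)$ that is actually in the allowed pivot set $(i\dd j)\cap(K\cup\{i+1,i+2,j-2,j-1\})$, and verify that replacing a match step by a split through this deep valley is exactly what the potential $-h(i)-h(j)+2h(i,j)$ is designed to pay for. The bookkeeping with the $\pm1$ height fluctuations in the legal-match subcase (where $\dyck(x[i+1]x[j])\in\{0,1\}$) is also where the factor $3$ is pinned down and must be tracked carefully; I expect the intended argument uses \cref{obs:hd} to control $h(k)$ relative to $h(i),h(j)$ and a short case analysis on whether $x[i+1]x[j]$ is a matched pair, a single edit, or worse. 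I would structure the write-up as: lower bound (one line), then the induction with the three cases (split / legal match / illegal match), each a few lines of the $h$-inequality manipulation above.
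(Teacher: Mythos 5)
Your overall structure matches the paper's (induction on $j-i$ with cases split / legal match / illegal match), and your split case and legal-match bookkeeping are essentially the paper's argument (the paper pins the legal-match case down with the inequality $2\dyck(x[i+1]x[j]) \ge 2 + h(i) - h(i+1) + h(j) - h(j-1)$, which is exactly the $\pm 1$ case analysis you gesture at).

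However, your illegal-match subcase has a genuine gap. You write ``$\D(i,k)+\D(k,j)\le\D(i,j)$ because $k$ is a valid pivot for $\D$.'' This is backwards: a valid pivot means $\D(i,k)+\D(k,j)$ is one of the options in the minimum defining $\D(i,j)$, which gives $\D(i,j)\le\D(i,k)+\D(k,j)$, the opposite inequality. The statement you actually need --- that an optimal Dyck alignment never has to cross a position $k$ where the height equals the global minimum $h(i,j)$ --- is a nontrivial combinatorial claim which the paper never invokes and which does not follow from the definition of $\D$. (Note that the paper's Fact~\ref{fct:gdp} establishes the analogous decomposition $\GD(i,j)=\GD(i,k^*)+\GD(k^*,j)$ only for $\GD$, and its inductive proof crucially exploits that $\GD$ \emph{forbids} the match transition in this situation; the argument does not transfer to $\D$.)

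The paper avoids this entirely. In the illegal-match case it never routes through a deep-valley pivot $k$. Instead it uses the two boundary pivots $i+1$ and $j-1$ (which are always in the admissible set $\{i+1,i+2,j-2,j-1\}$) to bound
\[
\GD(i,j)\le\GD(i,i+1)+\GD(i+1,j-1)+\GD(j-1,j)=\GD(i+1,j-1)+2,
\]
then applies the inductive hypothesis to $(i+1,j-1)$ together with $h(i+1,j-1)=h(i,j)$ and the identity $3\D(i+1,j-1)=3\D(i,j)-3\dyck(x[i+1]x[j])$; the $+2$ penalty and the height corrections $h(i)-h(i+1)+h(j)-h(j-1)$ are paid for by $3\dyck(x[i+1]x[j])\ge 2\dyck(x[i+1]x[j])\ge 2+h(i)-h(i+1)+h(j)-h(j-1)$. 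This is the step you are missing: you do not need any relation between $\D(i,k)+\D(k,j)$ and $\D(i,j)$ if you split off only the two unit intervals at the ends and let the match cost $\dyck(x[i+1]x[j])$ (multiplied by $3$) provide the slack.
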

\begin{proof}
	We proceed by induction on $j-i$. The lower bound holds trivially.
	As for the upper bound, we consider several cases:
   \begin{itemize}
	   \item $j=i$. In this case, $\GD(i,j)= 0 = 3\cdot 0 - h(i)-h(i)+2h(i) = \D(i,j)-h(i)-h(j)+2h(i,j)$.
	   \item $j=i+1$. In this case, $\GD(i,j) = 1 < 2 =3\cdot 1 - h(i)-h(i+1)+2\min(h(i),h(i+1))=\D(i,j)-h(i)-h(j)+2h(i,j)$.
	   \item $\D(i,j)=\D(i,k)+\D(k,j)$ for some $k\in (i\dd j)\cap (K\cup \{i+1,i+2,j-2,j-1\})$.
	   In this case, we have \begin{align*}\GD(i,j) 
		&\le \GD(i,k)+\GD(k,j) \\
	   & \le 3\D(i,k)-h(i)-h(k)+2h(i,k)+3\D(k,j)-h(k)-h(j)+2h(k,j)\\
	   &= 3\D(i,j)-h(i)-h(j)-2h(k)+2\min(h(i,k),h(k,j))+2\max(h(i,k),h(k,j)) \\
	   &\le3\D(i,j)-h(i)-h(j)-2h(k)+2h(i,j) + 2h(k)  \\
	   & = 3\D(i,j)-h(i)-h(j)+2h(i,j).
	\end{align*}
	   \item $\D(i,j)=\D(i+1,j-1)+\dyck(x[i+1]x[j])$ and $h(i+1,j-1)=h(i,j)+1$.
	   In this case, we have \begin{align*}\GD(i,j)
	   &\le \GD(i+1,j-1) + \dyck(x[i+1]x[j])\\
	   &\le 3\D(i+1,j-1) - h(i+1)-h(j-1) + 2h(i+1,j-1)+ \dyck(x[i+1]x[j])\\
	   &= 3\D(i,j)- h(i+1)-h(j-1)+2h(i,j) - 2\dyck(x[i+1]x[j]) + 2\\
	   &\le 3\D(i,j) - h(i)-h(j)+2h(i,j)\end{align*}
	   because $2\dyck(x[i+1]x[j]) \ge 2+h(i)-h(i+1)+h(j)-h(j-1)$.
	   \item $\D(i,j)=\D(i+1,j-1)+\dyck(x[i+1]x[j])$ and $h(i+1,j-1)=h(i,j)$.
	   In this case, we have \begin{align*}\GD(i,j) 
	   &\le \GD(i,i+1)+\GD(i+1,j-1)+\GD(j-1,j) \\
	   & = \GD(i+1,j-1)+2\\
	   &\le 3\D(i+1,j-1) - h(i+1)-h(j-1) + 2h(i+1,j-1)+ 2\\
	   &= 3\D(i,j)- h(i+1)-h(j+1)+2h(i,j) - 3\dyck(x[i+1]x[j])+ 2\\
	   &\le 3\D(i,j) - h(i)-h(j)+2h(i,j)\end{align*} because $3\dyck(x[i+1]x[j]) \ge 2\dyck(x[i+1]x[j]) \ge 2+h(i)-h(i+1)+h(j)-h(j-1)$.\qedhere
   \end{itemize}
\end{proof}

Next, we derive a property of $\GD$ that allows for a speedup compared to $\D$.
Recall that $\GD$ forbids the transition from $(i,j)$ to $(i+1,j-1)$ if $h(i+1,j-1) = h(i,j)$.
We further show that, in this case, 
it suffices to consider one specific pivot while computing $\GD(i,j)$ (specifically, the pivot of minimum height, with ties resolved arbitrarily; see Fact~\ref{fct:gdp}). Later on (in the proof of \cref{thm:3approxdyck}), we argue that, after the preprocessing of \cref{fct:preprocess}, there are only $\Oh(nd)$ pairs $(i,j)$ for which $h(i+1,j-1) > h(i,j)$ yet $\D(i,j)\le d$.

\begin{fact}\label{fct:gdp}
Let $x\in \Sigma^n$ and let $i,j\in [0\dd n]$ with $j-i\ge 2$ and $h(i+1,j-1) = h(i,j)$.
Then, every $k^*\in (i\dd j)$ with $h(k^*)=h(i,j)$ satisfies $\GD(i,j)=\GD(i,k^*)+\GD(k^*,j)$.
\end{fact}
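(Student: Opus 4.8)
\textbf{Proof plan for Fact~\ref{fct:gdp}.}
The plan is to exploit the fact that $\GD$ already satisfies $\GD(i,j) \le \GD(i,k)+\GD(k,j)$ for every admissible pivot $k$, so one direction ($\GD(i,j)\le \GD(i,k^*)+\GD(k^*,j)$) is immediate \emph{provided} $k^*$ is an admissible pivot; the work is to show the reverse inequality $\GD(i,j)\ge \GD(i,k^*)+\GD(k^*,j)$ whenever $h(i+1,j-1)=h(i,j)$ and $h(k^*)=h(i,j)$. I would proceed by induction on $j-i$ (the same induction measure used for \cref{lem:good} and throughout the preliminaries), unwinding the definition of $\GD(i,j)$ according to which branch of the recursion achieves the minimum.

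First I would handle the base cases $j-i\le 1$, which are vacuous since $h(i+1,j-1)=h(i,j)$ forces $j-i\ge 2$, and note that $k^*\in(i\dd j)$ exists because $h$ restricted to $[i\dd j]$ attains its minimum somewhere strictly inside (the endpoints cannot both be strict minima when $h(i+1,j-1)=h(i,j)$; more carefully, since $h(i+1,j-1)=h(i,j)$, the minimum of $h$ over $[i\dd j]$ is attained at some index in $[i+1\dd j-1]\subseteq(i\dd j)$). The crucial structural observation is that any such $k^*$ lies in the pivot set considered by the recursion: an index realizing the minimum height inside $(i\dd j)$ is either adjacent to a valley (hence in $K$) or lies within $\{i+1,i+2,j-2,j-1\}$ — because if $k^*$ is a strict interior minimum of $h$ it is a valley, and if the minimum is achieved on a plateau, one can slide $k^*$ to an endpoint of the plateau, which is a valley, without changing $h(k^*)$; the only remaining case is that $k^*$ is forced to the boundary region. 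I would make this ``$k^*$ is admissible'' argument precise as a small separate claim, since it is what makes the trivial direction go through.

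For the nontrivial direction, I would look at which branch attains $\GD(i,j)$. If $\GD(i,j)=\GD(i,k)+\GD(k,j)$ for some admissible pivot $k$: if $h(k)=h(i,j)$, then $k$ is itself a valid choice of $k^*$, and I would like to conclude by some exchange argument that $\GD(i,k)+\GD(k,j)=\GD(i,k^*)+\GD(k^*,j)$ — this is where one needs that all minimum-height pivots are interchangeable, which I expect to be the main obstacle and will require its own sub-induction or a direct decomposition argument splitting $(i\dd j)$ at \emph{both} $k$ and $k^*$ and using associativity of the ``$+$'' recursion together with $h(k)=h(k^*)=h(i,j)$ applied on the sub-intervals. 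If instead $h(k)>h(i,j)$, then $h(k^*)=h(i,j)<h(k)$ forces $k^*$ to lie in either $(i\dd k)$ or $(k\dd j)$ with the global minimum still attained there; I would recurse into the appropriate subinterval (e.g.\ if $k^*\in(i\dd k)$ then $h(i+1,k-1)=h(i,k)=h(i,j)$, so the inductive hypothesis applies to $(i,k)$ giving $\GD(i,k)=\GD(i,k^*)+\GD(k^*,k)$) and then reassemble using the super-additivity-type inequality $\GD(k^*,k)+\GD(k,j)\ge \GD(k^*,j)$. The other branch — $\GD(i,j)=\GD(i+1,j-1)+\dyck(x[i+1]x[j])$ with $h(i+1,j-1)>h(i,j)$ — is excluded by the hypothesis $h(i+1,j-1)=h(i,j)$, so there is nothing to check there. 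Assembling these cases yields $\GD(i,j)=\GD(i,k^*)+\GD(k^*,j)$, completing the induction.
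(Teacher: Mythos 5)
Your high-level plan — induct on $j-i$, observe that the matching branch is excluded by $h(i+1,j-1)=h(i,j)$, apply the inductive hypothesis on a subinterval, and reassemble with a pivot inequality — is exactly the paper's route, and you also correctly identify $\GD(k^*,k)+\GD(k,j)\ge \GD(k^*,j)$ as the glue. But the case distinction you impose, $h(k)=h(i,j)$ versus $h(k)>h(i,j)$, is not the useful one; the proof should split on $k=k^*$ versus $k\ne k^*$. When $k=k^*$ the claim is immediate. When $k\ne k^*$, say $k^*\in(i\dd k)$, the argument you sketch for the $h(k)>h(i,j)$ case applies verbatim regardless of the value of $h(k)$: from $i<k^*<k$ we get $k-i\ge 2$, and $k^*\in[i+1\dd k-1]$ with $h(k^*)=h(i,j)$ gives $h(i,j)\le h(i,k)\le h(i+1,k-1)\le h(k^*)=h(i,j)$, so the inductive hypothesis applies to $(i\dd k)$ and yields $\GD(i,k)=\GD(i,k^*)+\GD(k^*,k)$. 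Then $\GD(i,j)=\GD(i,k^*)+\GD(k^*,k)+\GD(k,j)\ge\GD(i,k^*)+\GD(k^*,j)\ge\GD(i,j)$, where the first inequality uses that $k$ is an admissible pivot for $(k^*\dd j)$ and the second that $k^*\in K$. So the ``interchangeability of minimum-height pivots'' you flag as the main obstacle is not a separate obstacle at all — it is handled by this same short argument, and no sub-induction or exchange lemma is needed. You do mention ``a direct decomposition at both $k$ and $k^*$'' as one possible fix; that \emph{is} the proof, but you present it tentatively and do not notice that it covers both of your cases uniformly and renders the split on $h(k)$ moot.

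One smaller point: your admissibility discussion is more complicated than it needs to be, because $h$ has no plateaus. Since $h$ changes by exactly $\pm 1$ at each step, any $k^*\in(i\dd j)$ with $h(k^*)=h(i,j)$ automatically satisfies $h(k^*-1)=h(k^*)+1>h(k^*)<h(k^*+1)=h(k^*)+1$, so $k^*$ is a valley and lies in $K$ directly; the ``slide along a plateau to a valley'' scenario you describe never arises.
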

\begin{proof}
We proceed by induction on $j-i$.
Fix $k\in (i\dd j)\cap (K\cup \{i+1,i+2,j-2,j-1\})$ such that $\GD(i,j)=\GD(i,k)+\GD(k,j)$.
If $k=k^*$, then the claim is trivial.
Thus, by symmetry, we assume without loss of generality that $k^*\in (i\dd k)$.
In particular, this means that $k-i\ge 2$ and $h(i+1,k-1)=h(k^*) = h(i,k)$,
Consequently, by the inductive assumption, $\GD(i,j)= \GD(i,k)+\GD(k,j)=\GD(i,k^*)+\GD(k^*,k)+\GD(k,j)\ge \GD(i,k^*)+\GD(k^*,j) \ge \GD(i,j)$, i.e., $\GD(i,j)=\GD(i,k^*)+\GD(k^*,j)$ holds as claimed.
Here, the first inequality holds because $k\in (k^*\dd j)\cap (K\cup \{k^*+1,k^*+2,j-2,j-1\})$,
whereas the second one is due to $k^*\in K$ (because $k^*$ is a valley).
\end{proof}

Our approximation algorithm (implemented as \cref{alg:3Dyck-approx}) computes $\BD$ that approximates $\GD$
in the same way  $\AD$ approximates $\D$ in \cref{alg:Dyck-Approx}. The only difference is that we use \cref{obs:hd,fct:gdp} (and the definition of $\GD$) to prune some states and transitions.
For each of the remaining states,
If $h(i,j)<h(i+1,j-1)$, then \cref{alg:3Dyck-approx} relies on \cref{fct:gdp} and considers the smallest index $k\in (i\dd j)$ with $h(k)=h(i,j)$ as the sole potential pivot,
i.e, it returns $\BD(i,k)+\BD(k,j)$. If $h(i,j)=h(i+1,j-1)$, then \cref{alg:3Dyck-approx} mimics \cref{alg:Dyck-Approx}.

\begin{algorithm}[ht]
\SetKwBlock{Begin}{}{end}
	$\BD(i,j)$\Begin{
		\lIf{$j=i$}{\Return{$0$}}
		\lIf{$j=i+1$}{\Return{$1$}}
		\lIf{$h(i,j)<\max(h(i),h(j))-2d$}{\Return{$\infty$}}\label{ln:easydef}
		\If{$h(i,j) = h(i+1,j-1)$}{
			Select the smallest $k\in (i\dd j)$ such that $h(k)=h(i,j)$\;
			\Return{$\BD(i,k)+\BD(k,j)$}\;\label{ln:easy}
		}
		$c := \BD(i+1,j-1)+\dyck(x[i+1]x[j])$\;\label{ln:hard}
		$\tau_{i,j} := \tau\cdot 2^{\min(\nu(i),\nu(j)}$\;
		$K_{i,j} :=$ the set of $\tau_{i,j}$ smallest and $\tau_{i,j}$ largest elements of $K\cap (i\dd j)$\;
		\ForEach{$k\in K_{i,j}\cup(\{i+1,i+2,j-2,j-1\}\sm \{i,j\})$}{
			$c := \min(c, \BD(i,k)+\BD(k,j))$\;\label{ln:call}
		}
		\Return{$c$}\;
	}
	\caption{Recursive implementation of $\protect\BD(i,j)$}\label{alg:3Dyck-approx}
	\end{algorithm}
	
	The analysis of the approximation ratio of \cref{alg:3Dyck-approx} mimics that of \cref{alg:Dyck-Approx}.

	\begin{lemma}\label{lem:bd}
		If $\tau \ge 2$, then, for each $i,j\in [0\dd n]$ with $i\le j$, we have $\GD(i,j)\le \BD(i,j)$ and, if $\GD(i,j)\le d$, we further have $\BD(i,j)\le \GD(i,j) + \frac{8}{\tau} c_{i,j}\log c_{i,j}$, where $c_{i,j} :=|K\cap (i\dd j)|$, and we assume $0\log 0 = 0$.
	\end{lemma}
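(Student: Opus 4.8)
The plan is to mirror the inductive proof of \cref{lem:AD} while threading in the extra case analysis that \cref{alg:3Dyck-approx} introduces via \cref{fct:gdp} and the height-pruning of line~\ref{ln:easydef}. I would proceed by induction on $j-i$. The lower bound $\GD(i,j)\le \BD(i,j)$ should follow directly from the definition of $\GD$: every value that \cref{alg:3Dyck-approx} returns corresponds to a valid term in the recursion defining $\GD$ (the pivot split on line~\ref{ln:easy} is justified by \cref{fct:gdp}, the transition on line~\ref{ln:hard} is allowed precisely when $h(i+1,j-1)>h(i,j)$, and the $\infty$ on line~\ref{ln:easydef} is only returned when, by \cref{obs:hd}, $\D(i,j)>d$ and hence $\GD(i,j)\ge\D(i,j)>d$ too, so it never underestimates a value we care about). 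The lower bound therefore requires only unwinding definitions and a short appeal to the two facts.

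For the upper bound, I would assume $\GD(i,j)\le d$ and split on which case of the $\GD$-recursion attains the minimum. The base cases $j-i\le 1$ give $\BD(i,j)=\GD(i,j)$ exactly. When $h(i,j)=h(i+1,j-1)$, \cref{alg:3Dyck-approx} takes line~\ref{ln:easy}, returning $\BD(i,k)+\BD(k,j)$ for the leftmost minimum-height $k$; by \cref{fct:gdp} this $k$ is an optimal pivot for $\GD$, so $\GD(i,j)=\GD(i,k)+\GD(k,j)$, and since each of $\GD(i,k),\GD(k,j)\le \GD(i,j)\le d$, the inductive hypothesis applies to both pieces and gives $\BD(i,j)\le \GD(i,j)+\frac{8}{\tau}(c_{i,k}\log c_{i,k}+c_{k,j}\log c_{k,j})\le \GD(i,j)+\frac{8}{\tau}c_{i,j}\log c_{i,j}$ using $c_{i,k}+c_{k,j}\le c_{i,j}$ and concavity. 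When $h(i,j)<h(i+1,j-1)$ and the $\GD$-minimum is attained either by the diagonal transition or by a pivot in $\{i+1,i+2,j-2,j-1\}$, the corresponding term is among those \cref{alg:3Dyck-approx} examines on lines~\ref{ln:hard}--\ref{ln:call}, so the bound is immediate from the inductive hypothesis. The one genuinely substantive case — identical in spirit to \cref{lem:AD} — is when the $\GD$-minimum is attained by a pivot $k\in(i\dd j)\cap K$ that need not lie in $K_{i,j}$; here I would reproduce verbatim the argument of \cref{lem:AD}: set $r=\min(c_{i,k},c_{k,j})$, let $i',j'$ be the smallest/largest multiples of $2^{\lceil\log((r+1)/\tau)\rceil}$ in $[i\dd j]$, show $k\in K_{i',j'}$ so that $\BD(i',j')\le\BD(i',k)+\BD(k,j')$, and then telescope $\BD(i,j)\le(i'-i)+\BD(i',j')+(j-j')$ through the inductive hypothesis, absorbing the $O(r/\tau)$ boundary terms into the $\frac{8}{\tau}c_{i,j}\log c_{i,j}$ slack exactly as before. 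One subtlety to check is that the intermediate states $(i',j')$, $(i',k)$, $(k,j')$ all have $\GD$-value at most $d$ (so the inductive hypothesis's conditional bound is available): this follows because $\GD(i',k)\le \GD(i,k)\le\GD(i,j)\le d$ and similarly for the others, using that shrinking the interval cannot increase $\GD$ — a monotonicity fact one should state explicitly, or alternatively route through $\D$ via \cref{lem:good} and \cref{obs:hd}.

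The main obstacle is bookkeeping rather than a conceptual leap: one must verify that the height-based pruning on line~\ref{ln:easydef} never interferes when $\GD(i,j)\le d$, i.e., that for every intermediate state reached in the recursion whose true $\GD$-value is at most $d$ we indeed have $h(i,j)\ge\max(h(i),h(j))-2d$, so \cref{alg:3Dyck-approx} does not prematurely return $\infty$. This is exactly \cref{obs:hd} applied with $\D(i,j)\le\GD(i,j)\le d$, but it must be invoked at each recursive descent, and one has to be slightly careful that the sub-instances $(i',j')$ created in the $K$-pivot case also satisfy the precondition $\GD(i',j')\le d$ before the inductive hypothesis's quantitative bound can be used. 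Once these monotonicity and pruning-consistency points are pinned down, the rest of the proof is a line-for-line transcription of \cref{lem:AD} with $\D$ replaced by $\GD$, and I would present it that way, only writing out the $K$-pivot case in full and abbreviating the others by reference to \cref{lem:AD}.
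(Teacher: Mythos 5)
Your proposal follows the paper's own proof of this lemma line by line: induction on $j-i$; the lower bound from the definitions; and the upper bound by cases on line~\ref{ln:easydef} (vacuous under $\GD(i,j)\le d$ via \cref{obs:hd,lem:good}), on line~\ref{ln:easy} (via \cref{fct:gdp}), and finally the $K$-pivot case, which is the \cref{lem:AD} computation with $\D$ replaced by $\GD$. So the structure is right, and you correctly identify the one case that needs work.

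The flaw is in the monotonicity claim you invoke to verify the sub-instances' precondition. You assert that ``shrinking the interval cannot increase $\GD$'' and conclude $\GD(i',k)\le\GD(i,k)\le d$. This is false. Take $x=\texttt{()}$: the match transition from $(0,2)$ to $(1,1)$ is allowed since $h(1,1)=1>0=h(0,2)$, so $\GD(0,2)=0$, while $\GD(1,2)=1>\GD(0,2)$. The true statement is only the additive one, $\GD(i+1,j)\le\GD(i,j)+1$ (prove it by induction: if $\GD(i,j)$ is realised by a pivot, reuse the pivot or the pivot $i+1$; if it is realised by the match transition, use the pivot $j-1$), which gives $\GD(i',k)\le\GD(i,k)+(i'-i)$, not $\le d$. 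The same issue affects the pruning check: one gets $\D(i',k)\le d+(i'-i)$, so \cref{obs:hd} does not by itself guarantee line~\ref{ln:easydef} stays inactive for the sub-instances. You were right that this precondition deserves an argument — the paper's proof applies the inductive hypothesis to $\BD(i',k)$ and $\BD(k,j')$ without spelling it out — but the patch you give does not hold, and the alternative you gesture at (``route through $\D$ via \cref{lem:good}'') is not worked out and also does not yield $\GD(i',k)\le d$, since \cref{lem:good} loses a factor $3$. Closing this requires a different mechanism, e.g., absorbing the drift $i'-i\le 2r/\tau$ into the threshold used in the pruning test, or strengthening the inductive statement so that the conditional bound holds for a threshold slightly above $d$.
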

	\begin{proof}
		As for the upper bound, we proceed by induction on $j-i$. For $j-i\le 1$, we have $\BD(i,j)=\GD(i,j)$.
		For $j-i\ge 2$, the lower bound $\GD(i,j)\le \BD(i,j)$ follows directly form the definitions of $\BD$ and $\GD$.
		If $h(i,j)<\max(h(i),h(j))-2d$, the upper bound follows from \cref{obs:hd,lem:good}.
		If $h(i,j)=h(i+1,j-1)$, the upper bound follows form \cref{fct:gdp}.
		Otherwise, the upper bound follows directly from the definitions of $\BD$ and $\GD$ unless $\GD(i,j)=\GD(i,k)+\GD(k,j)$ for some $k\in (i\dd j)\cap K$.
		Let $r=\min(c_{i,k},c_{k,j})$ and let $i',j'$ be the smallest and the largest multiple of $2^{\lceil\log ((r+1)/\tau)\rceil}$ within $[i\dd j]$. (We have $k'\in (i'\dd j')$ due to $r+1 \ge \frac{2(r+1)}{\tau} \ge 2^{\lceil\log ((r+1)/\tau)\rceil}$.)
		Note that $\tau(i'-i) < \tau 2^{\lceil\log ((r+1)/\tau)\rceil} < 2(r+1)$, so $\tau(i'-i)\le 2r$;
		symmetrically,  $\tau(j-j')\le 2r$.
		Moreover, $k\in K_{i',j'}$. Thus, 
		
		\begin{align*}
		\BD(i,j)&\le (i'-i)+ \BD(i',j')+(j-j')\\
		 & \le  \tfrac{2r}{\tau}+\BD(i',k)+\BD(k,j') + \tfrac{2r}{\tau}\\
		 &\le \GD(i',k) + \tfrac{8}{\tau} c_{i',k}\log c_{i',k} + \GD(k,j') + \tfrac{8}{\tau} c_{k,j'}\log c_{k,j'} +\tfrac{4r}{\tau} \\
		 &\le (i'-i)+\GD(i,k) + \tfrac{8}{\tau} c_{i,k}\log c_{i,k} + \GD(k,j)+ (j-j')+ \tfrac{8}{\tau} c_{k,j}\log c_{k,j} +\tfrac{4r}{\tau}	\\
		 &\le  \GD(i,j) +\tfrac{8}{\tau} (c_{i,k}\log c_{i,k} +  c_{k,j}\log c_{k,j}+r)\\
		 &=  \GD(i,j) +\tfrac{8}{\tau} (\max(c_{i,k},c_{k,j})\log \max(c_{i,k},c_{k,j}) +  \min(c_{i,k},c_{k,j})\log (2\min(c_{i,k},c_{k,j})))\\
		 &\le \GD(i,j) + \tfrac{8}{\tau} (\max(c_{i,k},c_{k,j})\log c_{i,j} + \min(c_{i,k},c_{k,j}) \log c_{i,j}) \\
		 &\le  \GD(i,j) + \tfrac{8}{\tau} c_{i,j}\log c_{i,j}.\qedhere\end{align*}
		
	\end{proof}

On the other hand, the complexity analysis is more complex than in \cref{sec:PTAS}: it involves a charging argument bounding the number of states processed using the insight of \cref{fct:gdp}.
\begin{proposition}\label{prp:3approxdyck}
	There is an algorithm that, given a string $x\in \Sigma^n$, a threshold $d\in [1\dd n]$, and a parameter $\epsilon\in (0,1)$, in $\tOh(\epsilon^{-1}n d)$ time reports $\GD(0,n)>d$ or outputs a value $v$ such that $\GD(0,n)\le v \le (1+\epsilon)\GD(0,n)$.
\end{proposition}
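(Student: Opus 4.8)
The plan is to run \cref{alg:3Dyck-approx} with the input preprocessed by \cref{fct:preprocess} and with $\tau := \lceil 48\epsilon^{-1}\log|K|\rceil$, using memoization over an $(n+1)\times(n+1)$ table. Correctness follows immediately from \cref{lem:bd}: we have $\GD(0,n)\le \BD(0,n)$, and if $\GD(0,n)\le d$ then $\BD(0,n)\le \GD(0,n)+\frac{8}{\tau}|K|\log|K|\le (1+\epsilon)\GD(0,n)$ since $|K|\le 6\dyck(x)\le 6\GD(0,n)$ after preprocessing (using $\dyck(x)=\GD(0,n)$ when $\GD(0,n)\le d$, or more carefully using $\GD(0,n)\ge \D(0,n)=\dyck(x)$ from \cref{lem:good}). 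So we return $\BD(0,n)$ if it is at most $(1+\epsilon)d$ and report $\GD(0,n)>d$ otherwise; the line-\ref{ln:easydef} pruning via \cref{obs:hd,lem:good} guarantees that states with $\GD(i,j)$ much larger than $d$ short-circuit to $\infty$, so this thresholding is consistent.

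The substance is the running-time bound of $\tOh(\epsilon^{-1}nd)$. I would split the recursive calls $\BD(i,j)$ into three kinds according to which branch executes. First, the \emph{base/pruned} states ($j-i\le 1$, or line-\ref{ln:easydef} returns $\infty$): these cost $\Oh(1)$ each and there are $\Oh(n^2)$ of them — wait, that is too many, so the pruning must be used more aggressively. The right accounting is to only ever \emph{spawn} a recursive call $\BD(i,j)$ from within another call, so I bound the total work by the number of distinct $(i,j)$ actually reached times the per-state cost, where each reached non-pruned state has $h(i,j)\ge \max(h(i),h(j))-2d$. Second, the \emph{easy} states where $h(i,j)=h(i+1,j-1)$: each does $\Oh(1)$ work plus two recursive calls (line~\ref{ln:easy}); by \cref{fct:gdp} these recursions peel off the minimum-height valley, and I would charge such a state to that valley together with the endpoints, arguing that the number of reachable such triples is $\tOh(nd)$ because for fixed $i$ there are $\Oh(d)$ choices of $j$ with $|h(i)-h(j)|\le 2\cdot 2d$ and the pivot $k$ is determined. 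Third, the \emph{hard} states where $h(i,j)<h(i+1,j-1)$: each does $\Oh(\tau_{i,j})$ work (lines~\ref{ln:hard}--\ref{ln:call}). As in the proof of \cref{thm:approxdyck}, $\sum_{i,j}\tau_{i,j}=\Oh(n\tau\log n)$ when summed over \emph{all} pairs, but we can only afford $\tOh(nd)$, so the key point (flagged before the statement: ``there are only $\Oh(nd)$ pairs $(i,j)$ for which $h(i+1,j-1)>h(i,j)$ yet $\D(i,j)\le d$'') is that after preprocessing the hard states reached by the recursion all satisfy $h(i,j)>\max(h(i),h(j))-2d$, and for such states both $|h(i)-h(j)|\le 4d$ \emph{and} the number of valleys inside $(i\dd j)$ relevant to a ``hard'' transition is small; more precisely a hard pair forces $x[i+1]x[j]$ not to be collapsible and forces a deep valley structure, so I would show the reachable hard pairs number $\Oh(nd)$ and then bound $\sum \tau_{i,j}$ over just those pairs by $\tOh(\epsilon^{-1}nd)$.

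Concretely, for the hard-state count I would argue: a pair $(i,j)$ with $h(i+1,j-1)>h(i,j)$ and not pruned by line~\ref{ln:easydef} has $h(i,j)\ge\max(h(i),h(j))-2d$; combined with $h(i+1,j-1)>h(i,j)$, an elementary height argument shows $x[i+1]\in T$, $x[j]\in\rev{T}$, and the profile $h$ on $(i\dd j)$ dips to within $1$ of $\min(h(i),h(j))$; I then fix $i$ and observe that $j$ must satisfy $h(j)\in[h(i)-2d\dd h(i)]$ modulo the dip condition, and within each height level the candidate $j$'s that are ``new'' (i.e., such that $h$ first returns to that level) are determined up to $\Oh(d)$ choices, giving $\Oh(d)$ valid $j$ per $i$, hence $\Oh(nd)$ hard states. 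Finally I bound the weighted sum: grouping hard pairs by $\nu:=\min(\nu(i),\nu(j))$, the pairs with a given $\nu$ number at most $\Oh(\frac{n}{2^\nu}\cdot d)$ (fixing $i$ divisible by $2^\nu$ leaves $\Oh(d)$ choices of $j$), so $\sum_{\text{hard}}\tau_{i,j}\le \sum_\nu \Oh(\tfrac{nd}{2^\nu})\cdot \tau 2^\nu = \Oh(nd\tau\log n)=\tOh(\epsilon^{-1}nd)$. The easy and base states contribute only a constant per state, i.e. $\tOh(nd)$ in total, and the line-\ref{ln:easydef}-pruned states are reached only as direct children of non-pruned states, so they add nothing asymptotically.

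The main obstacle, as anticipated, is the combinatorial claim that after \cref{fct:preprocess} only $\Oh(nd)$ pairs $(i,j)$ are ``hard yet relevant'' (and the refinement that this remains $\Oh(\frac{nd}{2^\nu})$ when $2^\nu\mid\gcd(i,j)$). Everything else — correctness via \cref{lem:bd}, the $\sum\tau_{i,j}$ geometric sum exactly as in \cref{thm:approxdyck}, and the $\Oh(1)$-per-state bookkeeping for easy states via \cref{fct:gdp} — is routine. I would isolate the hard-pair count as a standalone claim about the height function (independent of $\GD$), prove it by the level-crossing argument sketched above using that preprocessing caps the number of valleys at $2d$, and then assemble the three contributions. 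A secondary technical point is to make the memoized recursion's reachable set precisely match the set of pairs the counting claim controls, which just requires noting that $\BD(i,j)$ is only ever invoked with $h(i,j)\ge\max(h(i),h(j))-2d$ after the line-\ref{ln:easydef} guard.
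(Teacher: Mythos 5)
Your overall architecture — preprocess via \cref{fct:preprocess}, run \cref{alg:3Dyck-approx} with $\tau = \lceil 48\epsilon^{-1}\log|K|\rceil$ and memoization, invoke \cref{lem:bd} for correctness, classify reachable states into trivial/easy/hard, bound the hard states per fixed $i$ by $\Oh(d)$ via the ``first $j$ at each height level'' observation, and then group the $\sum\tau_{i,j}$ sum by $\nu$ — matches the paper's proof. Your hard-state counting, while slightly muddled in places (the claimed implication $x[i+1]\in T$, $x[j]\in\rev{T}$ is not always both true; the ``$\Oh(d)$ per height level'' phrasing should be ``$\Oh(1)$ per height level across $\Oh(d)$ levels''), is essentially the right argument and would lead to the stated $\tOh(\epsilon^{-1}nd)$ contribution.

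The genuine gap is in your accounting of the \emph{easy} states. You claim ``the number of reachable such triples is $\tOh(nd)$ because for fixed $i$ there are $\Oh(d)$ choices of $j$ with $|h(i)-h(j)|\le 2\cdot 2d$,'' but that claim is false: the number of $j$ with $|h(i)-h(j)|\le 4d$ can be $\Theta(n)$ whenever the height function stays in a narrow band, and after \cref{fct:preprocess} nothing prevents that (the preprocessing limits valleys, not the range of heights). Worse, the restriction that rescues the hard-state count — only the \emph{first} $j$ at each height level yields a hard call — is exactly what fails here: being the \emph{second or later} $j$ at a given height is precisely what makes $\BD(i,j)$ easy (that repeated height is an internal minimum, forcing $h(i,j)=h(i+1,j-1)$). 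So the reachable easy states are not bounded by $\tOh(nd)$ via this counting, and your accounting does not establish the claimed running time. The paper sidesteps counting easy states entirely: it shows that when an easy call $\BD(i,j)$ selects the smallest $k\in(i\dd j)$ with $h(k)=h(i,j)$, the child $\BD(i,k)$ is never easy — because the choice of $k$ forces $h(i+1,k-1)>h(i,k)$, and the line-\ref{ln:easydef} guard is inherited from the parent ($h(i,k)=h(i,j)\ge\max(h(i),h(j))-2d\ge h(i)-2d=\max(h(i),h(k))-2d$) — and hence the $\Oh(1)$ cost of each easy call can be charged to its parent or to this hard-or-trivial sibling, reducing the total cost to the trivial and hard contributions. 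You need this charging argument (or some equivalent bound on the reachable easy states) to close the gap; the ``$\Oh(d)$ per $i$'' observation alone does not carry over from hard states to easy ones.
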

\begin{proof}
	In the preprocessing, we use \cref{fct:preprocess} in order to guarantee that there are at most $2\dyck(x)$ valleys and thus $|K|\le 6\dyck(x)$.  Then, we construct a data structure that, given $i,j\in[0\dd n]$, reports the smallest $k\in [i\dd j]$ such that $h(k)=h(i,j)$~\cite{RMQ}.
	Finally, we run $\BD(0,n)$ with $\tau = \lceil 48 \epsilon^{-1} \log |K|\rceil$
	and memoization of the results of recursive calls.
	By \cref{lem:bd}, the returned value satisfies $\GD(0,n)\le \BD(0,n) \le \GD(0,n)+\frac{8}{\tau} |K|\log |K| \le (1+\epsilon)\GD(0,n) $.
	
	The running time analysis is more complex than in the proof of \cref{thm:approxdyck}.
	We say that a call $\BD(i,j)$ is \emph{hard} if it reaches \cref{ln:hard},
	\emph{easy} if it terminates at \cref{ln:easydef} or \cref{ln:easy}, and \emph{trivial} otherwise. Observe that the total cost of trivial calls is $\tOh(n)$.
	Moreover, the cost of each easy call is $\tOh(1)$ plus the cost of the two calls made in \cref{ln:easy},
	but the call $\BD(i,k)$ is never easy (it can be hard or trivial).
	This is because the choice of $k$ as the smallest index in $(i\dd j)$ with $h(k)=h(i,j)$ guarantees that either $k=i+1$ or $h(i+1,k-1)>h(k)=h(i,k)=h(i,j)\ge \max(h(i),h(j))-2d \ge h(i)-2d = \max(h(i),h(k))-2d$.
	Consequently, the cost of each easy call can be charged to its parent or sibling (which is hard or trivial),
	and it suffices to bound the total running time of hard calls.
	By symmetry, we only bound the cost of calls $\BD(i,j)$ with $h(i)\ge h(j)$.
	We then observe that if $h(i)\ge h(j)=h(j')$ and $i> j > j'$, then $\BD(i,j')$ is easy.
	Consequently, there are at most $4d+1$ hard calls per $i$.
	The cost of each hard call $\BD(i,j)$ is $\tOh(\tau_{i,j})=\tOh(\tau 2^{\nu(i)})$,
	for a total of $\tOh(d\tau \sum_{i=0}^n 2^{\nu(i)})=\tOh(\epsilon^{-1} nd)$.
	\end{proof}

	\begin{theorem}\label{thm:3approxdyck}
		There is an algorithm that, given a string $x\in \Sigma^n$, a threshold $d\in [1\dd n]$, and a parameter $\epsilon\in (0,1)$, in $\tOh(\epsilon^{-1}nd)$ time reports $\dyck(x)>d$ or outputs a value $v$ such that $\dyck(x)\le v \le (3+\epsilon)\dyck(x)$.
	\end{theorem}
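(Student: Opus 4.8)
The plan is to derive \cref{thm:3approxdyck} directly from \cref{prp:3approxdyck} combined with the two-sided comparison between $\GD$ and $\D$ provided by \cref{lem:good}. Given an input $(x,d,\epsilon)$, I would run the algorithm of \cref{prp:3approxdyck} on $x$ with distance threshold $3d$ and precision parameter $\epsilon':=\epsilon/3$. Its running time is $\tOh(\epsilon'^{-1}\,n\cdot 3d)=\tOh(\epsilon^{-1}nd)$, as required. (The corner case $3d>n$ is harmless: one may either truncate the threshold passed to \cref{prp:3approxdyck} to $n$, or observe that then $d=\Omega(n)$ and run the PTAS of \cref{thm:approxdyck} instead, which costs $\tOh(\epsilon^{-1}n^2)=\tOh(\epsilon^{-1}nd)$.)

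For correctness, the key step is to show that the quantity actually estimated by \cref{prp:3approxdyck}, namely $\GD(0,n')$ for the preprocessed string $x'$ of length $n'\le n$ satisfying $\dyck(x')=\dyck(x)$, is sandwiched between $\dyck(x)$ and $3\dyck(x)$. This follows by applying \cref{lem:good} with $i=0$ and $j=n'$: since $h(0)=0$ and $h(0,n')=\min_{0\le k\le n'}h(k)\le\min(0,h(n'))$, the term $-h(n')+2h(0,n')=(h(0,n')-h(n'))+h(0,n')$ is nonpositive, so
\[\dyck(x)=\D(0,n')\le\GD(0,n')\le 3\D(0,n')-h(0)-h(n')+2h(0,n')\le 3\dyck(x).\]

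Given this sandwich, the two possible outputs of \cref{prp:3approxdyck} translate cleanly. If it reports $\GD(0,n')>3d$, then $3\dyck(x)\ge\GD(0,n')>3d$, so $\dyck(x)>d$, which I report. If instead it returns a value $v$ with $\GD(0,n')\le v\le(1+\epsilon')\GD(0,n')$, then $\dyck(x)\le\GD(0,n')\le v$ and $v\le(1+\epsilon')\GD(0,n')\le 3(1+\epsilon')\dyck(x)=(3+\epsilon)\dyck(x)$, so $v$ is the desired estimate.

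I expect the only delicate point to be the sandwich itself: the whole purpose of the potential-carrying bound $\GD(i,j)\le 3\D(i,j)-h(i)-h(j)+2h(i,j)$ in \cref{lem:good} is that the endpoint height terms vanish (or become favourable) at the top-level indices $(0,n')$, which is exactly what converts the ``budgeted'' inequality into a genuine multiplicative $3$-approximation of $\dyck(x)$. Everything else is bookkeeping: the rescaling $d\mapsto 3d$ of the threshold, the rescaling $\epsilon\mapsto\epsilon/3$ of the error, and the trivial handling of $3d>n$.
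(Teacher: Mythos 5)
Your proposal is correct and matches the paper's (very terse) proof: apply \cref{prp:3approxdyck} with threshold $3d$ and precision $\epsilon/3$, and invoke \cref{lem:good} at $(i,j)=(0,n')$ for the preprocessed string. You helpfully spell out why the endpoint height terms in \cref{lem:good} are favourable—$h(0)=0$ and $h(0,n')\le\min(0,h(n'))$ make $-h(n')+2h(0,n')\le 0$—which is the one step the paper leaves implicit, and your handling of the $3d>n$ corner case and the rescaling arithmetic are both sound.
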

	\begin{proof}
		We apply \cref{prp:3approxdyck} with adjusted $d$ (three times larger) and $\epsilon$ (three times smaller).
		The correctness follows from \cref{lem:good}.
	\end{proof}

\section{Constant Approximation in Sub-quadratic Time}\label{sec:consub}

In this section we provide an algorithm that given a string $x$ over some alphabet $\Sigma$, computes constant approximation of $\dyck(x)$ in sub quadratic time. Formally we show the following.

\begin{theorem}\label{thm:mainapproxsubquadratic}
Given a string $x$ of length $n$ over an alphabet $\Sigma$, there exists a randomized algorithm Dyck-Est that provides an upper bound on $\dyck(x)$ that is at most a fixed constant multiple of $\dyck(x)$, in time $\tOh(n^{1.971})$ with probability at least $1-1/n^{-9}$.
\end{theorem}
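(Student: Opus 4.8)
The plan is to follow the two-stage window-decomposition framework sketched in the technical overview, combining it with the $\tOh_\epsilon(n^2)$-time PTAS of \cref{thm:approxdyck} and the triangle inequality of \cref{lem:triangle}. First, I would fix a distance-threshold scheme: instead of estimating $\dyck(x)$ directly, I run the whole machinery with a guessed threshold $\theta n$ (and a constant blow-up factor $c$), doubling $\theta$ over $\Oh(\log n)$ scales, so that it suffices to design a procedure that distinguishes $\dyck(x)\le \theta n$ from $\dyck(x)>c\theta n$ in time $\tOh(n^{1.971})$ for a fixed $\theta$. The degenerate regimes (very small $\theta$, i.e.\ $d \le n^{0.99}$, and very large $d$) are handled by \cref{thm:3approxdyck} and by the trivial bound respectively, so the interesting case is $d = n^{\Theta(1)}$ with the exponent bounded away from both $0$ and $1$; the two window-size parameters $s_1 \gg s_2$ and the sampling density $\delta$ are then chosen as appropriate powers of $n$ to balance the terms, and the final exponent $1.971$ comes from optimizing this balance (which, as the authors note, is not pushed to its limit).

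Next I would set up Stage 1: define the candidate window family $\mathcal J$ consisting of all intervals $w \sub [1\dd n]$ with $|w|\le 5s_1$ such that $e(w)$ and $|w|$ are multiples of $\theta s_1$, so $|\mathcal J| = \Oh((n/(\theta s_1))\cdot(s_1/(\theta s_1))) = \tOh(n/(\theta s_1)\cdot \theta^{-1})$, and invoke \cref{lem:large} to guarantee a consistent $\mathcal S \sub \mathcal J\times \mathcal J$ decomposing $[1\dd n]$ with $\sum_{(w,w')\in\mathcal S}\dyck(x[w]\circ x[w']) \le \dyck(x)+\Oh(\theta n)$. Stage 2 refines each $(w,w')\in\mathcal J\times\mathcal J$ analogously via $\mathcal K$ (windows of length a multiple of $\theta s_2$, length $\le \Oh(s_2)$) and the analog \cref{lem:small}, so that a cost estimate for every pair in $\mathcal K\times\mathcal K$ plus every ``nearby'' pair (ones that fit in a common large window) yields, by summing, a constant approximation of $\dyck(x[w]\circ x[w'])$, and then a cubic-time DP over $\mathcal S$ (which runs in $\tOh((n/s_1)^3)$, subquadratic for $s_1$ a suitable power of $n$) combines them into an estimate of $\dyck(x)$. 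The accounting for the additive $\Oh(\theta n)$ errors works because $|\mathcal S| = \Oh(n/s_1)$ and each ``capping'' to multiples of $\theta s_1$ costs $\Oh(\theta s_1)$ per pair, and likewise at the finer scale; here I would lean on \cref{lem:triangle} to pass between $\dyck(x[q]\circ x[q'])$ and $\dyck(x[\tilde q]\circ\rev{x[q]})$-type quantities.

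The computational heart is \textsc{CertifyingWindowPairs}, split into \textsc{CertifyingSmall} and \textsc{CertifyingLarge}. For \textsc{CertifyingSmall} I classify each $q\in\mathcal K$ as \emph{dense} (at least $\delta$ windows $q'$ with $\dyck(x[q]\circ x[q'])$ below the current small-cost threshold) or \emph{sparse} via random sampling; for a dense $q$ I find all low-cost partners $q'$, all low-cost ``left partners'' $\tilde q$ with $\dyck(x[\tilde q]\circ\rev{x[q]})$ small, deduce $\dyck(x[\tilde q]\circ x[q'])$ is small by \cref{lem:triangle}, record approximate costs using \cref{thm:approxdyck} rather than exact cubic computation, then delete $q$ and its $\tilde q$'s and repeat; I also brute-force all near pairs using \textsc{Dyck-Approx}. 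For \textsc{CertifyingLarge}, to cover pairs $q\sub w$, $q'\sub w'$ that are matched by the optimal alignment but with $q$ sparse, I sample $\tOh(|\mathcal K|/\delta)$ sparse windows per large window $w$, find their low-cost partners $q_2$, and for each large $w'\supseteq q_2$ approximate $\dyck(x[w]\circ x[w'])$; sparsity bounds the number of large pairs touched, hence the running time. The main obstacle — and the place where most of the real work and the genuinely new combinatorics lie — is proving that this dense/sparse dichotomy actually covers \emph{all} window pairs induced by \emph{some} optimal alignment with only a controlled loss: one must show that for each large pair $(w,w')$ in $\mathcal S$, along the alignment-induced fine decomposition $\mathcal S_{(w,w')}$, every pair is either a near pair, or has its left window dense (caught by \textsc{CertifyingSmall}), or is one of few pairs with a sampled sparse left window (caught by \textsc{CertifyingLarge}), and that the residual uncaught pairs are so few that dropping them costs only $\Oh(\theta n)$ overall. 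Establishing this requires a careful potential/counting argument tying the recursion depth in the Stage-1/Stage-2 constructions (Case 3 of \cref{lem:large}'s proof sketch cannot recur too often because a far pivot splits into two $\Omega(s_1)$-size pieces) to the density parameter, and it is also where the failure probability $1 - n^{-9}$ of the random sampling must be controlled by a union bound over the $\tOh(1)$ scales and $\operatorname{poly}(n)$ windows.
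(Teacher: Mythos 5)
Your proposal follows the paper's strategy almost line by line: reduce to a gap version over $\Oh(\log n)$ doubling scales of $\theta$, dispatch the small-distance regime with the $\tOh(nd)$-time algorithm of \cref{thm:3approxdyck}, build the two-tier window families $\mathcal{J}$ and $\mathcal{K}$, certify window-pair costs via the dense/sparse dichotomy with triangle-inequality shortcuts (\textsc{CertifyingSmall}/\textsc{CertifyingLarge}), and finish with a DP over the certified pairs --- exactly matching the structure of Section~\ref{sec:consub}. The only differences are calibration slips that would need fixing: the small-$d$ cutoff must be about $n^{33/34}$ rather than $n^{0.99}$ (the latter would push \cref{thm:3approxdyck} to $\tOh(n^{1.99})$ time), the final DP cost is governed by $n/(\theta s_2)$ rather than $n/s_1$, and the \textsc{CertifyingLarge} per-window sampling rate is $\tOh(\theta^{-3})$ rather than $\tOh(|\mathcal{K}|/\delta)$; all of these need to be tuned to the paper's choices $s_1\approx n^{21/34}$, $s_2\approx n^{13/34}$, $\delta\approx n^{5/34}$ to realize the claimed $\tOh(n^{1.971})$ bound.
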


Instead of directly proving the above theorem we prove the following gap version.

\begin{theorem}\label{thm:gapapproxsubquadratic}
Given a string $x$ of length $n$ over an alphabet $\Sigma$, 
for every $\theta \in[n^{-1/34},1]$, there exists a randomized algorithm Gap-Dyck-Est$_\theta$ that on input $x$ provides an estimation $t$ in time $\tOh(n^{1.971})$, where $t\ge \dyck(x)$ and if $\dyck(x)\le \theta n$ then $t\le k \theta n$ (here $k$ is a fixed constant) with probability at least  $1-1/n^{-9}$.
\end{theorem}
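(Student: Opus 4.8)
The plan is to prove \cref{thm:gapapproxsubquadratic} via the two-stage window decomposition sketched in the technical overview, then assemble the pieces into the recursive estimator Gap-Dyck-Est$_\theta$. First I would fix the parameters: a coarse window size $s_1 = n^{\alpha_1}$ and a fine window size $s_2 = n^{\alpha_2}$ with $\theta s_1, \theta s_2$ integral (rounding $\theta$ to a nearby power of two costs only a constant in the approximation), chosen so that $|\mathcal J| \approx n/s_1$ and $|\mathcal K| \approx n/s_2$ while $s_2$ is small enough that a single window-pair cost $\dyck(x[q]\circ x[q'])$ is cheap to estimate via \cref{thm:approxdyck}; the exponent $1.971$ and the restriction $\theta \ge n^{-1/34}$ emerge at the end by balancing the sizes of $\mathcal J\times\mathcal J$, $\mathcal K\times\mathcal K$, the DP tables, and the random-sampling density parameter $\delta$.

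The structural heart is \cref{lem:large} (and its Stage-2 analogue \cref{lem:small}): given an optimal alignment $M$ of $x$, recursively carve $[1\dd n]$ into window pairs from $\mathcal J\times\mathcal J$ by the three-case procedure (a matched boundary pair of total length in $[s_1,4s_1]$; a pivot near a boundary; otherwise a short boundary pair plus a splitting pivot far from both ends). The key accounting is that Case 3 strictly splits the interval into two pieces each of length $\gtrsim s_1$, so its total contribution to $|\mathcal S|$ over the whole recursion is $O(n/s_1)$, and Cases 1–2 each consume $\gtrsim s_1$ length per pair, giving $|\mathcal S| = O(n/s_1)$ overall; snapping each window's endpoints rightward to multiples of $\theta s_1$ adds $O(\theta s_1)$ per pair via the triangle inequality (\cref{lem:triangle}) and the trivial bound $\dyck(x[w]\circ x[w'])\le$ length, hence total extra cost $O(\theta s_1)\cdot O(n/s_1) = O(\theta n)$. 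Given per-window-pair cost estimates, a cubic-time DP over $\mathcal J$ (which is consistent, i.e.\ non-crossing) in time $\tOh((n/(\theta s_1))^3)$ optimizes the global assembly, and \cref{lem:large} guarantees the DP optimum is $\dyck(x)+O(\theta n)$, so under the promise $\dyck(x)\le\theta n$ it is $O(\theta n)$.

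For the cost estimates themselves I would run CertifyingWindowPairs (= CertifyingSmall + CertifyingLarge). CertifyingSmall first estimates, via \cref{thm:approxdyck}, all cheap pairs $x[q]\circ x[q']$ with $q$ \emph{dense} (many cheap partners, detected by random sampling), using \cref{lem:triangle} to propagate: if $x[\tilde q]\circ\rev{x[q]}$ is cheap and $x[q]\circ x[q']$ is cheap then $x[\tilde q]\circ x[q']$ is cheap, letting us peel off $q$ and all such $\tilde q$ in one batch and bound the total work. CertifyingLarge handles the residual: for each $w\in\mathcal J$ sample a few \emph{sparse} small windows $q\subset w$, find all cheap partners $q_2$ (few, by sparsity), and for each large $w'\supseteq q_2$ directly run Dyck-Approx on $x[w]\circ x[w']$; the sparsity bound caps the number of large pairs evaluated. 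Correctness of the combination: for each $(w,w')\in\mathcal S$ and its optimal alignment, \cref{lem:small} partitions $w\cup w'$ into small pairs; those matched within one side, or with a dense left endpoint, are certified by CertifyingSmall, and if the rest (sparse-left, cross-side) are few we discard them at $O(\theta)$-per-pair cost, while if they are many the sampling in CertifyingLarge hits one and directly certifies $x[w]\circ x[w']$. Summing errors over the $O(n/s_1)$ pairs of $\mathcal S$ keeps the global estimate within a constant factor of $\dyck(x)+O(\theta n)$.

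The main obstacle I expect is the two interlocking charging arguments: (i) proving the recursion in \cref{lem:large}/\cref{lem:small} actually terminates with $|\mathcal S| = O(n/s_1)$ windows — Case 3 can in principle recurse on a piece that again falls into Case 3, so one needs a clean potential (interval length, or a depth bound from the ``far pivot'' condition) showing the tail of tiny-length pairs cannot accumulate; and (ii) the dense/sparse dichotomy in Certifying*, where one must set $\delta$ so that dense windows are few enough that the triangle-inequality batching is subquadratic, yet sparse windows are ``sparse enough'' that CertifyingLarge's per-sample partner list is short, \emph{and} the sampling hits a good sparse window with high probability — all simultaneously compatible with $|\mathcal J|\approx n/s_1$, $|\mathcal K|\approx n/s_2$, and the target exponent. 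Everything else (the triangle inequality \cref{lem:triangle}, the PTAS \cref{thm:approxdyck} as the cheap per-pair oracle, the non-crossing DP, the final union bound over $\tOh(1)$ randomized subroutines giving failure probability $n^{-\Omega(1)}$) is routine once the parameters are pinned down; the exponent $1.971$ and the floor $\theta\ge n^{-1/34}$ are exactly the output of optimizing these constraints.
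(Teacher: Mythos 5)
Your proposal captures the paper's architecture faithfully: the two-scale window decomposition ($\mathcal J$ at scale $s_1$, $\mathcal K$ at scale $s_2$), the recursive three-case carving process of \cref{lem:large}/\cref{lem:decomposition}, the snapping argument giving $|\mathcal S|=O(n/s_1)$ plus $O(\theta n)$ extra cost, the dense/sparse dichotomy with triangle-inequality batching in CertifyingSmall and the sampling-based CertifyingLarge, the non-crossing DP, and the final parameter balance yielding $n^{1.971}$ with $\theta\ge n^{-1/34}$. This is essentially the same proof as the paper.

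One concrete inaccuracy worth flagging: you state the final DP as ``a cubic-time DP over $\mathcal J$ in time $\tOh((n/(\theta s_1))^3)$.'' That cannot work as stated. Some large pairs $(w,w')\in\mathcal J\times\mathcal J$ are never certified directly; they are only certified through their small-window sub-decomposition in $\mathcal W_S\subseteq\mathcal K\times\mathcal K$, whose endpoints lie at multiples of $\theta s_2$, \emph{not} $\theta s_1$. A DP whose states live only at $\mathcal J$-boundaries has no way to recombine those $\mathcal K$-level certified pairs into a cost for $(w,w')$. The paper's \cref{algo:dp} therefore runs over the full set $C$ of all window endpoints appearing in $\mathcal W_L\cup\mathcal W_S$, which has size $\Oh(n/(\theta s_2))$, giving DP time $\tOh(n^3/(\theta^3 s_2^3))$. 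With the paper's parameters this is $\tOh(n^{66/34})\approx n^{1.94}$, so it does not change the bottleneck terms (those come from CertifyingWindowPairs and land at $n^{2-1/34}\approx n^{1.971}$), but the granularity of the DP is a correctness issue, not just a runtime one. A smaller quibble: the cost incurred by snapping window boundaries to multiples of $\theta s_1$ is bounded in the paper by a direct index-counting argument (at most $4\theta s_1$ indices lose their match), rather than by \cref{lem:triangle}; and the leftover sparse cross-pairs are not discarded — they remain certified with the trivial weight $\Theta(s_2)$, and the point is that there are only $O(\theta|w|/(s_2\log n))$ of them per threshold level, so their aggregate contribution is $O(\theta s_1)$ per pair $(w,w')\in\tilde{\mathcal S}_M$, hence $O(\theta n)$ overall.
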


\begin{proof}[Proof of Theorem~\ref{thm:mainapproxsubquadratic} from Theorem~\ref{thm:gapapproxsubquadratic}]
Run the algorithm from Theorem~\ref{thm:3approxdyck} with parameter $\epsilon=1$ on $x$ for $\tOh(n^{1.971})$ time. If it terminates then it outputs $4$ approximation of Dyck edit distance of $x$. Otherwise, if it fails then we know $\dyck(x)\ge n^{33/34}$. Now run Gap-Dyck-Est$_\theta$ for $\theta \in \{n^{-1/34},2n^{-1/34},\dots,1\}$ and output the minimum of all estimations obtained. Let $j$ be the smallest index such that $\theta_j n\ge \dyck(x)$. Then the output is at most $k\theta_j n\le 2k \dyck(x)$. As there are $\log n$ different values of $\theta$, the total running time of Dyck-Est is $\tOh(n^{1.971})$.
\end{proof}

The rest of the section is devoted to prove Theorem~\ref{thm:gapapproxsubquadratic}.

\subsection{Window Decomposition}\label{sec:window}

We define a window $w\subseteq [n]$ of size $d$ to be an interval in  $[n]$ having length $d$. Let $s(w)$ denote the starting index of $w$ in $s$ and $e(w)$ denote the last index of $w$ in $s$. Given a window $w=\{i_1,\dots,i_\ell\}$, $x[w]$ denotes the substring of $x$ that starts at index $i_1$ and ends at index $i_\ell$. 

A \emph{window pair} is a pair of windows $(w,w')$, 
and a weighed window pair is a triple $(w,w',c)$ such that $(w,w')$ is a window pair and $c\in \mathbb{R}_{\ge 0}$ is a weight. Given a window pair $(w,w')$, the cost of $(w,w')$ is $\dyck(x[w]\circ x[w'])$. For a weighted window pair $(w,w',c)$ we call it certified if $c\ge \dyck(x[w]\circ x[w'])$.

\begin{definition}\label{def:consistent}
A set $\{(w_1,w'_1),\ldots, (w_\ell,w'_\ell)\}$ of window pairs is a \emph{consistent decomposition}
of $S=\bigcup_{i=1}^{\ell} (w_i\cup w'_i)$ if  the $2\ell$ windows are disjoint 
and $\{(s(w_i),s(w'_i)):i\in [1\dd \ell]\}$ forms a non-crossing matching.
We also lift this definition to sets of weighted window pairs.
\end{definition}

\paragraph{Large windows.}
Given a string $x$ of length $n$ and a large window size parameter $s_1\in [1\dd n]$, we start our algorithm by introducing variable sized windows $\mathcal{J}$:

\[\mathcal{J} = \{w \sub [1\dd n]\; :\; 0\le |w|\le 5s_1\text{ and } \theta s_1 \mid s(w)-1\text{ and }\theta s_1\mid e(w)\}.\]

The following claim directly follows from the construction.

\begin{claim}\label{claim:sizelarge}
$|\mathcal{J}|=O(\frac{n}{\theta^2 s_1})$
\end{claim}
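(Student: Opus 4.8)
The plan is to prove $|\mathcal{J}|=O(\frac{n}{\theta^2 s_1})$ by a straightforward counting argument based on the two divisibility constraints in the definition of $\mathcal{J}$. A window $w\in\mathcal{J}$ is completely determined by the pair $(s(w),e(w))$, so it suffices to bound the number of admissible such pairs. The constraint $\theta s_1 \mid e(w)$ forces $e(w)$ to lie in the set $\{\theta s_1, 2\theta s_1, 3\theta s_1, \dots\}\cap[1\dd n]$, which has at most $\lfloor n/(\theta s_1)\rfloor + 1 = O(\frac{n}{\theta s_1})$ elements (using $\theta s_1\le n$, which holds since $\theta\le 1$ and $s_1\le n$; strictly we should note the hypothesis ensures $\theta s_1$ is a positive integer dividing into the relevant range).

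Next I would bound, for each fixed admissible value of $e(w)$, the number of admissible values of $s(w)$. The length constraint $0\le |w|\le 5s_1$ together with $|w| = e(w)-s(w)+1$ restricts $s(w)$ to an interval of length $O(s_1)$; the divisibility constraint $\theta s_1\mid s(w)-1$ then restricts $s(w)$ to those values in this interval congruent to $1$ modulo $\theta s_1$, of which there are at most $\lceil 5s_1/(\theta s_1)\rceil + 1 = O(\frac{1}{\theta})$ (using $\theta\le 1$). Multiplying the two bounds gives $|\mathcal{J}| = O(\frac{n}{\theta s_1})\cdot O(\frac{1}{\theta}) = O(\frac{n}{\theta^2 s_1})$, as claimed.

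There is really no main obstacle here — the statement is purely combinatorial and follows directly from the definition of $\mathcal{J}$ once one observes that a window is determined by its two endpoints and that each divisibility condition sparsifies the corresponding coordinate by a factor of $\theta s_1$. The only minor point to be careful about is the edge case $|w|=0$ (the empty window) and the additive $+1$ terms in the counting estimates, but since we are only after an $O(\cdot)$ bound and the hypothesis of \cref{thm:gapapproxsubquadratic} guarantees $\theta s_1\ge 1$, these contribute only constant factors and can be absorbed. Thus the proof amounts to the two counting steps above followed by their product.
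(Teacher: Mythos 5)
Your proof is correct and is exactly the counting argument that the paper leaves implicit (the paper simply states that the claim "directly follows from the construction"). Fixing $e(w)$ as a multiple of $\theta s_1$ gives $O(n/(\theta s_1))$ choices, and the length bound $|w|\le 5s_1$ together with $\theta s_1 \mid s(w)-1$ gives $O(1/\theta)$ choices for $s(w)$, yielding the claimed bound.
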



\noindent Next we show set $\mathcal{J}$ satisfies the following property.
\lemlarge*

\paragraph{Large alignment window decomposition.}

 Given any alignment $M$ and a parameter $s_1>0$, we now describe a process generating a set $\mathcal{S}_M$ where each element in the set is a pair of windows that are matched by $M$ and together these windows cover the whole input string $x$. Moreover, each window has length at most $4s_1$.

We construct $\mathcal{S}_M$ by using a recursive process $P_M$.
At any recursion step given an interval $[i_1\dd i_2]\subseteq [n]$ we do one of the followings. 

\begin{enumerate}

\item If $i_2-i_1\le 4s_1$, define a window pair $t=([i_1\dd \lfloor \frac{i_1+i_2}{2}\rfloor],[\lfloor \frac{i_1+i_2}{2}\rfloor+1\dd i_2])$. Add this pair to $\mathcal{S}_M$. 

Otherwise, let $k_1=\max_{\begin{subarray}jj\in[i_1\dd i_1+s_1]\\ M(j)\in[i_2-s_1\dd i_2]\end{subarray}} j$. If there does not exist such an index set $k_1=0$.

\item Let first consider the case where $k_1=0$. If there exists a pivot $p\in [i_1+s_1\dd i_1+2s_1]$ (If many consider the leftmost one)
define a window pair $t=([i_1\dd \lfloor \frac{i_1+p}{2}\rfloor],[\lfloor \frac{i_1+p}{2}\rfloor+1\dd p])$. Add this pair to $\mathcal{S}_M$ and recurse on interval $[p+1\dd i_2]$. 

\item Otherwise, if there exists a pivot $p\in [i_2-2s_1\dd i_2-s_1]$ (If many consider the rightmost one)
define a window pair $t=([p+1\dd \lfloor \frac{p+i_2}{2}\rfloor],[\lfloor \frac{p+i_2}{2}\rfloor+1\dd i_2])$. Add this pair to $\mathcal{S}_M$ and recurse on interval $[i_1\dd p]$. 

\item Otherwise, consider the left most pivot $p\ge i_1+2s_1$. Recurse separately on  $[i_1\dd p]$ and $[p+1\dd i_2]$. Notice there must exist one such pivot as we assume $\nexists p\in[i_1+s_1\dd i_1+2s_1]$ and $k_1=0$. 

Thus, from now on wards we assume $k_1\neq 0$.

\item If there exists a pivot $p\in [i_1+s_1\dd i_1+2s_1]$ for the subproblem defined on $[k_1\dd M(k_1)]$ (if many, consider the left most one), define a window pair $t=([i_1\dd p],[M(k_1)\dd i_2])$. Add this pair to $\mathcal{S}_M$ and recurse on interval $[p+1\dd M(k_1)-1]$. 

\item Otherwise, if there exists a pivot $p\in [i_2-2s_1\dd i_2-s_1]$ for the subproblem defined on $[k_1\dd M(k_1)]$ (if many, consider the right most one), define a window pair $t=([i_1\dd k_1],[p+1\dd i_2])$. Add this pair to $\mathcal{S}_M$ and recurse on interval $[k_1+1\dd p]$.

\item  Otherwise, if either $k_1\ge i_1+w$ or $M(k_1)\le i_2-s_1$ (or both), define a window pair $t=([i_1\dd k_1],[M(k_1)\dd i_2])$. Add this pair to $\mathcal{S}_M$ and recurse on $[k_1+1\dd M(k_1)-1]$.

\item Otherwise, consider the leftmost pivot $p\ge i_1+2s_1$. 
 Define a window pair $t=([i_1\dd k_1],\allowbreak [M(k_1)\dd i_2])$. Add this pair to $\mathcal{S}_M$ and recurse separately on  $[k_1\dd p]$ and $[p+1\dd M(k_1)]$.
Notice such a pivot always exists as $\exists j\in[k_1+1\dd i_1+s_1]$ such that $M(j)<i_2-2s_1$ and $i_2-i_1>4s_1$.

\end{enumerate}


\begin{lemma}\label{lem:decomposition}
Let $x$ be a string of length $n$. Given an alignment $M$ of $x$ and a parameter $s_1\in [1\dd n]$ let $\mathcal{S}_M$ be the set generated by process $P_M$. We can make the following claims:
\begin{enumerate}
    \item each window pair $(w_i,w'_i)\in \mathcal{S}_M$ satisfies $|w_i|,|w'_i|\le 4s_1$,
    \item $\mathcal{S}_M$ is a consistent decomposition of $[1\dd n]$,
    \item $\cost_M(x)\ge \sum_{(w_i,w'_i)\in \mathcal{S}_M} \cost_M(x[w_i]\circ x[w'_i])$,
    \item $|\mathcal{S}_M|=O(\frac{n}{s_1})$.
\end{enumerate}
\end{lemma}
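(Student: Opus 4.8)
\textbf{Proof plan for Lemma~\ref{lem:decomposition}.}
The plan is to analyze the recursive process $P_M$ directly, establishing the four claims essentially simultaneously by induction on the length of the interval $[i_1\dd i_2]$ being processed, and then summing over the leaves of the recursion tree. First I would set up the invariant that every interval $[i_1\dd i_2]$ handed to a recursive call is either a top-level call on $[1\dd n]$ or was produced by splitting at a pivot $p$ of its parent, where by a \emph{pivot} for $[i_1\dd i_2]$ I mean an index $p\in (i_1\dd i_2)$ with $\cost_M(x[i_1\dd i_2]) = \cost_M(x[i_1\dd p]) + \cost_M(x(p\dd i_2])$ — such an index always exists unless $M$ matches $i_1$ with $i_2$ or leaves $i_1$ (or $i_2$) unmatched, and its existence in the relevant ranges is exactly what is invoked in cases 4 and 8 (``notice there must exist one such pivot''). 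I would first verify, case by case, that whenever the process forms a window pair $t=(w,w')$ and claims $M$ matches them, the construction is consistent with $M$: in cases 1--4 the windows $w,w'$ are the two halves of a prefix $[i_1\dd p]$ (or of the whole short interval), which is a sub-expression of $x[i_1\dd i_2]$ under $M$ by the pivot property, so $\cost_M$ is additive over them; in cases 5--8 we have $k_1 = M^{-1}$-type index with $M(k_1)$ a genuine partner, and the window pair $([i_1\dd k_1],[M(k_1)\dd i_2])$ together with the recursive interval $[k_1+1\dd M(k_1)-1]$ (and possibly a further pivot split inside it) exhausts $[i_1\dd i_2]$ without crossings. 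This simultaneously gives claims 1, 2, and 3: disjointness and the non-crossing property follow because each recursive call operates on a sub-interval strictly interior to what the newly added window pair leaves behind, and claim 3 follows because $\cost_M$ is superadditive over any partition of an interval into a collection of sub-intervals that respect the matching $M$ (no pair of $M$ straddles two parts), which is precisely what the pivot/partner structure guarantees.

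The size bound $|w_i|,|w'_i|\le 4s_1$ (claim 1) needs a per-case check: in case 1 both halves have length $\le 2s_1$; in cases 2 and 5 the pivot $p$ lies in $[i_1+s_1\dd i_1+2s_1]$ so $[i_1\dd p]$ has length $\le 2s_1$ and its halves $\le s_1$, while in case 5 the right window $[M(k_1)\dd i_2]$ has length $i_2 - M(k_1) + 1 \le s_1 + 1$ since $M(k_1)\in[i_2-s_1\dd i_2]$ (and symmetrically $[i_1\dd k_1]$ has length $\le s_1+1$); cases 3, 6 are symmetric; in cases 7 and 8 both $[i_1\dd k_1]$ and $[M(k_1)\dd i_2]$ again have length $\le s_1+1$ by the definition of $k_1$. (I would absorb the $+1$ into the constant, or note $\le s_1 < 4s_1$ covers it; the bound $4s_1$ is deliberately loose.)

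The genuinely hard part is claim 4, the bound $|\mathcal{S}_M| = O(n/s_1)$, because — as the technical overview explicitly flags — cases 7 and 8 can add a window pair whose total length $|w_i|+|w'_i|$ is much smaller than $s_1$, so we cannot simply charge $\Omega(s_1)$ units of $[1\dd n]$ to each window pair. The strategy here is a two-part accounting. First, for cases 1--3 and 5--6, the window pair produced ``uses up'' at least $s_1$ positions of the current interval (a prefix or suffix of length $\ge s_1$, since the relevant pivot is at distance $\ge s_1$ from the boundary, or, in case 1, the whole interval of length between $s_1$-ish and $4s_1$ is consumed), so these contribute $O(n/s_1)$ window pairs in total by a direct charging argument against disjoint chunks of $[1\dd n]$. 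Second, for the ``bad'' cases 4, 7, and 8, the key observation is that each such step splits the current interval $[i_1\dd i_2]$ at a pivot $p$ with $p \ge i_1 + 2s_1$ \emph{and} (from the failure of cases 2/3 or 5/6) $p \le i_2 - s_1$, hence both resulting sub-intervals $[i_1\dd p]$ and $[p+1\dd i_2]$ (or $[k_1\dd p]$, $[p+1\dd M(k_1)]$) have length $\ge s_1$; therefore along any root-to-leaf path in the recursion tree, the number of bad splits is $O(\log(n/s_1))$, and more importantly the bad-case nodes form a tree in which every internal node has two children each of size $\ge s_1$, so there are $O(n/s_1)$ of them in total (a node of size $m$ has at most $2m/s_1$ descendant leaves in this sub-tree, and a standard potential/counting argument on binary trees whose every node has size $\ge s_1$ bounds the node count by $O(n/s_1)$). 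Since every leaf of the recursion and every window-pair-producing node is within $O(1)$ steps of either a ``good'' consuming step or a bad binary split, summing the two contributions gives $|\mathcal{S}_M| = O(n/s_1)$. I expect the subtlety to be bookkeeping the interaction between the $k_1 \ne 0$ branch (cases 5--8) — where a window pair is added \emph{and} we recurse on the interior $[k_1+1\dd M(k_1)-1]$, which itself may immediately trigger another bad split — and making sure the ``size $\ge s_1$ on both sides'' property is maintained across these nested recursions; the clean way to handle it is to define the recursion tree so that each bad-case node's two children are exactly the two large sub-intervals, treating the added window pair as an $O(1)$-cost annotation on that node rather than a separate recursion branch.
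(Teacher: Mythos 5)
Your overall strategy for claim 4 — build the recursion tree, bound the ``good'' consuming steps by a disjoint-chunks charging argument, bound the ``bad'' binary splits by a leaf-counting argument on a tree whose every internal node has two children of size $\ge s_1$ — is essentially the same architecture the paper uses. Claims 1--3 are handled in the same spirit as the paper (claim 3 via the observation that no pair of $M$ straddles the boundary of a created window pair), and your per-case size checks for claim 1 are correct. However, there is a concrete error in the accounting for claim 4.

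You classify case~7 as one of the ``bad'' binary-split cases alongside 4 and 8, but case~7 does \emph{not} split: it adds the window pair $([i_1\dd k_1],[M(k_1)\dd i_2])$ and recurses on the single interior interval $[k_1+1\dd M(k_1)-1]$. Your stated argument for bad cases — ``each such step splits the current interval at a pivot $p$ ... both resulting sub-intervals ... have length $\ge s_1$'' — simply does not apply to it; you even only list the children of cases 4 and 8 in that sentence. Moreover, your size-bound paragraph observes that in case~7 both windows have length $\le s_1+1$, but misses the point of case~7's guard: since by definition of $k_1$ we always have $k_1\le i_1+s_1$ and $M(k_1)\ge i_2-s_1$, the condition ``$k_1\ge i_1+s_1$ or $M(k_1)\le i_2-s_1$'' forces one of the two windows to have length exactly $s_1+1$. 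So case~7 is in fact a \emph{good} consuming step ($|w_i|+|w'_i|\ge s_1$) with a unary child; the paper places it in that bucket. Under your classification the bad-case counting would not close.

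The second, smaller concern is the ``$s_1$-ish'' handwave for case~1. The bound $|\mathcal{S}_M|=O(n/s_1)$ genuinely hinges on showing that every case-1 (leaf) interval has length $\Omega(s_1)$, and this is the one place that requires a careful parent-case analysis — checking, for each of cases 2--8 at the parent, that the recursed sub-interval passed down has length $\ge s_1$ when the parent had length $>4s_1$. The paper devotes a full paragraph to exactly this; it is the crux of the leaf count, not a detail to absorb into a constant. In your plan this fact is partly implied by ``both children $\ge s_1$ in bad splits'' plus the observation that unary good cases peel off at most $\approx 3s_1$ from an interval of length $>4s_1$, but you never state the unary half, and once case~7 is reclassified you need it explicitly. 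I would fix the case-7 bucket first and then spell out the leaf-size argument case by case as the paper does.
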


 \begin{proof}

 By construction, we can ensure that every window generated has length at most $4s_1$.

\vspace {2mm}

 Again by construction we can claim that every index in $[n]$ is contained in some window pair.
Moreover, once a window pair $(w_i,w'_i)$
is created they are removed from all the intervals that are considered in the future recursions. Thus, any two window pairs in $\mathcal{S}_M$ are disjoint and $\mathcal{S}_M$ is a decomposition of $[1\dd n]$. 
Now consider any two distinct pair of windows $(w_i,w'_i)$ and $(w_j,w'_j)$. Consider the recursion tree $T$ that is induced by process $P_M$. Each node of this tree is represented by an interval $[i_1\dd i_2]$ processed by $P_M$. Moreover, if $[i_1\dd i_2]$ was created while processing interval $[\tilde{\imath}^1\dd \tilde{\imath}^2]$, then we make it's associated node the parent of $[i_1\dd i_2]$. Notice every window pair can be associated with a node in $T$. Let $n_i,n_j$ be the associated node of $(w_i,w'_i)$ and $(w_j,w'_j)$. If $n_i$ is an ancestor of $n_j$ then the consistency can be argued using condition 3. Otherwise, if $n_j$ is an ancestor of $n_i$ then the consistency can be argued using condition 4. Otherwise, let $n_a$ be the lowest common ancestor of $n_i,n_j$ and $[i_1\dd i_2]$ be the associated interval. Then if $n_i$ appears left to $n_j$ then both $w_i, w'_i$ are in the interval $[i_1\dd p]$ for case 4 (or in $[k_1\dd p]$ for case 8) and $w_j, w'_j$ are in the interval $[p+1\dd i_2]$ for case 4 (or in $[p+1\dd M(k_1)]$ for case 8). Thus, the consistency can be argued using condition 1. Similarly, if $n_j$ appears left to $n_i$ consistency can be argued using condition 2.

\vspace {2mm}

To prove the third point we claim, for any index $j\in[n]$ if $M(j)\neq \bot$ and $j\in w_i$, then $M(j)\in w_i\cup w'_i$, where $(w_i,w'_i)\in \mathcal{S}_M$. Thus, we show $\cost_M(x)\ge \sum_{(w_i,w'_i)\in \mathcal{S}_M} \cost_M(x[w_i]\circ x[w'_i])$. Now we prove the claim. 
In the recursion process let the window pair $(w_i,w'_i)$ is created while considering the interval $[i_1\dd i_2]$. For contradiction let $M(j)\neq \bot$, $j\in w_i$ but $M(j)\notin w_i\cup w'_i$. 
This trivially can't happen if $i_2-i_1\le 4s_1$ as in that case $M(j)\in [i_1\dd i_2]=w_i\cup w'_i$.
Next consider the case where $k_1=0$. Here as $p$ is a valid pivot
both $j,M(j)\in [i_1\dd p]=w_i\cup w'_i$ (or $j,M(j)\in [p+1\dd i_2]=w_i\cup w'_i$). 

Next consider the case where $w_i=[i_1\dd p]$ and $w'_i=[M(k_1)\dd i_2]$. Here if $j\le k_1$ then $M(j)\in [i_1\dd k_1]\cup [M(k_1)\dd i_2]$. Hence, we get a contradiction. Otherwise, $j$ is contained in the subproblem defined on $[k_1\dd M(k_1)]$. But as for this, $p$ is a valid pivot $M(j)\in[k_1\dd p]\subseteq w_i$. Thus, we get a contradiction.
The case where $w_i=[i_1\dd k_1]$ and $w'_i=[p+1\dd i_2]$ or $w_i=[i_1\dd k_1]$ and $w'_i=[M(k_1)\dd i_2]$ can be argued in a similar way.

\vspace {2mm}

To put an upper bound on the size of $\mathcal{S}_M$, consider the recursion tree $T$ that is induced by process $P_M$. Each node of this tree is represented by an interval $[i_1\dd i_2]$ processed by $P_M$. Moreover, if $[i_1\dd i_2]$ was created while processing interval $[\tilde{\imath}^1\dd \tilde{\imath}^2]$, then we make it's associated node the parent of $[i_1\dd i_2]$. Notice $|\mathcal{S}_M|\le |T|$. Thus, we now try to bound the size of this recursion tree. 

For this we first claim that each leaf node in $T$ is associated with a pair of windows $(w_i,w'_i)$ such that $|w_i|+|w'_i|\ge s_1$. Let $[i_1\dd i_2]$ be the associated interval. Trivially $i_2-i_1\le 4s_1$. We further show $i_2-i_1\ge s_1$, thus proving our claim. 
As otherwise assume $i_2-i_1< s_1$. Let $[\tilde{\imath}^1\dd \tilde{\imath}^2]$ be the interval associated with the parent node. By construction $\tilde{\imath}^2-\tilde{\imath}^1>4s_1$. While processing this interval let us first consider the case where $k_1=0$. If there exists a pivot $p\in [\tilde{\imath}^1+s_1\dd \tilde{\imath}^1+2s_1]$ (or $\in [\tilde{\imath}^2-2s_1\dd \tilde{\imath}^2-s_1]$) then by construction $i_2-i_1\ge \tilde{\imath}^2-\tilde{\imath}^1-2s_1>2s_1$, and thus we get a contradiction. In the other case as the pivot $p$ is at a distance $>2s_1$ from both $\tilde{\imath}^1,\tilde{\imath}^2$, we have $i_2-i_1>2s_1$.
Next consider the case where $k_1\neq 0$. In this case if there exists a pivot $p\in [\tilde{\imath}^1+s_1\dd \tilde{\imath}^1+2s_1]$ (or $\in [\tilde{\imath}^2-2s_1\dd \tilde{\imath}^2-s_1]$) then a window pair covering an interval of length at most $3s_1$ is generated and we recurse on the rest of the interval. Thus, $i_2-i_1>s_1$. Otherwise, no such pivot exists total length of the window pair is $\le 2s_1$. Thus, again $i_2-i_1>2s_1$. In the last case as $k_1<i_1+s_1$ and $p\ge i_1+2s_1$, $p-k_1\ge w$. Similarly, we can show $M(k_1)-(p+1)\ge w$ and hence $i_2-i_1\ge s_1$.
Following the claim, the maximum number of leaf nodes in $T$ is at most $\frac{n}{s_1}$. 

Next we claim that the interval corresponding to every internal node satisfies one of the following: (i) either the node has degree 2 (ii) or while processing the corresponding interval, $P_M$ generates a pair of windows of total length at least $s_1$. Notice as we already argued that any two window pairs generated by $P_M$ are completely disjoint, combing this with the fact that the total number of leaf nodes in $T$ is at most $\frac{n}{s_1}$, we can prove $|T|=O(\frac{n}{s_1})$. Notice the claim is trivial from our construction as in every case except $4,8$, $P_M$ generates a pair of windows of total length at least $s_1$. On the other hand in case $4,8$ the corresponding nodes have degree 2.
\end{proof}

\paragraph{Capped large alignment window decomposition.}

After creating set of window pairs $\mathcal{S}_M$, we cap all the windows using set $\mathcal{J}$ to constructed set $\mathcal{\tilde{S}}_M$. Formally for each window pair $(w_i,w'_i)\in \mathcal{S}_M$ we add one window pair $(\tilde{w}_i,\tilde{w}'_i)$ to $\mathcal{\tilde{S}}_M$ where the starting index and the size of each window is a multiple of $\theta s_1$. Formally $s(\tilde{w}_i)=\lceil \frac{s(w_i)}{\theta s_1} \rceil \cdot \theta s_1+1$ and $e(\tilde{w}_i)=\lceil \frac{e(w_i)}{\theta s_1} \rceil \cdot \theta s_1$. 
If $s(\tilde{w}_i)>e(\tilde{w}_i)$, set $\tilde{w}_i=\emptyset$. Notice in this case $|w_i|=0$.
Similarly, we can define $\tilde{w}'_i$. 
Notice $\mathcal{\tilde{S}}_M\subseteq \mathcal{J}\times \mathcal{J}$.

\begin{lemma}\label{lem:cappedlargedecomposition}
Let $x$ be a string of length $n$. Given an alignment $M$ of $x$ and a parameter $s_1\in [1\dd n]$ let $\mathcal{\tilde{S}}_M$ be the set generated by capped large alignment window decomposition. We can make the following claims:

\begin{enumerate}
\item each window pair $(\tilde{w}_i,\tilde{w}'_i)\in \mathcal{\tilde{S}}_M$ satisfies $|\tilde{w}_i|,|\tilde{w}'_i|\le 5s_1$,
\item $\mathcal{\tilde{S}}_M$ is a consistent decomposition of $[1\dd n]$,
   \item $\sum_{(\tilde{w}_i,\tilde{w}'_i)\in \mathcal{\tilde{S}}_M} \dyck(x[\tilde{w}_i]\circ x[\tilde{w}'_i])\le \cost_M(x)+O(\theta n)$,
\item $|\mathcal{\tilde{S}}_M|=O(\frac{n}{s_1})$.
\end{enumerate}

\end{lemma}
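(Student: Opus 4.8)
The plan is to derive all four claims by transferring the corresponding properties of $\mathcal{S}_M$ from \cref{lem:decomposition} through the capping operation, controlling the error introduced by rounding endpoints up to multiples of $\theta s_1$. First I would record the elementary fact that for each pair $(w_i,w'_i)\in\mathcal{S}_M$ the capped windows satisfy $\tilde w_i\subseteq w_i$ and $\tilde w'_i\subseteq w'_i$ (the starting index only moves right and the ending index only moves right, but never past $e(w_i)$ beyond the last multiple of $\theta s_1$ inside $w_i$; if no such multiple exists the window becomes empty, which by construction forces $|w_i|=0$ since $\theta s_1\mid n$ guarantees the global grid is fine enough). Claim~1 is then immediate: $|\tilde w_i|\le|w_i|\le 4s_1\le 5s_1$, with the extra slack absorbing the single grid cell that rounding can add when $w_i$ straddles a grid point. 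Claim~4 is also immediate since $|\mathcal{\tilde S}_M|=|\mathcal{S}_M|=O(n/s_1)$ by \cref{lem:decomposition}(4), as capping is a one-to-one replacement of pairs.

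Next I would establish Claim~2. Since $\tilde w_i\cup\tilde w'_i\subseteq w_i\cup w'_i$ for every $i$, disjointness of the capped windows follows from disjointness of the original ones (\cref{lem:decomposition}(2)). For the non-crossing property, observe that $s(\tilde w_i)=\lceil s(w_i)/(\theta s_1)\rceil\theta s_1+1$ is a monotone function of $s(w_i)$, so the cyclic order of the starting indices is preserved (empty windows can be dropped without affecting non-crossingness); hence $\{(s(\tilde w_i),s(\tilde w'_i))\}$ is still a non-crossing matching. Together with $\bigcup_i(\tilde w_i\cup\tilde w'_i)\subseteq[1\dd n]$ this gives a consistent decomposition of the set it covers, which is what \cref{def:consistent} requires; the slight subtlety that the capped windows need not cover all of $[1\dd n]$ is harmless because consistency is defined relative to the covered set $S$.

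The heart of the proof is Claim~3, and this is where I expect the main work to lie. The strategy is: let $\bar M$ be the restriction of the optimal alignment $M$ to the index set $\mathcal{I}_M:=\bigcup_i(\tilde w_i\cup\tilde w'_i)$. For each $i$, since $\tilde w_i\cup\tilde w'_i\subseteq w_i\cup w'_i$ and, by \cref{lem:decomposition}(3), $M$ keeps the matches of $w_i\cup w'_i$ internal to $w_i\cup w'_i$, the pair $(x[\tilde w_i],x[\tilde w'_i])$ inherits an alignment of cost at most $\cost_M(x[w_i]\circ x[w'_i])$ plus the number of positions in $(w_i\cup w'_i)\setminus(\tilde w_i\cup\tilde w'_i)$ — and each such ``lost'' position contributes at most one unmatched character to the cost of the capped sub-instance (a position whose partner is deleted becomes unmatched, costing $1$). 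The number of lost positions per pair is at most $2\cdot 2\theta s_1=O(\theta s_1)$ (one grid cell of length $<\theta s_1$ is discarded at the left end and one at the right end of each of the two windows). Summing over the $O(n/s_1)$ pairs,
\[
\sum_{i}\dyck(x[\tilde w_i]\circ x[\tilde w'_i])\;\le\;\sum_i\cost_M(x[w_i]\circ x[w'_i])\;+\;O(\theta s_1)\cdot O(n/s_1)\;\le\;\cost_M(x)+O(\theta n),
\]
using \cref{lem:decomposition}(3) for the first sum. The delicate point to get right is the accounting: when a matched pair $(j,M(j))$ of $M$ has exactly one endpoint surviving the capping, that endpoint must be treated as unmatched in the capped instance, and I must verify that re-matching or deleting it costs at most $1$ per lost endpoint and that the bookkeeping of ``at most $O(\theta s_1)$ lost endpoints per pair'' is correct even when $|w_i|=0$ (in which case $\tilde w_i=\emptyset$ and there is nothing to lose). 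Once this per-pair estimate is in place, Claim~3 follows by summation, completing the lemma. I would then finish by noting that Claims 1–4, together with \cref{claim:sizelarge}, say exactly that $\mathcal{\tilde S}_M$ is the consistent window decomposition over $\mathcal{J}\times\mathcal{J}$ promised by \cref{lem:large} (choosing $M$ to be an optimal alignment so that $\cost_M(x)=\dyck(x)$).
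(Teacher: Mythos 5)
Your overall strategy --- pushing the four properties of $\mathcal{S}_M$ from \cref{lem:decomposition} through the capping map while controlling the rounding error --- matches the paper's. But there is a genuine gap: the assertion $\tilde{w}_i\subseteq w_i$ is false under the paper's capping. Both endpoints are rounded \emph{up}, $s(\tilde{w}_i)=\lceil s(w_i)/(\theta s_1)\rceil\theta s_1+1$ and $e(\tilde{w}_i)=\lceil e(w_i)/(\theta s_1)\rceil\theta s_1$, so $e(\tilde{w}_i)\ge e(w_i)$ and $\tilde{w}_i$ in general protrudes past $e(w_i)$. Your own parenthetical already says ``the ending index only moves right,'' which cannot coexist with $\tilde{w}_i\subseteq w_i$; and the paper's proof of Claim~1 bounds $|\tilde{w}_i|\le|w_i|+\theta s_1$, not $|\tilde{w}_i|\le|w_i|$, precisely because the window can grow.

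This false premise undermines your arguments for Claims~2 and~3. Disjointness cannot be inferred from a subset relation, because $\tilde{w}_i$ may intrude into positions originally belonging to the neighbouring window. The paper instead proves $e(w_i)<s(w_j)\Rightarrow e(\tilde{w}_i)<s(\tilde{w}_j)$ directly: since $\theta s_1\mid e(\tilde{w}_i)$ and $\theta s_1\mid (s(\tilde{w}_j)-1)$, they cannot be equal, and $e(\tilde{w}_i)>s(\tilde{w}_j)$ would force $e(\tilde{w}_i)-s(\tilde{w}_j)\ge\theta s_1-1$, which together with $e(\tilde{w}_i)-e(w_i)\le\theta s_1-1$ and $s(\tilde{w}_j)\ge s(w_j)$ yields $e(w_i)\ge s(w_j)$, a contradiction. (Your ``$\lceil\cdot\rceil$ is monotone'' remark for non-crossing is also too weak: weak monotonicity can collapse two distinct starts, so the disjointness argument has to come first.) For Claim~3 you charge only for \emph{lost} positions $(w_i\cup w'_i)\sm(\tilde{w}_i\cup\tilde{w}'_i)$; under the paper's capping there are also \emph{gained} positions $(\tilde{w}_i\cup\tilde{w}'_i)\sm(w_i\cup w'_i)$, each of which may become an extra unmatched symbol in $x[\tilde{w}_i]\circ x[\tilde{w}'_i]$. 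Both contributions are $\Oh(\theta s_1)$ per pair, so the $\Oh(\theta n)$ total survives, but the accounting must cover both. Had the capping rounded $e(\tilde{w}_i)$ \emph{down}, your proof would be correct (and arguably cleaner), but that is not the definition the lemma refers to.
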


\begin{proof}
For each window pair $(\tilde{w}_i,\tilde{w}'_i)\in \mathcal{\tilde{S}}_M$ there exists a corresponding window pair $({w}_i,{w}'_i)\in \mathcal{S}_M$ such that $|{w}_i|, |{w}'_i|\le 4s_1$. Now we define $\tilde{w}_i$ to be the window generated from ${w}_i$ where  $s(\tilde{w}_i)=\lceil \frac{s(w_i)}{\theta s_1} \rceil \cdot \theta s_1+1$ and $e(\tilde{w}_i)=\lceil \frac{e(w_i)}{\theta s_1} \rceil \cdot \theta s_1$. Thus, $|\tilde{w}_i|\le |w_i|+\theta s_1\le 5s_1$. Similarly, we can claim for $|\tilde{w}'_i|$.

\vspace {2mm}

First we claim for any pair of windows $(\tilde{w}_i,\tilde{w}'_i)$ and $(\tilde{w}_j,\tilde{w}'_j)$, $(\tilde{w}_i\cup \tilde{w}'_i)\cap (\tilde{w}_j\cup \tilde{w}'_j)=\emptyset$. We first show $\tilde{w}_i\cap \tilde{w}_j=\emptyset$. The other three options can be argued in a similar way. 
Let $w_i, w_j$ be the corresponding window in $\mathcal{S}_M$. Without loss of generality, assume $e(w_i)<s(w_j)$. We claim $e(\tilde{w}_i)<s(\tilde{w}_j)$. Otherwise, assume $e(\tilde{w}_i)>s(\tilde{w}_j)$. Note they can not be equal as $\theta s_1|e(\tilde{w}_i)$ and $\theta s_1|s(\tilde{w}_j)-1$. Thus, $e(\tilde{w}_i)-s(\tilde{w}_j)\ge \theta s_1-1$. Now by rounding $e(\tilde{w}_i)-e({w}_i)\le \theta s_1-1$. Thus, $e(w_i)\ge s(\tilde{w}_j)\ge s(w_j)$ and we get a contradiction.

For arguing consistency consider any two pair of windows $(w_i,w'_i)$ and $(w_j,w'_j)$. Now notice if any of the four windows are an empty set then the consistency follows trivially. Otherwise, it follows from the following fact that we claimed above. For any two windows $\tilde{w}_i,\tilde{w}_j$ in $\mathcal{\tilde{S}}_M$ if $w_i,w_j$ be the corresponding windows in $\mathcal{S}_M$ then $e(w_i)<s(w_j)$, implies $e(\tilde{w}_i)<s(\tilde{w}_j)$ and vice versa.

\vspace {2mm}

Let $(\tilde{w}_i,\tilde{w}'_i)$ be the window pairs generated from $(w_i,w'_i)\in \mathcal{S}_M$. Notice $(w_i\cup w'_i)\setminus (\tilde{w}_i\cup \tilde{w}'_i)\le 2 \theta s_1$. Thus, total number of indices $i \in w_i\cup w'_i$ such that either $i$ or $M(i)\in (w_i\cup w'_i)\setminus (\tilde{w}_i\cup\tilde{w}'_i)$ are at most $4 \theta s_1$. Thus, $\dyck(x[\tilde{w}_i]\circ x[\tilde{w}'_i])\le \cost_M(x[w_i]\circ x[w'_i]) +4\theta s_1$.
Moreover, as by \cref{lem:decomposition} $\cost_M(x)\ge \sum_{(w_i,w'_i)\in \mathcal{S}_M} \cost_M(x[w_i]\circ x[w'_i])$ and $|\mathcal{S}_M|=O(\frac{n}{s_1})$; thus
we get $\sum_{(\tilde{w}_i,\tilde{w}'_i)\in \mathcal{\tilde{S}}_M} \dyck(x[\tilde{w}_i]\circ x[\tilde{w}'_i])\le \cost_M(x)+O(\theta n)$.

\vspace {2mm}

The claim directly follows as $|\mathcal{\tilde{S}}_M|\le |\mathcal{S}_M|$. 
\end{proof}

Thus, given a string $x$ of length $n$, for any optimal alignment $M$, there exists a set $\mathcal{\tilde{S}}_M \subseteq \mathcal{J}\times \mathcal{J}$ such that $\mathcal{\tilde{S}}_M$ is a consistent decomposition of $[1\dd n]$ and 
$\sum_{(\tilde{w}_i,\tilde{w}'_i)\in \mathcal{\tilde{S}}_M} \dyck(x[\tilde{w}_i]\circ x[\tilde{w}'_i])\le \cost_M(x)+O(\theta n)$. 
As $M$ is optimal, this proves Lemma~\ref{lem:large}.

Next our objective is to find an estimation of $\dyck(x[\tilde{w}_i]\circ x[\tilde{w}'_i])$.
For this we further divide the large windows into smaller windows as follows. 


\paragraph{Small windows.} 
Given a string $x$ of length $n$ and a small window size parameter $s_2\in [1\dd n]$, where $s_2|s_1$,
we further introduce variable size windows $\mathcal{K}$:

\[\mathcal{K}=\{w\sub [1\dd n] : |w|\le 5s_2\text{ and }\theta s_2 \mid s(w)-1\text{ and }\theta s_2 \mid e(w)\}.\]

For each window $w\in \mathcal{J}$ let $\mathcal{K}_w \subseteq \mathcal{K}$ be the set consisting of all the windows in $\mathcal{K}$ that are part of interval $w$. The following claim directly follows from the construction.

\begin{claim}\label{claim:sizesmall}
$|\mathcal{K}|=O(\frac{n}{\theta^2 s_2})$
\end{claim}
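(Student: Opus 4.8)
The final statement to prove is \cref{claim:sizesmall}: that $|\mathcal{K}| = O(\frac{n}{\theta^2 s_2})$.

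\textbf{Approach.} This is a direct counting argument analogous to \cref{claim:sizelarge}. The plan is to count the windows in $\mathcal{K}$ by enumerating the admissible pairs of endpoints. Recall that each $w \in \mathcal{K}$ is an interval $w \sub [1\dd n]$ with $|w| \le 5s_2$, $\theta s_2 \mid s(w)-1$, and $\theta s_2 \mid e(w)$. Thus a window is completely determined by the pair $(s(w), e(w))$, and both the starting index (shifted by $1$) and the ending index range over multiples of $\theta s_2$.

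\textbf{Key steps.} First I would observe that $s(w)-1$ takes one of at most $\lceil n/(\theta s_2)\rceil + 1 = O(\frac{n}{\theta s_2})$ values among the multiples of $\theta s_2$ in $[0\dd n]$. Second, for each fixed choice of $s(w)$, the endpoint $e(w)$ must satisfy $s(w) \le e(w) \le s(w) + 5s_2 - 1$ (since $|w|\le 5s_2$) and also be a multiple of $\theta s_2$; the number of multiples of $\theta s_2$ in an interval of length $5s_2$ is at most $\lceil 5s_2/(\theta s_2)\rceil + 1 = O(\frac{1}{\theta})$ (using $\theta \le 1$, so $\theta s_2 \le s_2$ and the bound is $O(1/\theta)$). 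Multiplying, $|\mathcal{K}| = O\big(\frac{n}{\theta s_2}\big)\cdot O\big(\frac{1}{\theta}\big) = O\big(\frac{n}{\theta^2 s_2}\big)$, as claimed. An alternative phrasing is to note that $\mathcal{K}$ is defined exactly like $\mathcal{J}$ but with $s_2$ in place of $s_1$ (and the size cap $5s_2$ rather than $5s_1$), so the bound follows verbatim from the proof of \cref{claim:sizelarge}.

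\textbf{Main obstacle.} There is essentially no obstacle here; the only point requiring a small amount of care is making sure the constant hidden in $O(1/\theta)$ for the count of valid endpoints is genuinely $O(1/\theta)$ and not, say, $O(1)$ — this uses $\theta \le 1$, which holds by the hypothesis $\theta \in [n^{-1/34},1]$ in \cref{thm:gapapproxsubquadratic}. One should also note the edge effect that $e(w)$ could in principle exceed $n$; restricting to $e(w) \le n$ only decreases the count, so the upper bound is unaffected.

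\begin{proof}
Each window $w \in \mathcal{K}$ is an interval of $[1\dd n]$ and is therefore determined by the pair $(s(w), e(w))$ of its endpoints. By definition, $s(w)-1$ is a multiple of $\theta s_2$ lying in $[0\dd n]$, so there are at most $\lfloor n/(\theta s_2)\rfloor + 1 = O\big(\frac{n}{\theta s_2}\big)$ choices for $s(w)$. Once $s(w)$ is fixed, $e(w)$ must be a multiple of $\theta s_2$ satisfying $s(w) \le e(w) \le s(w) + 5s_2 - 1$, since $|w| = e(w) - s(w) + 1 \le 5s_2$. An interval of length $5s_2$ contains at most $\lfloor 5s_2/(\theta s_2)\rfloor + 1 = \lfloor 5/\theta\rfloor + 1 = O\big(\frac{1}{\theta}\big)$ multiples of $\theta s_2$, where we used $\theta \le 1$. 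Multiplying the two bounds gives $|\mathcal{K}| = O\big(\frac{n}{\theta s_2}\big)\cdot O\big(\frac{1}{\theta}\big) = O\big(\frac{n}{\theta^2 s_2}\big)$.
\end{proof}
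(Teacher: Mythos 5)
Your proof is correct and is exactly the counting argument the paper has in mind; the paper simply states that \cref{claim:sizesmall} (like \cref{claim:sizelarge}) ``directly follows from the construction'' without spelling it out, and you have supplied that calculation cleanly: $O(n/(\theta s_2))$ admissible start points times $O(1/\theta)$ admissible endpoints per start.
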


\noindent Next we show set $\mathcal{K}$ satisfies the following property.

\begin{lemma}\label{lem:small}
	Consider a string $x$ of length $n$, and parameters $s_1,s_2,\theta$ such that $\theta s_2 \mid \theta s_1$
	and $\theta s_1 \mid n$.
	For every window pair $(w,w')\in \mathcal{J}\times \mathcal{J}$,
	the set $w\cup w'$ can be decomposed into a consistent set of window pairs $\mathcal{S}\sub \mathcal{K}\times \mathcal{K}$ such that $\sum_{(q,q')\in \mathcal{S}} \dyck(x[q]\circ x[q']) \le \dyck(x[w]\circ x[w'])+\Oh(\theta |w\cup w'|)$.
\end{lemma}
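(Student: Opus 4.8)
The plan is to mirror the proof of \cref{lem:large} essentially verbatim, replacing the pair $(\mathcal{J},s_1)$ by $(\mathcal{K},s_2)$ and the ``global'' interval $[1\dd n]$ by the set $w\cup w'$ associated with a given window pair. Concretely, fix an optimal alignment $N$ of the string $x[w]\circ x[w']$, i.e., an alignment with $\cost_N(x[w]\circ x[w']) = \dyck(x[w]\circ x[w'])$. Since $w$ and $w'$ need not be adjacent in $[1\dd n]$, I would first reinterpret $N$ as a non-crossing matching on the index set $w\cup w'$ (concatenating the two windows), so that the notion of a pivot makes sense both inside $w$, inside $w'$, and ``straddling'' the seam between $w$ and $w'$. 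Then I would run the recursive process $P_N$ of the ``large alignment window decomposition'' on this concatenated instance with window-size parameter $s_2$ in place of $s_1$, producing a set $\mathcal{S}_N$ of window pairs.

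The next step is to invoke the analogue of \cref{lem:decomposition}: $\mathcal{S}_N$ is a consistent decomposition of (the concatenation of) $w\cup w'$, every window in it has length at most $4s_2$, $\cost_N(x[w]\circ x[w']) \ge \sum_{(q,q')\in \mathcal{S}_N}\cost_N(\cdot)$, and $|\mathcal{S}_N| = \Oh(|w\cup w'|/s_2)$. The proof of \cref{lem:decomposition} is purely combinatorial and only uses that the ambient interval has some length $L$ (here $L = |w\cup w'| \le 10 s_1$); it does not use $L=n$ anywhere, so it transfers directly. Then, exactly as in the ``capped large alignment window decomposition'' paragraph, I would round each window endpoint of each pair in $\mathcal{S}_N$ up to the nearest multiple of $\theta s_2$ (for starting indices, to a value $\equiv 1 \pmod{\theta s_2}$), obtaining $\widetilde{\mathcal{S}}_N$. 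The hypotheses $\theta s_2 \mid \theta s_1$ and $\theta s_1 \mid n$ guarantee that the endpoints of $w$ and $w'$ are already multiples of $\theta s_1$ and hence of $\theta s_2$, so the rounding stays within $w\cup w'$ and never crosses the window boundaries; this is what ensures $\widetilde{\mathcal{S}}_N \subseteq \mathcal{K}\times\mathcal{K}$ and that $\widetilde{\mathcal{S}}_N$ is still a consistent decomposition of $w\cup w'$. Finally, \cref{lem:cappedlargedecomposition}'s cost estimate carries over: each rounding discards at most $2\theta s_2$ indices per window, hence at most $4\theta s_2$ endpoints of matched pairs per pair in $\mathcal{S}_N$, so $\dyck(x[\tilde q]\circ x[\tilde q']) \le \cost_N(x[q]\circ x[q']) + 4\theta s_2$; summing over the $\Oh(|w\cup w'|/s_2)$ pairs gives total added cost $\Oh(\theta|w\cup w'|)$, and summing the left-hand sides against $\cost_N(x[w]\circ x[w']) = \dyck(x[w]\circ x[w'])$ yields the claimed bound.

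The only genuinely new point — and the step I expect to require the most care — is handling the ``seam'': in \cref{lem:large} the process $P_M$ operates on a single contiguous string, whereas here $x[w]\circ x[w']$ is a concatenation of two substrings that are not contiguous in $x$. I would make this rigorous by working throughout with positions in the concatenated string of length $|w|+|w'|$ (so ``interval $[i_1\dd i_2]$'' means an interval of this concatenated coordinate system), and only translating back to positions of $x$ at the very end when reading off the windows $q\in\mathcal{K}$; one has to check that the boundary between the copy of $w$ and the copy of $w'$ falls on a multiple of $\theta s_2$ (it does, by the divisibility hypotheses), so that a rounded window never spuriously spans both $w$ and $w'$ unless the original $\mathcal{S}_N$-window already did, and such a spanning window is fine since $x[q]$ is simply the corresponding substring of the concatenation. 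Apart from this bookkeeping, the argument is a direct copy of the $\mathcal{S}_M \rightsquigarrow \widetilde{\mathcal{S}}_M$ development, so I would state it by reference to \cref{lem:decomposition,lem:cappedlargedecomposition} rather than reproving the recursion analysis.
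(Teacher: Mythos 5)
Your argument is essentially the paper's proof of \cref{lem:small}. The paper likewise fixes an optimal alignment $\tilde{M}$ of $x[w]\circ x[w']$, runs the verbatim analogue $P_{\tilde{M}}^{(w,w')}$ of the large-window process on the concatenated index set $w\cup w'$ with parameter $s_2$, states \cref{lem:decompositionsmall,lem:cappedsmalldecomposition} as the analogues of \cref{lem:decomposition,lem:cappedlargedecomposition}, and concludes by optimality of $\tilde{M}$; so the overall structure and invocations match exactly.

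The one claim you make that is not actually correct as written — and which the paper itself passes over in silence — is the parenthetical that a seam-straddling window ``is fine since $x[q]$ is simply the corresponding substring of the concatenation.'' The lemma requires $\mathcal{S}\sub\mathcal{K}\times\mathcal{K}$, and $\mathcal{K}$ consists of \emph{intervals of $[1\dd n]$}. A window of the concatenated coordinate system that straddles position $|w|$ translates back to a union of a suffix of $w$ and a prefix of $w'$, which are two disjoint intervals of $[1\dd n]$ since $w$ and $w'$ need not be adjacent; this set is not an interval and so is not an element of $\mathcal{K}$. Hence merely re-reading $x[q]$ as a substring of the concatenation does not close the gap: one must argue that the recursive process (possibly after a harmless modification such as always cutting at the seam when case 1 fires, or splitting a straddling window at the seam and absorbing the bounded extra cost into the $\Oh(\theta|w\cup w'|)$ term) never leaves a window crossing position $|w|$ in the final $\mathcal{S}$. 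Your divisibility observation — that the seam sits at a multiple of $\theta s_2$ because $\theta s_2\mid\theta s_1\mid|w|$ — is exactly the ingredient such an argument needs, but on its own it prevents new straddling from the capping step, not straddling created by $P_{\tilde{M}}^{(w,w')}$ itself. Since the paper is equally silent on this point, your write-up is not worse than the paper's, but the specific justification you offer for the seam case should be replaced by an argument that seam-straddling windows do not actually occur.
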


\paragraph{Small alignment window decomposition.} 

Given a string $x$, 
with small window size parameter $s_2$, for each pair $(w,w')\in \mathcal{J}\times \mathcal{J}$, and each alignment $\tilde{M}$ of $x[w]\circ x[w']$, we use process $P_{\tilde{M}}^{(w,w')}$ that is similar to the process used in \emph{large alignment window decomposition} to further divide interval $[w\cup w']$ with respect to alignment $\tilde{M}$ and parameter $s_2$ for generating a set $\mathcal{S}_{\tilde{M}}^{(w,w')}$ containing pairs of windows. Similar to \cref{lem:decomposition} we can argue the following.

\begin{lemma}\label{lem:decompositionsmall}
Let $x$ be a string of length $n$.
For every window pair $(w,w')\in \mathcal{J}\times \mathcal{J}$, given an alignment $\tilde{M}$ of $x[w]\circ x[w']$
and parameters $s_1,s_2\in [1\dd n]$, let $\mathcal{S}_{\tilde{M}}^{(w,w')}$ be the set generated by process $P_{\tilde{M}}^{(w,w')}$. Thus, we can make the following claims.
\begin{enumerate}
    \item Each window pair $(q_i,q'_i)\in \mathcal{S}_{\tilde{M}}^{(w,w')}$ satisfies $|q_i|,|q'_i|\le 4s_2$,
    \item $\mathcal{S}_{\tilde{M}}^{(w,w')}$ is a consistent decomposition of $w\cup w'$,
    \item $\cost_{\tilde{M}}(x[w]\circ x[w'])\ge \sum_{(q_i,q'_i)\in \mathcal{S}_{\tilde{M}}^{(w,w')}} \cost_{\tilde{M}}(x[q_i]\circ x[q'_i])$,
    \item $|\mathcal{S}_{\tilde{M}}^{(w,w')}|=O(\frac{(|w|+|w'|)}{s_2})$.
\end{enumerate}
\end{lemma}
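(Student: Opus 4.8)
The plan is to derive \cref{lem:decompositionsmall} directly from \cref{lem:decomposition} rather than re-running the inductive argument. The key observation is that the process $P_{\tilde M}^{(w,w')}$ is nothing but the process $P_M$ of the large case, invoked on the string $y := x[w]\circ x[w']$ — whose $|w|+|w'|$ positions we identify, in order, with $[1\dd |w|+|w'|]$ — with the alignment $\tilde M$ of $y$ and the parameter $s_2$ in place of $s_1$. Indeed, every ingredient of $P_M$ (the thresholds $i_1+s_1$, $i_1+2s_1$, etc., the quantity $k_1$, the choice of the pivots $p$, and the recursive calls) refers only to the string currently being decomposed and to the fixed alignment, never to how $w\cup w'$ is embedded in $[1\dd n]$. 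Hence running $P_{\tilde M}^{(w,w')}$ and relabelling its output windows back to subintervals of $w\cup w'$ produces exactly the object that \cref{lem:decomposition} analyses.

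First I would state the reduction precisely: set $y = x[w]\circ x[w']$ and $n' = |y| = |w|+|w'|$, and apply \cref{lem:decomposition} with $(x,n,M,s_1)$ replaced by $(y,n',\tilde M,s_2)$. This needs only $s_2 \in [1\dd n']$, which we may assume (if $s_2 > n'$, then the single window pair $([1\dd \lfloor n'/2\rfloor],\, [\lfloor n'/2\rfloor + 1\dd n'])$ already satisfies all four claims trivially). \Cref{lem:decomposition} then yields, verbatim under the identification $w\cup w' \leftrightarrow [1\dd n']$: (1) every window in $\mathcal{S}_{\tilde M}^{(w,w')}$ has length at most $4s_2$; (2) $\mathcal{S}_{\tilde M}^{(w,w')}$ is a consistent decomposition of $[1\dd n']$, i.e.\ of $w\cup w'$; (3) $\cost_{\tilde M}(x[w]\circ x[w']) \ge \sum_{(q,q')\in \mathcal{S}_{\tilde M}^{(w,w')}} \cost_{\tilde M}(x[q]\circ x[q'])$; and (4) $|\mathcal{S}_{\tilde M}^{(w,w')}| = O(n'/s_2) = O((|w|+|w'|)/s_2)$. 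These are precisely the four assertions of \cref{lem:decompositionsmall}.

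The only point requiring care — and this is the main, albeit mild, obstacle — is to justify that the statement and proof of \cref{lem:decomposition} are invariant under replacing a string and its alignment by an isomorphic copy. Concretely, one must check that the notion of a \emph{pivot} (an index $p$ at which $\dyck(\cdot)$ splits additively and which is identifiable from the alignment) and the recursive invariant used implicitly in the proof of claim~3 (within each interval passed to a recursive call, $\tilde M$ matches no index to a position outside that interval) are properties of the pair $(y,\tilde M)$ alone. This is immediate, since $\dyck$, $\cost_{\tilde M}$, and the non-crossing property all depend only on the character sequence and the matching, not on any embedding into a larger string. With this in hand no further work is needed: the four claims of \cref{lem:decompositionsmall} are exactly the four claims of \cref{lem:decomposition}, transported along $w\cup w' \leftrightarrow [1\dd |w|+|w'|]$.
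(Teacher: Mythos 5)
Your proposal is correct and takes essentially the same route as the paper: the paper explicitly introduces $P_{\tilde M}^{(w,w')}$ as the process of the large decomposition applied to the interval $w\cup w'$ with alignment $\tilde M$ and parameter $s_2$, and then simply states that ``Similar to \cref{lem:decomposition} we can argue the following,'' leaving the lemma without a separate written proof. You have made that reduction explicit and checked the (harmless) translation-invariance point, which is exactly what the paper implicitly relies on.
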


\paragraph{Capped small alignment window decomposition.} 
For each $(w,w')\in \mathcal{J}\times \mathcal{J}$, after creating set $\mathcal{S}_{\tilde{M}}^{(w,w')}$, we cap all the windows using set $\mathcal{K}$ to construct set $\mathcal{\tilde{S}}_{\tilde{M}}^{(w,w')}$. Formally for each window pair $(p_i,p'_i)\in \mathcal{S}_{\tilde{M}}^{(w,w')}$ we add one window $(\tilde{p}_i,{\tilde{p}}'_i)$ to $\mathcal{\tilde{S}}_{\tilde{M}}^{(w,w')}$ where the starting index and the size of each window is a multiple of $\theta s_2$. Formally $s(\tilde{p}_i)=\lceil \frac{s(w_i)}{\theta s_2} \rceil\cdot \theta s_2+1$ and $e(\tilde{p}_i)=\lceil \frac{e(w_i)}{\theta s_2} \rceil\cdot \theta s_2$. Similarly, we can define for $\tilde{p}'_i$. Notice by construction $\mathcal{\tilde{S}}_{\tilde{M}}^{(w,w')}\subseteq \mathcal{K}\times \mathcal{K}$. Similar to \cref{lem:cappedlargedecomposition} we can argue the following.

\begin{lemma}\label{lem:cappedsmalldecomposition}
Let $x$ be a string of length $n$. 
For every window pair $(w,w')\in \mathcal{J}\times \mathcal{J}$, given an alignment $\tilde{M}$ of $x[w]\circ x[w']$
and parameters $s_1,s_2\in [1\dd n]$,
let $\mathcal{\tilde{S}}_{\tilde{M}}^{(w,w')}$ be the set generated by capped small alignment window decomposition. We can make the following claims:
\begin{enumerate}
\item Each window pair $(\tilde{q}_i,\tilde{q}'_i)\in \mathcal{\tilde{S}}_{\tilde{M}}^{(w,w')}$ satisfies $|\tilde{q}_i|,|\tilde{q}'_i|\le 5s_2$,
\item $\mathcal{\tilde{S}}_{\tilde{M}}^{(w,w')}$ is a consistent decomposition of $w\cup w'$,
   \item $\sum_{(\tilde{q}_i,\tilde{q}'_i)\in \mathcal{\tilde{S}}_{\tilde{M}}^{(w,w')}} \dyck(x[\tilde{q}_i]\circ x[\tilde{q}'_i])\le \cost_{\tilde{M}}(x[w]\circ x[w'])+O(\theta |w\cup w'|)$,
\item $|\mathcal{\tilde{S}}_{\tilde{M}}^{(w,w')}|=O(\frac{(|w|+|w'|)}{s_2})$.
\end{enumerate}

\end{lemma}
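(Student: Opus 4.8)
The plan is to run the argument of \cref{lem:cappedlargedecomposition} essentially verbatim, with $s_1$ replaced by $s_2$, the modulus $\theta s_1$ replaced by $\theta s_2$, the set $\mathcal{J}$ replaced by $\mathcal{K}$, and \cref{lem:decomposition} replaced by \cref{lem:decompositionsmall}. Recall that each window pair $(\tilde{q}_i,\tilde{q}'_i)\in\mathcal{\tilde{S}}_{\tilde{M}}^{(w,w')}$ is produced from a unique uncapped pair $(q_i,q'_i)\in\mathcal{S}_{\tilde{M}}^{(w,w')}$ by moving $s(q_i)$ up to the next integer congruent to $1$ modulo $\theta s_2$ and moving $e(q_i)$ up to the next multiple of $\theta s_2$ (and symmetrically for $q'_i$), replacing a window by $\emptyset$ whenever this would make it empty. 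In particular each endpoint is shifted by less than $\theta s_2$, and since $w,w'\in\mathcal{J}$ already have both endpoints at multiples of $\theta s_1$ — hence, as $s_2\mid s_1$, at multiples of $\theta s_2$ — the rounding of a window lying in $w$ (resp.\ in $w'$) never pushes an endpoint outside $w$ (resp.\ $w'$); thus $\mathcal{\tilde{S}}_{\tilde{M}}^{(w,w')}\sub\mathcal{K}\times\mathcal{K}$.

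For claim~1, I would note $|\tilde{q}_i|\le|q_i|+\theta s_2\le 4s_2+\theta s_2\le 5s_2$ using $|q_i|\le 4s_2$ from \cref{lem:decompositionsmall}(1) and $\theta\le 1$, and likewise for $\tilde{q}'_i$. Claim~4 is immediate from $|\mathcal{\tilde{S}}_{\tilde{M}}^{(w,w')}|\le|\mathcal{S}_{\tilde{M}}^{(w,w')}|=\Oh((|w|+|w'|)/s_2)$ by \cref{lem:decompositionsmall}(4). For claim~2, the key step is order preservation under capping: if $e(q_i)<s(q_j)$ then $e(\tilde{q}_i)<s(\tilde{q}_j)$, because $\theta s_2\mid e(\tilde{q}_i)$ while $\theta s_2\mid s(\tilde{q}_j)-1$ rules out equality, and if instead $e(\tilde{q}_i)>s(\tilde{q}_j)$ then $e(\tilde{q}_i)-s(\tilde{q}_j)\ge\theta s_2-1$, which together with $e(\tilde{q}_i)-e(q_i)\le\theta s_2-1$ forces $e(q_i)\ge s(q_j)$, a contradiction. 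Pairwise disjointness of the $\tilde{q}_i,\tilde{q}'_i$ and the non-crossing condition then follow from this, from \cref{lem:decompositionsmall}(2), and from the observation that empty windows are harmless.

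For claim~3, I would observe that capping removes at most $2\theta s_2$ indices from $q_i\cup q'_i$ and adds at most $2\theta s_2$ new ones, so restricting the alignment $\tilde{M}$ to $\tilde{q}_i\cup\tilde{q}'_i$ (dropping every pair with an endpoint outside and treating any newly included index as unmatched) yields an alignment of $x[\tilde{q}_i]\circ x[\tilde{q}'_i]$ of cost at most $\cost_{\tilde{M}}(x[q_i]\circ x[q'_i])+\Oh(\theta s_2)$; hence $\dyck(x[\tilde{q}_i]\circ x[\tilde{q}'_i])\le\cost_{\tilde{M}}(x[q_i]\circ x[q'_i])+\Oh(\theta s_2)$. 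Summing over $(q_i,q'_i)\in\mathcal{S}_{\tilde{M}}^{(w,w')}$ and using $\cost_{\tilde{M}}(x[w]\circ x[w'])\ge\sum_i\cost_{\tilde{M}}(x[q_i]\circ x[q'_i])$ (\cref{lem:decompositionsmall}(3)) together with $|\mathcal{S}_{\tilde{M}}^{(w,w')}|=\Oh((|w|+|w'|)/s_2)$, the accumulated overhead is $\Oh(\theta(|w|+|w'|))=\Oh(\theta|w\cup w'|)$, as required.

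The work here is bookkeeping rather than a new idea, so I do not expect a genuine obstacle; the only points needing care are the modular-arithmetic check underlying order preservation in claim~2 and, more importantly, confirming that the process $P_{\tilde{M}}^{(w,w')}$ really is the faithful analogue of $P_M$ and hence that \cref{lem:decompositionsmall} holds. That — rather than the capping — is where the substantive argument of \cref{lem:decomposition,lem:large} must be re-run, now with $|w\cup w'|$ (instead of $n$) as the length parameter and with the recursion operating on the concatenated string $x[w]\circ x[w']$ restricted to the $w$-part and the $w'$-part separately.
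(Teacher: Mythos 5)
Your proposal is correct and follows exactly the route the paper takes: the paper itself disposes of this lemma with the single sentence ``Similar to \cref{lem:cappedlargedecomposition} we can argue the following,'' and your write-up is precisely that argument instantiated with $s_1\mapsto s_2$, $\theta s_1\mapsto\theta s_2$, $\mathcal{J}\mapsto\mathcal{K}$, and \cref{lem:decomposition} replaced by \cref{lem:decompositionsmall}, including the same modular-arithmetic check for order preservation and the same $O(\theta s_2)$-per-pair bookkeeping summed against $|\mathcal{S}_{\tilde M}^{(w,w')}|=O((|w|+|w'|)/s_2)$. Your closing caveat — that the substantive work is really in verifying \cref{lem:decompositionsmall} for the process on the concatenated string $x[w]\circ x[w']$ — is a fair observation, but it concerns the previous lemma, not this one.
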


Thus, given a string $x$ of length $n$, for every window pair $(w,w')\in \mathcal{J}\times \mathcal{J}$ and for every optimal alignment $\tilde{M}$ of $x[w]\circ x[w']$ there exists a set $\mathcal{\tilde{S}}_{\tilde{M}}^{(w,w')} \subseteq \mathcal{K}\times \mathcal{K}$ such that $\mathcal{\tilde{S}}_{\tilde{M}}^{(w,w')}$ is a consistent decomposition of $w\cup w'$ and 
$\sum_{(\tilde{q}_i,\tilde{q}'_i)\in \mathcal{\tilde{S}}_{\tilde{M}}^{(w,w')}} \dyck(x[\tilde{q}_i]\circ x[\tilde{q}'_i])\le \cost_{\tilde{M}}(x[w]\circ x[w'])+O(\theta |w\cup w'|)$. 
As $\tilde{M}$ is optimal, this proves Lemma~\ref{lem:small}.

In the next section we design an algorithm that for each window pair $(w,w')\in \mathcal{J}\times \mathcal{J}$, either directly computes an estimation of $\dyck(x[w]\circ x[w'])$ or computes estimation of $\dyck(x[q]\circ x[q'])$ for each $(q,q')\in \mathcal{\tilde{S}}_{\tilde{M}}^{(w,w')}$, and then we combine these estimations using a dynamic program algorithm to estimate $\dyck(x)$.

\subsection{Estimating Cost of Window Pairs}
\label{sec:cost_est}

In this section we describe Algorithm CertifyingWindowPairs(), that computes sets of $\mathcal{W}_L\subseteq \mathcal{J}\times \mathcal{J}$ and $\mathcal{W}_S\subseteq \mathcal{K}\times \mathcal{K}$
such that for every certified pair the algorithm provides a constant approximation of its cost. 
Moreover, we show $\mathcal{W}_L \cup \mathcal{W}_S$ contains a subset
that forms a consistent decomposition of interval $[1\dd n]$ and sum of the cost of all the window pairs in the subset approximates $\dyck(x)$. The input to the algorithm is string $x$, set of large and small windows $\mathcal{J}$ and $\mathcal{K}$ and parameters $\theta \in[0,1]$, $s_1,s_2,\delta \in [1\dd n]$ and $\alpha >0$ where $\alpha$ is a constant (we set these parameters in Section~\ref{sec:gapproof}). The algorithm progress in phases $i=0,1,2,4,\dots,10s_2$, where at the $i$th phase it uses a cost threshold $c_i=i$. In each phase the algorithm calls three procedures; first DeclareSparse(), followed by CertifyingSmall() and lastly CertifyingLarge() with parameter $c_i$.

Algorithm DeclareSparse() given a density parameter $\delta$ as input, for each window $w\in \mathcal{K}$ checks if there exists at least $\delta$ other windows $w'\in \mathcal{K}$ such that $\dyck(x[w]\circ x[w'])$ is at most $c_i$. If not then $w$ is declared $(c_i,\delta)$-sparse and added to set $S$. This is done using random sampling. Moreover, instead of directly computing the cost, we use Dyck-Approx() that provides $\alpha$ approximation of $\dyck(x[w]\circ x[w'])$.

Algorithm CertifyingSmall() starts by selecting a dense pivot window $w\in \mathcal{K}\setminus S$, and then it computes two set $A_1, A_2$, where $A_1$ contains all windows $w_1\in \mathcal{K}$ such that the approximated cost of  $x[w_1]\circ \overline{x[w]}$ is at most $2\alpha c$ and set $A_2$ contains all windows $w_2\in \mathcal{K}$ such that the approximated cost of  $x[w]\circ x[w_2]$ is at most $3\alpha^2 c$. Next for each pair of windows $(w_1,w_2)\in A_1\times A_2$ a weighted window pair $(w_1,w_2,5\alpha^2 c)$ is added to set $\mathcal{W}_S$. Apart from this for each close window pairs $(w,w')\in \mathcal{K}\times \mathcal{K}$, i.e. $s(w')-s(w)\le 5s_1$, CertifyingSmall() computes an approximation cost of $x[w]\circ x[w']$ using algorithm Dyck-Approx()
and adds the corresponding weighted window pair to $\mathcal{W}_S$.

The last algorithm CertifyingLarge() takes the set of sparse window $S$ as input and then for every large window $w\in \mathcal{J}$, it samples a set of sparse windows $w_1$ from $\mathcal{K}_w\cap S$. Next for each sampled sparse window $w_1$ it identifies all windows $w_2\in \mathcal{K}$ such that the approximated cost of $x[w_1]\circ x[w_2]$ is at most $\alpha c$. For each such $w_2$ it then identifies all the large windows $\tilde{w}\in \mathcal{J}$ that contains $w_2$ and computes an approximation of cost of $x[w]\circ x[\tilde{w}]$ using algorithm Dyck-Approx() and adds the corresponding weighted window pair to $\mathcal{W}_L$.

At the end Algorithm CertifyingWindowPairs() outputs all the weighted window pairs in set $\mathcal{W}_L$ and $\mathcal{W}_S$. The algorithm uses two more global parameters $k_1$ and $k_2$ that we set to be large constants.


\paragraph{$(c,\delta)$-dense and sparse.}

Given a window $w\in \mathcal{K}$ and integer parameters $c,\delta >0$, we call window $w$ to be $(c,\delta)$-dense if there exists at least $\delta$ windows $w'\in \mathcal{K}$ such that 
$\dyck(x[w]\circ x[w'])\le c$. Otherwise, we call $w$ $(c,\delta)$-sparse.

\begin{algorithm}
		\KwIn{String $x$ of length $n$; set of windows $\mathcal{J},\mathcal{K}$; parameters $\theta\in [0,1]$, $\delta,s_1,s_2\in[n]$}
		
		\KwOut{A set $\mathcal{W}_L$ of certified window pairs in $\mathcal{J}\times \mathcal{J}$ and a set $\mathcal{W}_S$ of certified window pairs in $\mathcal{K}\times \mathcal{K}$}
		
		\ForEach{$i\in \{0,1,2,4,\dots,10s_2\}$}{
		 $c_i\gets i$\;
		
		$S\gets DeclareSparse(x,\mathcal{K},c_i,\theta,\alpha,\delta,s_1)$\;
		
         $\mathcal{W}_S\gets CertifiedSmall(x,\mathcal{K},S,c_i,\alpha,\delta,s_1)$\;
		
		 $\mathcal{W}_L\gets CertifiedLarge(x,\mathcal{J},\mathcal{K},S,c_i,\theta,\alpha,\delta,s_1,s_2)$\;

		}
		\Return{$\mathcal{W}_L$, $\mathcal{W}_S$}\;
		\caption{{CertifyingWindowPairs} $(x,\mathcal{J},\mathcal{K},\theta,\alpha,\delta,s_1,s_2)$}\label{alg:cover}
\end{algorithm}

\begin{algorithm}
		\KwIn{String $x$ of length $n$; set of windows $\mathcal{K}$; parameters $\theta \in [0,1]$, $c,\delta,s_1\in[n]$}
		\KwOut{Set $S$ of sparse windows in $\mathcal{K}$}
		$S \gets \emptyset$\;
		\ForEach{$w\in \mathcal{K}$}{		
		Sample $\frac{k_1|\mathcal{K}| \log n}{\delta}$ windows $w'\in \mathcal{K}$ uniformly at random and compute Dyck-Approx$(x[w]\circ x[w'])$\;
		\If{for at most $\frac{k_1}{2}\log n$ samples $w'$, Dyck-Approx$(x[w]\circ x[w'])\le \alpha c$}{
			Add $w$ to $S$\;
		}
		}
		\Return{$S$}\;
		\caption{{DeclareSparse} $(x,\mathcal{K},c,\alpha,\delta,s_2)$}\label{alg:checksparse}
\end{algorithm}

 \begin{algorithm}
		\KwIn{String $x$ of length $n$; set of windows $\mathcal{K},S$; parameters  $c,\delta,s_2\in[n]$}
		
		\KwOut{A set $\mathcal{W}_S$ of certified window pairs in $\mathcal{K}\times \mathcal{K}$}
			
		$\mathcal{W}_S\gets \emptyset$\;
		
		$\mathcal{L}\gets \mathcal{K}\setminus S$\;
		
		\While{$\mathcal{L}$ is nonempty}{
		
			Pick $w\in \mathcal{L}$\;
		
			$A_1\gets \{w_1\in \mathcal{K}$; Dyck-Approx$(x[w_1]\circ \overline{x[w]})\le 2\alpha c\}$\;
			$A_2\gets \{w_2\in \mathcal{K}$; Dyck-Approx$(x[w]\circ x[w_2])\le 3\alpha^2 c\}$\;
			Add $(w_1,w_2,5\alpha^2 c)$ to $\mathcal{W}_S$,  $\forall (w_1,w_2)\in A_1\times A_2$\;
			$\mathcal{L}\gets \mathcal{L}\setminus A_1$\;
		}
		
		\ForEach{$(w,w')\in \mathcal{K}\times \mathcal{K}$ such that $s(w')-s(w)\le 5s_1$}{
		
			$c\gets$ Dyck-Approx$(x[w]\circ x[w'])$\;
			Add $(w,w',c)$ to $\mathcal{W}_S$\;
		}
		
		\Return{$\mathcal{W}_S$}\;
		\caption{{CertifyingSmall} $(x,\mathcal{K},S,c,\alpha,\delta,s_2)$}\label{smallcertified}
\end{algorithm}

\begin{algorithm}
		\KwIn{String $x$ of length $n$; set of windows $\mathcal{J}$, $\mathcal{K}$, $S$; parameters $\epsilon,\theta \in [0,1]$, $c,\delta,s_1,s_2\in[n]$}
		
		\KwOut{A set $\mathcal{W}_L$ of certified window pairs in $\mathcal{J}\times \mathcal{J}$} 
		
		\ForEach{$w\in\mathcal{J}$}{
		
		Sample $\frac{k_2\log^2 n}{\theta^3}$ windows $w_1\in \mathcal{K}_w\cap S$ uniformly at random\;

		\ForEach{sampled $w_1$ and $w_2\in \mathcal{K}$}{
		
		\If{Dyck-Approx$(x[w_1]\circ x[w_2])\le \alpha c$}{
		
		\ForEach{$\tilde{w}\in \mathcal{J}$ such that $w_2\in \tilde{w}$}{
		
			$c'\gets$ Dyck-Approx$(x[w]\circ x[\tilde{w}])$\;
			
			Add $(w,\tilde{w},c')$ to $\mathcal{W}_L$\;

		}

		}
		
		}

		}

		\Return{$\mathcal{W}_L$}
		\caption{{CertifyingLarge} $(x,\mathcal{J},\mathcal{K},S,S,c,\theta,\alpha,\delta,s_1,s_2)$}\label{largecertified}
\end{algorithm}

\subsection{Correctness}

\begin{theorem}\label{thm:maincorrect}
Let $n$ be a sufficiently large power of $2$ and $\theta \in[\frac{1}{n},1]$ be a power of $2$ and $\epsilon>0$ a constant. Let $x$ be a string of length $n$. Let $s_1,s_2\in [\frac{1}{\theta}, n]$, $\delta \in[n]$ be a power of $2$ where $s_2|s_1$ and $s_1|n$ and $\frac{s_2}{s_1}\le \theta$.
Let $\mathcal{W}_L$ and $\mathcal{W}_S$ be the set of weighted window pairs generated by algorithm CertifyingWindowPairs. We claim every weighted window pair in $\mathcal{W}_L$ and $\mathcal{W}_S$ is correctly certified. With high probability,
there exists a consistent decomposition of $[1\dd n]$ into weighted window pairs $\{(w_1,w'_1,c_1),\dots,(w_\ell,w'_\ell,c_\ell)\}\subseteq \mathcal{W}_S\cup \W_L$ such that $\sum_{j=1}^\ell c_j \le 10 \alpha^2 \dyck(x)+O(\theta n)$.
\end{theorem}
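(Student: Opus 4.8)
The plan is to prove the two assertions separately, starting with certification, which is the easy one. Every weighted pair that \cref{alg:cover} outputs is produced either by a direct call to Dyck-Approx, in which case \cref{thm:approxdyck} guarantees that the weight already upper-bounds the pair's Dyck cost, or, inside CertifyingSmall, as a triple $(w_1,w_2,5\alpha^2 c)$ with $w_1\in A_1$, $w_2\in A_2$. In the latter case I would use that $A_1,A_2$ are defined through Dyck-Approx, so $\dyck(x[w_1]\circ\rev{x[w]})\le 2\alpha c$ and $\dyck(x[w]\circ x[w_2])\le 3\alpha^2 c$, and then invoke \cref{lem:triangle} with middle word $x[w]$ (using $\dyck(s)=\dyck(\rev s)$ so that $\dyck(x[w_1]\circ\rev{x[w]})=\dyck(x[w]\circ\rev{x[w_1]})$) to get $\dyck(x[w_1]\circ x[w_2])\le 2\alpha c+3\alpha^2 c\le 5\alpha^2 c$ since $\alpha\ge 1$.

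For the cheap consistent decomposition I would fix an optimal alignment $M$ of $x$ and feed it through the two-level window machinery already in place. \cref{lem:cappedlargedecomposition} turns $M$ into a consistent decomposition $\tilde{\mathcal S}=\{(\tilde w_i,\tilde w'_i)\}\sub\mathcal J\times\mathcal J$ of $[1\dd n]$ with $\sum_i\dyck(x[\tilde w_i]\circ x[\tilde w'_i])\le\dyck(x)+\Oh(\theta n)$ and $|\tilde{\mathcal S}|=\Oh(n/s_1)$, and \cref{lem:cappedsmalldecomposition}, applied to an optimal alignment of each $x[\tilde w_i]\circ x[\tilde w'_i]$, refines $\tilde w_i\cup\tilde w'_i$ into a consistent $\mathcal S^{(i)}\sub\mathcal K\times\mathcal K$ with $\sum_{(q,q')\in\mathcal S^{(i)}}\dyck(x[q]\circ x[q'])\le\dyck(x[\tilde w_i]\circ x[\tilde w'_i])+\Oh(\theta|\tilde w_i\cup\tilde w'_i|)$ and $|\mathcal S^{(i)}|=\Oh(s_1/s_2)$. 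I would classify each $(q,q')\in\mathcal S^{(i)}$ as a \emph{within pair} (both windows inside $\tilde w_i$, or both inside $\tilde w'_i$; then $0<s(q')-s(q)<5s_1$) or a \emph{cross pair} ($q\sub\tilde w_i$, $q'\sub\tilde w'_i$, since $\tilde w_i$ lies left of $\tilde w'_i$), and assign it the phase whose threshold $c_i$ is the least value in $\{0,1,2,4,\dots\}$ that is $\ge\dyck(x[q]\circ x[q'])\le 10s_2$, so $c_i\le 2\dyck(x[q]\circ x[q'])$ once the latter is positive.

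Then I would show, pair by pair, that $\mathcal W_S\cup\mathcal W_L$ contains a certified cover of $q\cup q'$ of weight $\le 10\alpha^2\dyck(x[q]\circ x[q'])$, except for a controlled leftover. A within pair satisfies $s(q')-s(q)<5s_1$, so the last loop of CertifyingSmall outputs $(q,q',\textrm{Dyck-Approx}(x[q]\circ x[q']))$, of weight $\le\alpha\dyck(x[q]\circ x[q'])$. A cross pair whose left window $q$ is dense in its phase is deleted from $\mathcal L$ by CertifyingSmall, necessarily as an element of $A_1(w)$ for the pivot $w$ of that iteration, so $\dyck(x[w]\circ\rev{x[q]})=\dyck(x[q]\circ\rev{x[w]})\le 2\alpha c_i$; combining with $\dyck(x[q]\circ x[q'])\le c_i$ via \cref{lem:triangle} gives $\dyck(x[w]\circ x[q'])\le(2\alpha+1)c_i\le 3\alpha c_i$, hence $q'\in A_2(w)$ and the triple $(q,q',5\alpha^2 c_i)$ is produced. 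The remaining cross pairs — left window sparse — form a set $\mathcal B_i$, which I would handle by a dichotomy on $|\mathcal B_i|$: using $|\mathcal S^{(i)}|=\Oh(s_1/s_2)$ and $|\mathcal K_{\tilde w_i}|=\Oh(s_1/(\theta^2 s_2))$, fix $\tau_0=\Theta(\theta s_1/(s_2\log n))$; if some phase contributes $\ge\tau_0$ bad pairs, then a $\Omega(\theta^3/\log n)$-fraction of the sparse windows of $\mathcal K_{\tilde w_i}$ has a Dyck-close partner inside $\tilde w'_i$, so with high probability the $\Theta(\theta^{-3}\log^2 n)$ samples of CertifyingLarge hit one and $(\tilde w_i,\tilde w'_i,\textrm{Dyck-Approx}(x[\tilde w_i]\circ x[\tilde w'_i]))\in\mathcal W_L$, which I would use to cover $\tilde w_i\cup\tilde w'_i$ at weight $\le\alpha\dyck(x[\tilde w_i]\circ x[\tilde w'_i])$; otherwise $|\mathcal B_i|=\Oh(\theta s_1/s_2)$ and I would drop the bad pairs, replacing each by singleton pairs $(q,\varnothing),(\varnothing,q')\in\mathcal W_S$ of weight $\le 5s_2$, an extra $\Oh(\theta s_1)$ per $i$. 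Refining a consistent decomposition by consistent decompositions of its parts (and splitting a pair into two singletons) preserves consistency, so this yields a consistent decomposition of $[1\dd n]$; summing the per-$i$ bounds and using $\sum_i\dyck(x[\tilde w_i]\circ x[\tilde w'_i])\le\dyck(x)+\Oh(\theta n)$ with $|\tilde{\mathcal S}|=\Oh(n/s_1)$ gives total weight $\le 10\alpha^2\dyck(x)+\Oh(\theta n)$, and a union bound over the polynomially many windows, window pairs and phases (each failing with probability at most $\mathrm{poly}(n)\cdot n^{-\Omega(k_1)}$ or $n^{-\Omega(k_2)}$) pushes the failure probability below $n^{-9}$ once $k_1,k_2$ are large enough.

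The hard part is exactly this probabilistic step: one must pin down $k_1,k_2,\delta$ and carefully match the sampling budgets $\Theta(\delta^{-1}|\mathcal K|\log n)$ of DeclareSparse and $\Theta(\theta^{-3}\log^2 n)$ of CertifyingLarge against $|\mathcal K|$, $|\mathcal K_{\tilde w}|$, $|\mathcal S^{(i)}|$ so that (i) every window is classified sparse/dense correctly in every phase, (ii) any large pair that accumulates many sparse cross pairs is caught by CertifyingLarge with high probability, and (iii) one union bound covers all relevant events. A minor but necessary sanity check is that the structural facts used above really hold: that a removed dense window lies in $A_1(w)$ for the pivot $w$ of its iteration, that $A_2(w)$ ranges over all of $\mathcal K$ (not just $\mathcal L$), and that the singleton window pairs absorbing dropped bad pairs are indeed present in $\mathcal W_S$.
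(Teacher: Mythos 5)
Your proposal follows essentially the same strategy as the paper: fix an optimal alignment $M$, pass it through the capped large decomposition of \cref{lem:cappedlargedecomposition} and then, per large pair $(\tilde w_i,\tilde w'_i)$, through the capped small decomposition of \cref{lem:cappedsmalldecomposition}; classify small pairs by phase, by within/cross, and by density of the left window; cover within pairs via the close-pair loop of CertifyingSmall, dense cross pairs via the pivoting step and \cref{lem:triangle}, and handle the remaining sparse cross pairs by the dichotomy (many $\Rightarrow$ CertifyingLarge catches the large pair; few $\Rightarrow$ charge $O(\theta s_1)$ per large pair), followed by a union bound. This is exactly the decomposition underlying \cref{prp:maincorrect} and \cref{claim:densecorrect}, and your certification argument (Dyck-Approx directly or via triangle inequality) matches the paper's.

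The one step where you genuinely diverge is in handling the few sparse bad cross pairs. The paper observes that at the last threshold $c=10s_2$ every window in $\mathcal K$ is $(c,\delta)$-dense (any pair of small windows has Dyck cost at most $10s_2$), so even the sparse bad pairs $(q,q')$ receive a certified tuple $(q,q',50\alpha^2 s_2)\in\mathcal W_S$ from that final phase; nothing is dropped. You instead propose to drop each bad pair and cover $q$ and $q'$ by singleton pairs $(q,\varnothing)$, $(\varnothing,q')$. Your route can be made to work, but the detail you flagged needs to be closed: those singletons do appear in $\mathcal W_S$, but only as output of the pivoting loop at the $c=10s_2$ phase, with weight $5\alpha^2\cdot 10s_2=50\alpha^2 s_2$ (not ``$\le 5s_2$'' as you wrote, though the $\Oh(\theta s_1)$ bookkeeping per large pair is unaffected). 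You should also verify that the definition of a consistent decomposition degrades gracefully when windows are empty, which the paper glosses over too. On balance the paper's use of the $c=10s_2$ phase is a bit cleaner since it sidesteps the singleton issue entirely, but both arguments yield the same $10\alpha^2\dyck(x)+\Oh(\theta n)$ bound. Everything else — the per-phase $\tau_0=\Theta(\theta s_1/(s_2\log n))$ cutoff, the sample-budget matching of DeclareSparse and CertifyingLarge against $|\mathcal K|$, $|\mathcal K_w|$, $|\mathcal S^{(i)}|$, and the union bound over windows, window pairs, and $\Oh(\log n)$ phases — coincides with the paper's \cref{lem:favsamp} and the proof of \cref{prp:maincorrect}.
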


We start with few definitions required to prove \cref{thm:maincorrect}. Consider a window pair $(w,w')\in \mathcal{J}\times \mathcal{J}$. Fix an optimal alignment $\tilde{M}$ of $x[w]\circ x[w']$. Let $\mathcal{\tilde{S}}_{\tilde{M}}^{(w,w')}$ be the set generated by applying the process of \emph{capped small alignment window decomposition} on the interval $[w\cup w']$ with respect to alignment $\tilde{M}$. Notice $\mathcal{\tilde{S}}_{\tilde{M}}^{(w,w')}\subseteq \mathcal{K}\times \mathcal{K}$ and by  construction for any window pair $(q,q')\in \mathcal{\tilde{S}}_{\tilde{M}}^{(w,w')}$, $|q|+|q'|\le 10s_2$. Thus, $\dyck(x[q]\circ x[q'])\le 10s_2$. 

\paragraph{Set $B_c^{(w,w')}$.}
For each $c\in \{0,1,2,4,\dots,10s_2\}$, we define set $B_c^{(w,w')}$ containing all the window pairs $(q,q')\in \mathcal{\tilde{S}}_{\tilde{M}}^{(w,w')}$ such that 
$\frac{c}{2} <\dyck(x[q]\circ x[q'])\le c$.

\noindent
\textbf{Set $B_{c,d}^{(w,w')}$ and $B_{c,s}^{(w,w')}$.}
We further partition set $B_c^{(w,w')}$ into two groups. Define set $B_{c,d}^{(w,w')}$ containing all the window pairs $(q,q')\in B_c^{(w,w')}$ such that $q$ is $(c,\delta)$-dense. 
Otherwise, we add $(q,q')$ to $B_{c,s}^{(w,w')}$.

\noindent
\textbf{Set $B_{c,s,w}^{(w,w')}$, $B_{c,s,w'}^{(w,w')}$ and $B_{c,s,(w,w')}^{(w,w')}$.}
Finally, we partition set $B_{c,s}^{(w,w')}$ into three groups. Define set $B_{c,s,w}^{(w,w')}$ containing all the window pairs $(q,q')\in B_{c,s}^{(w,w')}$ such that both $q,q'\subseteq w$.
Define set $B_{c,s,w'}^{(w,w')}$ containing all the window pairs $(q,q')\in B_{c,s}^{(w,w')}$ such that both $q,q'\subseteq w'$. Lastly define set $B_{c,s,(w,w')}^{(w,w')}$ containing all the window pairs $(q,q')\in B_{c,s}^{(w,w')}$ such that $q\subseteq w$ and $q'\subseteq w'$.

\paragraph{Favorable Sampling.} The algorithm uses random sampling twice. First in the \emph{DeclareSparse} algorithm to identify whether a window $w\in \mathcal{K}$ is dense or sparse and secondly in the algorithm \emph{CertifyingLarge} where for each $w\in \mathcal{J}$, we sample sparse windows from $\mathcal{K}_w\cap S$. We now define when do we admit these samplings to be favorable.

\begin{definition}
During a run of CertifyingWindowPairs for each $i=0,1,2,4,\dots, 10s_2$, of the FOR loop we call the samplings to be favorable if they satisfy the following.
\begin{itemize}
    \item In algorithm DeclareSparse for each $w\in \mathcal{K}$ if $w$ is $(c_i,\delta/4)-sparse$ then $w\in S$ and otherwise if $w$ is $(c_i,\delta)-dense$ then $w\notin S$.
    \item In the algorithm CertifyingLarge for every window pair $(w,w')\in \mathcal{J}\times \mathcal{J}$, 
    $|B_{c,s,(w,w')}^{(w,w')}|\ge \frac{\theta |w|}{s_2 \log n}$ then the algorithm samples a window $q\in \mathcal{K}_w\cap S$, such that $\exists q'\in \mathcal{K}_{w'}$ and $(q,q')\in B_{c,s,(w,w')}^{(w,w')}$.
\end{itemize}
\end{definition}

\begin{lemma}\label{lem:favsamp}
For large $n$, in any run of CertifyingWindowPairs the samplings are favorable with probability at least $1-\frac{1}{n^9}$.
\end{lemma}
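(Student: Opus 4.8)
The plan is to prove \cref{lem:favsamp} by a union bound over Chernoff/Bernstein concentration estimates, one for each invocation of random sampling. The only randomness in \textsc{CertifyingWindowPairs} comes from \textsc{DeclareSparse} and \textsc{CertifyingLarge}, each called once per level $c_i\in\{0,1,2,4,\dots,10s_2\}$, i.e.\ $O(\log n)$ times in total. Throughout I would use the guarantee of \cref{thm:approxdyck} that, on input $y$, Dyck-Approx returns a value in $[\dyck(y),\,\alpha\dyck(y)]$; consequently a window $w'$ with $\dyck(x[w]\circ x[w'])\le c$ is always counted as a ``good'' sample (its returned value is $\le\alpha c$), while a window with $\dyck(x[w]\circ x[w'])>\alpha c$ is never counted, and the notions of $(c,\delta)$-density and $(c,\delta/4)$-sparsity appearing in the favorable event are read against this $\alpha c$ threshold.

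For \textsc{DeclareSparse}, I would fix a level $c=c_i$ and a window $w\in\mathcal{K}$ and let $Z$ count, among the $N:=\tfrac{k_1|\mathcal{K}|\log n}{\delta}$ i.i.d.\ uniform samples $w'\in\mathcal{K}$, those declared good; then $\mathbb{E}[Z]=N\cdot\tfrac{\#\{w'\in\mathcal{K}\text{ good for }w\text{ at level }c\}}{|\mathcal{K}|}$. If $w$ is $(c,\delta)$-dense then $\mathbb{E}[Z]\ge \tfrac{N\delta}{|\mathcal{K}|}=k_1\log n$, so a one-sided Chernoff bound gives $\Pr[Z\le\tfrac{k_1}{2}\log n]\le e^{-k_1(\log n)/8}=n^{-k_1/8}$ (so $w\notin S$ w.h.p.); if $w$ is $(c,\delta/4)$-sparse then $\mathbb{E}[Z]\le\tfrac{k_1}{4}\log n$, and since the decision threshold $\tfrac{k_1}{2}\log n$ exceeds the mean by at least the mean itself, a Bernstein-type bound gives $\Pr[Z\ge\tfrac{k_1}{2}\log n]\le n^{-\Omega(k_1)}$ (so $w\in S$ w.h.p.). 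This step is the one I expect to need the most care: the decision threshold $\tfrac{k_1}{2}\log n$ must sit strictly inside the interval $[\tfrac{k_1}{4}\log n,\,k_1\log n]$ opened up by the factor-$4$ gap between the density parameters, and the sample size must carry the extra $\log n$ factor so that the resulting $n^{-\Omega(k_1)}$ failure probability can later absorb a polynomial union bound.

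For \textsc{CertifyingLarge}, conditioning on \textsc{DeclareSparse} having been favorable at level $c=c_i$, I would fix $w\in\mathcal{J}$ and recall that the procedure draws $\tfrac{k_2\log^2 n}{\theta^3}$ windows uniformly from $\mathcal{K}_w\cap S$ (with replacement, without loss of generality, which is the pessimistic case). For any $w'\in\mathcal{J}$, the pairs of $\mathcal{\tilde{S}}_{\tilde{M}}^{(w,w')}$ form a consistent decomposition (\cref{lem:cappedsmalldecomposition}), so their first coordinates are pairwise disjoint and $B^{(w,w')}_{c,s,(w,w')}$ contributes exactly $g:=|B^{(w,w')}_{c,s,(w,w')}|$ distinct ``good'' windows $q\subseteq w$; each such $q$ is sparse, hence lies in $S$ on the favorable event and so in $\mathcal{K}_w\cap S$. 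Using $|\mathcal{K}_w|=O(|w|/(\theta^2 s_2))$ (at most $O(|w|/(\theta s_2))$ admissible start positions times $O(1/\theta)$ admissible lengths) together with $|w|\le 5s_1$, whenever $g\ge \theta|w|/(s_2\log n)$ the good windows make up an $\Omega(\theta^3/\log n)$ fraction of $\mathcal{K}_w\cap S$, so all $\tfrac{k_2\log^2 n}{\theta^3}$ samples miss them with probability at most $\bigl(1-\Omega(\theta^3/\log n)\bigr)^{k_2(\log^2 n)/\theta^3}\le n^{-\Omega(k_2)}$.

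Finally, I would take a union bound over all atomic bad events: $O(|\mathcal{K}|\log n)=\mathrm{poly}(n)$ of them for \textsc{DeclareSparse} (one per pair $(c_i,w)$, $w\in\mathcal{K}$), and $O(|\mathcal{J}|^2\log n)=\mathrm{poly}(n)$ for \textsc{CertifyingLarge} (one per triple $(c_i,w,w')$, $w,w'\in\mathcal{J}$), using \cref{claim:sizelarge,claim:sizesmall}. Choosing $k_1$ and $k_2$ to be sufficiently large absolute constants makes each of them fail with probability at most $n^{-12}$, say, so for $n$ large the overall failure probability is at most $\mathrm{poly}(n)\cdot n^{-12}\le n^{-9}$, as claimed.
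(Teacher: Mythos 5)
Your proposal follows the paper's own proof structure closely: a Chernoff bound per window (for \textsc{DeclareSparse}) and per window pair (for \textsc{CertifyingLarge}), with the factor-$4$ gap between $\delta/4$ and $\delta$ feeding the concentration on the one side, the $|\mathcal{K}_w|=O(|w|/(\theta^2 s_2))$ bound feeding the $\Omega(\theta^3/\log n)$ hit fraction on the other, and a polynomial union bound (over $\Oh(\log n)$ levels and the windows/pairs counted by \cref{claim:sizelarge,claim:sizesmall}) absorbed by taking $k_1,k_2$ large. In several places you are actually more explicit than the paper, e.g.\ in spelling out how the $[\dyck,\alpha\dyck]$ guarantee of Dyck-Approx interacts with the $\alpha c$ test, and in noting that the consistent decomposition makes the first coordinates of $B^{(w,w')}_{c,s,(w,w')}$ pairwise disjoint.

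There is, however, one step that does not follow as written: \emph{``each such $q$ is sparse, hence lies in $S$ on the favorable event.''} Membership in $B^{(w,w')}_{c,s,(w,w')}$ only says that $q$ is not $(c,\delta)$-dense, but the \textsc{DeclareSparse} half of favorability only guarantees $q\in S$ when $q$ is $(c,\delta/4)$-sparse. A window whose number of $c$-close neighbors lies in the gray zone $[\delta/4,\delta)$ is in $B_{c,s}$ yet may legitimately be left out of $S$, so such $q$ is not a candidate in the sample drawn from $\mathcal{K}_w\cap S$. Without closing this gap, the $\Omega(\theta^3/\log n)$ fraction you compute is for $\mathcal{K}_w$, not for $\mathcal{K}_w\cap S$ restricted to the reachable good windows. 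To be fair, the paper's own proof elides exactly this (it writes ``sample $\ldots$ from $\mathcal{K}_w$'' although the algorithm samples from $\mathcal{K}_w\cap S$), so you are not alone; but a correct blind proof should either flag the discrepancy as an issue in the surrounding definitions, or resolve it concretely, e.g.\ by defining $B_{c,s}$ via $(c,\delta/4)$-sparsity (and checking that \cref{claim:densecorrect,prp:maincorrect} still go through), rather than by assertion. A related caution: your remark that $(c,\delta)$-density and $(c,\delta/4)$-sparsity ``are read against this $\alpha c$ threshold'' is a reinterpretation the paper does not make — the favorability statement and its downstream uses read them against $c$ — so if you do want to change the threshold you must re-verify those uses rather than silently substitute it.
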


\begin{proof}
In the \emph{DeclareSparse} algorithm consider a $w\in \mathcal{K}$ and some $c_i$. If $w$ is $(c_i,\delta)-dense$ then there exists at least $\delta$ other windows $w'\in \mathcal{K}$, such that $\dyck(x[w]\circ x[w'])\le c_i$. Let $D_R$ be this set of windows. As we sample $\frac{k_1|\mathcal{K}|\log n}{\delta}$ samples from right of $w$, for large constant $k_1$, using Chernoff bound probability that less than $\frac{k_1 \log n}{2}$ windows are sampled from $D_R$ is at most $1/n^{15}$. On the other hand if $w$ is $(c_i,\delta/4)-sparse$ then size of $D_R$ is at most $\delta/4$. Thus, probability that at least $\frac{k_1 \log n}{2}$ windows are sampled from $D_R$ is at most $1/n^{15}$. 
As there are $\log n$ choices for $i$ and $\tOh(\frac{n}{\theta s_2})$ choices for $w\in \mathcal{K}$ the first criteria is not satisfied with probability at most  $1/n^{10}$. 

In the \emph{CertifyingLarge} algorithm consider a window pair $(w,w')\in \mathcal{J}\times \mathcal{J}$ and some $c_i$. Notice $\mathcal{K}_w=O(\frac{|w|}{\theta^2 s_2})$. Moreover, as $|B_{c_i,s,(w,w')}^{(w,w')}|\ge \frac{\theta |w|}{s_2 \log n}$, if we sample $\frac{k_2 \log^2 n}{\theta^3}$ windows from $\mathcal{K}_w$, then for large constant $k_2$, using Chernoff bound probability that less than $\frac{k_2 \log n}{2}$ windows $q$ are sampled such that if $q'$ is the matching window of $q$ (w.r.t. $\tilde{M}$), then $(q,q')\notin B_{c_i,s,(w,w')}^{(w,w')}$ is at most $1/n^{15}$. 
As there are $\log n$ choices for $i$ and $\tOh(\frac{n^2}{\theta^2 s_1^2})$ choices for $(w,w')\in \mathcal{J}\times \mathcal{J}$ the second criteria is not satisfied with probability at most  $1/n^{10}$.
Thus, probability that the samplings are not favorable is at most $1/n^9$.

\end{proof}

For every window pair $(w,w')\in \mathcal{J}\times \mathcal{J}$, each optimal alignment $\tilde{M}$ of $x[w]\circ x[w']$ and each cost parameter $c\in\{0,1,2,4,\dots,10s_2\}$ we can make the following claim.

\begin{claim}\label{claim:densecorrect}
For each $c\in\{0,1,2,4,\dots,10s_2\}$, for each pair of windows $(q,q')\in B_c\setminus B_{c,s,(w,w')}^{(w,w')}$, there exists a weighted window pair $(q,q',c')\in \mathcal{W}_S$ such that $c'\le 10\alpha^2 \dyck(x[q]\circ x[q'])+O(1)$.
\end{claim}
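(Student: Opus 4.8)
The plan is to use the partition $B_c = B_{c,d}^{(w,w')}\cup B_{c,s,w}^{(w,w')}\cup B_{c,s,w'}^{(w,w')}\cup B_{c,s,(w,w')}^{(w,w')}$ and, since the last part is excluded from the statement, to treat the three remaining cases separately. Three facts will be used throughout: Dyck-Approx is an $\alpha$-approximation, so $\dyck(y)\le$~Dyck-Approx$(y)\le\alpha\dyck(y)$ for every string $y$; the involution $\rev{\cdot}$ preserves $\DYCK(\Sigma)$ and commutes with the edit operations, so $\dyck(\rev y)=\dyck(y)$ while $\rev{uv}=\rev v\,\rev u$ and $\rev{\rev y}=y$; and the triangle inequality of \cref{lem:triangle}. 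I will also use that $(q,q')\in B_c$ forces $c\le 2\dyck(x[q]\circ x[q'])$, so it suffices to exhibit a certified pair $(q,q',c')\in\mathcal{W}_S$ with $c'\le 5\alpha^2 c$.

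The two sparse cases $B_{c,s,w}^{(w,w')}$ and $B_{c,s,w'}^{(w,w')}$ should be immediate. In either of them $q$ and $q'$ lie in one common large window ($w$ or $w'$), of length at most $5s_1$; since every window pair produced by the decomposition lists its left window first, $0\le s(q')-s(q)\le 5s_1$, so $(q,q')$ is one of the ``close'' pairs swept in the final for-loop of CertifyingSmall. That loop inserts into $\mathcal{W}_S$ the pair $(q,q')$ with weight Dyck-Approx$(x[q]\circ x[q'])$, and Dyck-Approx$(x[q]\circ x[q'])\le\alpha\dyck(x[q]\circ x[q'])\le 5\alpha^2\dyck(x[q]\circ x[q'])$, as wanted.

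The dense case $(q,q')\in B_{c,d}^{(w,w')}$ is the heart of the matter. First I would invoke \cref{lem:favsamp}: since $q$ is $(c,\delta)$-dense it is not placed in $S$, hence $q\in\mathcal{K}\setminus S$ when CertifyingSmall runs the phase with threshold $c$; so at some iteration of the while-loop either $q$ becomes the pivot $w$, or $q$ is removed because $q\in A_1$ for some earlier pivot $w$. In both situations I want to show that the relevant pivot $w$ satisfies $q\in A_1$ \emph{and} $q'\in A_2$, which forces $(q,q',5\alpha^2 c)$ into $\mathcal{W}_S$. If $q=w$, then $q'\in A_2$ because Dyck-Approx$(x[q]\circ x[q'])\le\alpha c\le 3\alpha^2 c$, and $q\in A_1$ because, by \cref{lem:triangle} with middle word $\rev{x[q']}$ together with $\dyck(\rev s)=\dyck(s)$, $\dyck(x[q]\circ\rev{x[q]})\le\dyck(x[q]\circ x[q'])+\dyck(\rev{x[q']}\circ\rev{x[q]})=2\dyck(x[q]\circ x[q'])\le 2c$, so Dyck-Approx$(x[q]\circ\rev{x[q]})\le 2\alpha c$. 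If instead $q$ is removed via an earlier pivot $w$, then $q\in A_1$ gives $\dyck(x[q]\circ\rev{x[w]})\le 2\alpha c$, which rewrites (via $\rev{uv}=\rev v\,\rev u$ and $\dyck(\rev s)=\dyck(s)$) as $\dyck(x[w]\circ\rev{x[q]})\le 2\alpha c$; then \cref{lem:triangle} with middle word $x[q]$ yields $\dyck(x[w]\circ x[q'])\le\dyck(x[w]\circ\rev{x[q]})+\dyck(x[q]\circ x[q'])\le 2\alpha c+c\le 3\alpha c$, so Dyck-Approx$(x[w]\circ x[q'])\le 3\alpha^2 c$ and $q'\in A_2$. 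Finally I would note that, since $q$ starts in $\mathcal{K}\setminus S$ and the loop empties it (each iteration removing at least its pivot's set $A_1$, which contains $q$ whenever $q$ is the pivot, as just shown), one of the two situations must indeed occur.

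The step I expect to be the main obstacle is keeping the complement orientations straight across the invocations of \cref{lem:triangle}: that lemma is stated for words of the form $x\rev z$, so each application must first be put into this shape and then simplified with $\rev{\rev s}=s$ and $\rev{uv}=\rev v\,\rev u$, and the thresholds $2\alpha c$ and $3\alpha^2 c$ baked into CertifyingSmall are precisely what makes the resulting estimates close once $\alpha\ge 1$ is used. The remaining bookkeeping — the case split and the final rescaling $5\alpha^2 c\le 10\alpha^2\dyck(x[q]\circ x[q'])$ — is routine.
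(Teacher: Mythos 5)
Your proof is correct and follows essentially the same route as the paper: the same three-way case split of $B_c\setminus B_{c,s,(w,w')}^{(w,w')}$, the close-pairs sweep for the two sparse sub-cases, and the pivot/earlier-pivot dichotomy with \cref{lem:triangle} for the dense sub-case. You actually supply slightly more detail than the paper in a couple of places (the justification that $q\in A_1$ when $q$ itself is the pivot, and the explicit orientation bookkeeping via $\rev{uv}=\rev v\,\rev u$ and $\dyck(\rev y)=\dyck(y)$), but the argument is the same.
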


\begin{proof}

Notice $B_c\setminus B_{c,s,(w,w')}^{(w,w')}=B_{c,d}^{(w,w')}\cup B_{c,s,w}^{(w,w')}\cup B_{c,s,w'}^{(w,w')}$. Let us first consider the case where $(q,q')\in B_{c,s,w}^{(w,w')}$. As in this case both $s(q),s(q')\in w$, $s(q')-s(q)\le 5s_1$. Thus, in \cref{smallcertified}, Step 11, we call Dyck-Approx() on string $x[q]\circ x[q']$ which computes $\alpha$ approximation of $\dyck(x[q]\circ x[q'])$. Thus, \cref{smallcertified} adds a weighted window pair $(q,q',c')$ where $c'\le \alpha \dyck(x[q]\circ x[q'])$, and thus we prove our claim. Following a similar argument we can prove the case when $(q,q')\in B_{c,s,w'}^{(w,w')}$.

Next consider the case where $(q,q')\in B_{c,d}^{(w,w')}$.
For this we show that there exists a weighted window pair $(q,q',c')\in \mathcal{W}_S$ such that $c'\le 5\alpha^2 c$. Now as by definition $c/2< \dyck(x[q]\circ x[q'])\le c$, the claim follows.
By definition here $q$ is $(c,\delta)$-dense. Then assuming favorable sampling $w\notin S$. 
Thus, in \cref{smallcertified} either at Step 4 we select $q$ as a pivot and in this case $q\in A_1$ and $q'\in A_2$. Thus, \cref{smallcertified} adds a weighted window pair $(q,q',c')$ to $\mathcal{W}_S$ where $c'=5\alpha^2 c$, and thus we prove our claim. Otherwise, if $q$ is not selected as a pivot then there exists a pivot $\tilde{q}$, such that Dyck-Approx$(x[q]\circ \overline{x[\tilde{q}]})\le 2\alpha c$ and thus $q\in A_1$. But this implies $dist(x[q]\circ \overline{x[\tilde{q}]})\le 2\alpha c$. Now as $dist(x[q]\circ x[q'])\le c$, by triangle inequality we have $dist(x[\tilde{q}]\circ x[q'])\le 2\alpha c+c \le 3 \alpha c$. Thus, Dyck-Approx$(x[\tilde{q}]\circ x[q'])\le 3 \alpha^2 c$ and $q'\in A_2$. Hence, again \cref{smallcertified} adds a weighted window pair $(q,q',c')$ to $\mathcal{W}_S$ where $c'=5\alpha^2 c$, and thus we prove our claim.
\end{proof}

\begin{proposition}\label{prp:maincorrect}
Let $n$ be a sufficiently large power of $2$ and $\theta \in[\frac{1}{n},1]$ be a power of $2$ and $\epsilon>0$ a constant. Let $x$ be a string of length $n$. Let $s_1,s_2\in [\frac{1}{\theta}, n]$, $\delta \in[n]$ be a power of $2$ where $s_2|s_1$ and $s_1|n$
and $\frac{s_2}{s_1}\le \theta$. 
Let $\mathcal{W}_L$ and $\mathcal{W}_S$ be the set of weighted window pairs generated by algorithm CertifyingWindowPairs. We claim every weighted window pair in $\mathcal{W}_L$ and $\mathcal{W}_S$ is correctly certified. With high probability, for every window pair $(w,w')\in \mathcal{J}\times \mathcal{J}$, at least one of the following is satisfied:
\begin{itemize}
    \item there exists a weighted window pair $(w,w',c')\in \mathcal{W}_L$, such that $c'\le \alpha \dyck(x[w]\circ x[w'])$, or
    \item there exists a consistent decomposition of $w\cup w'$ into weighted window pairs $T^{(w,w')}=\{(w_1,w'_1,c_1),\allowbreak \dots,(w_\ell,w'_\ell,c_\ell)\}\subseteq \mathcal{W}_S$ such that $\sum_{j=1}^\ell c_j \le 10 \alpha^2 \dyck(x[w]\circ x[w'])+O(\theta s_1)$.
\end{itemize}
\end{proposition}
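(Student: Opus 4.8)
The plan is to combine the capped small-window decomposition of \cref{lem:cappedsmalldecomposition}, the certification guarantee of \cref{claim:densecorrect}, and the favorable-sampling event of \cref{lem:favsamp}. The deterministic half of the statement — that every weighted pair produced is correctly certified — is immediate: each weight is either a Dyck-Approx estimate, hence an upper bound on the corresponding Dyck edit distance by \cref{thm:approxdyck}, or the value $5\alpha^2 c$ attached to a pair $(w_1,w_2)\in A_1\times A_2$ inside CertifyingSmall, in which case $w_1\in A_1$ forces $\dyck(x[w_1]\circ\rev{x[w]})\le 2\alpha c$ and $w_2\in A_2$ forces $\dyck(x[w]\circ x[w_2])\le 3\alpha^2 c$, so \cref{lem:triangle} gives $\dyck(x[w_1]\circ x[w_2])\le 2\alpha c+3\alpha^2 c\le 5\alpha^2 c$.

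For the probabilistic half, I fix $(w,w')\in\mathcal{J}\times\mathcal{J}$, an optimal alignment $\tilde{M}$ of $x[w]\circ x[w']$, the decomposition $\mathcal{\tilde{S}}_{\tilde{M}}^{(w,w')}\sub\mathcal{K}\times\mathcal{K}$ from \cref{lem:cappedsmalldecomposition}, and the induced buckets $B_c^{(w,w')}$, $B_{c,d}^{(w,w')}$, $B_{c,s,w}^{(w,w')}$, $B_{c,s,w'}^{(w,w')}$, $B_{c,s,(w,w')}^{(w,w')}$. I condition on the favorable-sampling event of \cref{lem:favsamp}, which holds with probability at least $1-n^{-9}$ and simultaneously covers all pairs $(w,w')$. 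Then I split on whether some bucket is ``heavy''. If there is $c\in\{0,1,2,4,\dots,10s_2\}$ with $|B_{c,s,(w,w')}^{(w,w')}|\ge\frac{\theta|w|}{s_2\log n}$, favorable sampling guarantees that CertifyingLarge draws some $q_1\in\mathcal{K}_w\cap S$ for which there exists $q'\in\mathcal{K}_{w'}$ with $(q_1,q')\in B_{c,s,(w,w')}^{(w,w')}$; then $\dyck(x[q_1]\circ x[q'])\le c$, so the Dyck-Approx estimate of this quantity is at most $\alpha c$, the test in CertifyingLarge passes, and because $q'\sub w'$ the large window $w'$ is among the windows $\tilde{w}\supseteq q'$ that the procedure iterates over, so it inserts $(w,w',c')$ into $\mathcal{W}_L$ with $c'$ equal to the Dyck-Approx estimate of $\dyck(x[w]\circ x[w'])$, hence $c'\le\alpha\dyck(x[w]\circ x[w'])$; this is the first alternative.

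Otherwise every bucket obeys $|B_{c,s,(w,w')}^{(w,w')}|<\frac{\theta|w|}{s_2\log n}$, and I assemble $T^{(w,w')}$ out of $\mathcal{\tilde{S}}_{\tilde{M}}^{(w,w')}$: every piece $(q,q')\notin\bigcup_c B_{c,s,(w,w')}^{(w,w')}$ is replaced by the certified pair supplied by \cref{claim:densecorrect}, of cost at most $10\alpha^2\dyck(x[q]\circ x[q'])+\Oh(1)$ (and $0$ for zero-cost pieces), while each of the remaining ``heavy'' pieces $(q,q')$ — at most $\Oh(\theta|w|/s_2)$ of them in total over the $\Oh(\log n)$ buckets — is discarded and its two windows are instead covered separately, e.g.\ by splitting each into a pair of adjacent $\mathcal{K}$-windows, which the adjacency loop of CertifyingSmall certifies at cost $\Oh(s_2)$ per discarded piece since $|q|,|q'|\le 5s_2$. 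Keeping a pair of $\mathcal{\tilde{S}}_{\tilde{M}}^{(w,w')}$ verbatim or refining the two atomic windows of a pair internally cannot introduce a crossing, so $T^{(w,w')}$ is a consistent decomposition of $w\cup w'$. Finally, $\sum_j c_j\le 10\alpha^2\sum_{(q,q')\in\mathcal{\tilde{S}}_{\tilde{M}}^{(w,w')}}\dyck(x[q]\circ x[q'])+\Oh(|\mathcal{\tilde{S}}_{\tilde{M}}^{(w,w')}|)+\Oh(\theta|w|)$; by item~(3) of \cref{lem:cappedsmalldecomposition} and optimality of $\tilde{M}$ the first term is $10\alpha^2(\dyck(x[w]\circ x[w'])+\Oh(\theta s_1))$, whereas $|\mathcal{\tilde{S}}_{\tilde{M}}^{(w,w')}|=\Oh(s_1/s_2)=\Oh(\theta s_1)$ because $s_2\ge 1/\theta$ and $|w|,|w'|\le 5s_1$, and the heavy-piece term is $\Oh(\theta|w|)=\Oh(\theta s_1)$, yielding $\sum_j c_j\le 10\alpha^2\dyck(x[w]\circ x[w'])+\Oh(\theta s_1)$.

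I expect the final case to be the crux: one has to verify that discarding the heavy small-window pairs and re-covering their windows stays within the $\Oh(\theta s_1)$ budget \emph{and} preserves a consistent decomposition of $w\cup w'$, which is precisely where the threshold $\tfrac{\theta|w|}{s_2\log n}$, the size bound $|\mathcal{\tilde{S}}_{\tilde{M}}^{(w,w')}|=\Oh(s_1/s_2)$, and the hypothesis $s_2\ge 1/\theta$ must be balanced against each other. A smaller but genuine subtlety is reconciling the sparsity thresholds ($\delta$ versus $\delta/4$) across \cref{claim:densecorrect}, \cref{lem:favsamp}, and the definition of $B_{c,s}^{(w,w')}$, so that every window appearing in a heavy piece actually lies in the sampled set $S$.
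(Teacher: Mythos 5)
Your proof closely tracks the paper's: the certification argument (Dyck-Approx upper bounds plus \cref{lem:triangle} to justify the $5\alpha^2 c$ weights) and the first case (a heavy bucket $B_{c,s,(w,w')}^{(w,w')}$ forces CertifyingLarge to deposit $(w,w',c')$ into $\mathcal{W}_L$ with $c'\le\alpha\dyck(x[w]\circ x[w'])$) are exactly what the paper does, and you arrive at the right budget $\Oh(\theta s_1)$ in the second case from the right ingredients — $|\mathcal{\tilde{S}}_{\tilde{M}}^{(w,w')}|=\Oh(s_1/s_2)\le\Oh(\theta s_1)$, the threshold $\theta|w|/(s_2\log n)$, and item~(3) of \cref{lem:cappedsmalldecomposition}.

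Where you diverge is exactly the step you yourself flag as the crux: handling the sparse-and-split pairs $(q,q')\in\bigcup_c B_{c,s,(w,w')}^{(w,w')}$. The paper does \emph{not} discard them. It observes that in the $i=10s_2$ round of CertifyingWindowPairs every $q\in\mathcal{K}$ is $(10s_2,\delta)$-dense, because $\dyck(x[q]\circ x[q''])\le|q|+|q''|\le 10s_2$ for \emph{every} $q''\in\mathcal{K}$ and the parameter ranges guarantee $|\mathcal{K}|\ge\delta$; hence CertifyingSmall at that round already places the exact pair $(q,q',c')$ in $\mathcal{W}_S$ with $c'\le 5\alpha^2\cdot 10s_2=\Oh(s_2)$. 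Keeping all of $\mathcal{\tilde{S}}_{\tilde{M}}^{(w,w')}$ verbatim, the at most $\Oh(\theta|w|/s_2)$ such pairs contribute $\Oh(\theta s_1)$, and consistency is automatic. Your discard-and-re-cover alternative has a genuine gap: a $\mathcal{K}$-window $q$ with $|q|=\theta s_2$ cannot be split into two nonempty $\mathcal{K}$-windows, and a pair $(q,\emptyset)$ is not obviously generated by the adjacency loop of CertifyingSmall (the condition $s(w')-s(w)\le 5s_1$ has no clear meaning for an empty $w'$, and the paper never fixes a convention). Replace the re-covering argument with the paper's top-cost-level observation and the proof goes through. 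Your side remark about the $\delta$ versus $\delta/4$ sparsity-threshold mismatch is a legitimate observation about the paper's own setup; the paper silently absorbs it into the favorable-sampling event, and you should treat \cref{lem:favsamp} as a black box in the first case rather than attempt to re-derive it.
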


\begin{proof}
First we claim that each weighted window pair in $\mathcal{W}_L$ and $\mathcal{W}_S$ is correctly certified. For each weighted window pair in $\mathcal{W}_L$ the cost is estimated by algorithm Dyck-Approx and thus is at least the actual cost of the corresponding window pairs. Next for each weighted window pair $(q,q',c')\in \mathcal{W}_S$ there exists $\tilde{q}$ such that $\dyck(x[q]\circ \overline{x[\tilde{q}]})\le c_1$ and $\dyck(x[\tilde{q}]\circ x[q'])\le c_2$ where $c_1+c_2=c$. Thus, by triangle inequality $\dyck(x[q]\circ x[q'])\le c$.

Consider a pair of windows $(w,w')\in \mathcal{J}\times \mathcal{J}$. Fix an optimal alignment $\tilde{M}$ of $x[w]\circ x[w']$. Let $\mathcal{\tilde{S}}_{\tilde{M}}^{(w,w')}$ be the set generated by applying the process of \emph{capped small alignment window decomposition} on the interval $[w\cup w']$ with respect to alignment $\tilde{M}$. We divide our analysis into two cases. First consider the case where $\exists c$ such that $|B_{c,s,(w,w')}^{(w,w')}|\ge \frac{\theta |w|}{s_2\log n}$. In this case assuming favorable sampling \cref{largecertified} samples a window $q\in \mathcal{K}_w\cap S$, such that $\exists q'\in \mathcal{K}_{w'}$ where $(q,q')\in B_{c,s,(w,w')}^{(w,w')}$. But this implies $\dyck(x[q]\circ x[q'])\le c$ and thus algorithm Dyck-Approx certifies that $\dyck(x[q]\circ x[q'])\le \alpha c$. Hence, \cref{largecertified} calls Dyck-Approx on $x[w]\circ x[w']$ which provides an estimation $c'$ where $c'\le \alpha \dyck(x[w]\circ x[w'])$.

Next consider the case where $\forall c$, $|B_{c,s,(w,w')}^{(w,w')}|< \frac{\theta |w|}{s_2\log n}$. 
Notice for $c=10s_2$, if we consider any window pair $(q,q')\in \mathcal{\tilde{S}}_{\tilde{M}}^{(w,w')}$, then $q$ is $(c,\delta)$-dense. Thus, following this we can argue that for each window pair $(q,q')\in B_{c,s,(w,w')}^{(w,w')}$, there exists a weighted window pair $(q,q',c')\in \mathcal{W}_S$ such that $c'\le 5\alpha^2c=50\alpha^2 s_2$. As $|B_{c,s,(w,w')}^{(w,w')}|< \frac{\theta |w|}{s_2\log n}$ and 
 there are $O(\log n)$ different values of $c$, total estimated cost of all windows in $B_{c,s,(w,w')}^{(w,w')}$ is at most $\frac{\theta |w|}{s_2\log n}\cdot \log n \cdot 50\alpha^2 s_2=O(\theta s_1)$ (as $|w|\le 5s_1$). 
 Formally we claim 
 \begin{equation*}
   \sum_{c}\sum_{\substack{(q,q')\in B_{c,s,(w,w')}^{(w,w')} \\ (q,q',c')\in \mathcal{W}_S}} c'=O(\theta s_1)  
 \end{equation*}

 On the other hand, following \cref{claim:densecorrect}, for each $c$ we can ensure that, for each window pair $(q,q')\in B_c\setminus B_{c,s,(w,w')}^{(w,w')}$, there exists a weighted window pair $(q,q',c')\in \mathcal{W}_S$ such that $c'\le 10\alpha^2 \dyck(x[q]\circ x[q'])+O(1)$. By definition for any two $c_i,c_j\in \{0,1,2,\dots,10s_2\}$, if $c_i\neq c_j$ then $B_{c_i}\cap B_{c_j}=\emptyset$ and for each $c_i$, $B_{c_i}\subseteq \mathcal{\tilde{S}}_{\tilde{M}}^{(w,w')}$. Thus, 

\begin{align*}
   \sum_{c}\sum_{\substack{(q,q')\in B_c\setminus B_{c,s,(w,w')}^{(w,w')} \\ (q,q',c')\in \mathcal{W}_S}} c'&\le \sum_{c}\sum_{(q,q')\in B_c\setminus B_{c,s,(w,w')}^{(w,w')}} (10\alpha^2 \dyck(x[q]\circ x[q'])+O(1))\\
   & \le \sum_{(q,q')\in \mathcal{\tilde{S}}_{\tilde{M}}^{(w,w')}} (10\alpha^2 \dyck(x[q]\circ x[q'])+O(1)) \\
    & \le 10 \alpha^2\sum_{(q,q')\in \mathcal{\tilde{S}}_{\tilde{M}}^{(w,w')}} \dyck(x[q]\circ x[q'])+O(|\mathcal{\tilde{S}}_{\tilde{M}}^{(w,w')}|)\\
 & \le 10 \alpha ^2 \cost_{\tilde{M}}(x[w]\circ x[w'])+O(\theta s_1) \\
 & = 10 \alpha ^2 \dyck(x[w]\circ x[w'])+O(\theta s_1)
 \end{align*}
 Moreover, as $\mathcal{\tilde{S}}_{\tilde{M}}^{(w,w')}=\cup_{c\in\{0,1,2,4,\dots,8s_2\}}B_c^{(w,w')}$, together with above two equations
  we can claim that for each pair of windows in $(q,q')\in \mathcal{\tilde{S}}_{\tilde{M}}^{(w,w')}$, there is a tuple $(q,q',c')\in \mathcal{W}_S$ such that the sum of the cost of all the tuples is at most $10 \alpha ^2 \dyck(x[w]\circ x[w'])+O(\theta s_1)$.
 
Thus, we set $T^{(w,w')}=\{(q,q',c'); (q,q')\in \mathcal{\tilde{S}}_{\tilde{M}}^{(w,w')}, (q,q',c')\in \mathcal{W}_S\}$. Notice by definition of $\mathcal{\tilde{S}}_{\tilde{M}}^{(w,w')}$, the window pairs in $T$ induces a consistent decomposition of $w\cup w'$. This completes the proof of Proposition~\ref{prp:maincorrect}.
\end{proof}

\begin{proof}[Proof of Theorem~\ref{thm:maincorrect}]
Fix an optimal alignment $M$ of $x$. Let $\mathcal{\tilde{S}}_M$ be the set generated by applying the process of capped large alignment window decomposition on the interval $[1\dd n]$.
Notice by construction the window pairs of $\mathcal{\tilde{S}}_M$ provides a consistent decomposition of $[1\dd n]$.
For each window pair $(w,w')\in \mathcal{\tilde{S}}_M$ by Proposition~\ref{prp:maincorrect} either there exists a certified tuple $(,w',c)\in \mathcal{W}_L$ such that $c\le \alpha \dyck(x[w]\circ x[w'])$ or here exists a consistent decomposition of $w\cup w'$ into weighted window pairs $T^{(w,w')}=\{(w_1,w'_1,c_1),\allowbreak \dots,(w_\ell,w'_\ell,c_\ell)\}\subseteq \mathcal{W}_S$ such that $\sum_{j=1}^\ell c_j \le 10 \alpha^2 \dyck(x[w]\circ x[w'])+O(\theta s_1)$. Combining this with Lemma~\ref{lem:cappedlargedecomposition} that says $\sum_{(w,w')\in \mathcal{\tilde{S}}_M} \dyck(x[w]\circ x[w'])\le cost_M(x)+O(\theta n)$ we prove Theorem~\ref{thm:maincorrect}. Note as $M$ is an optimal alignment of $x$, $\dyck(x)=\cost_M(x)$.
\end{proof}

\subsection{Running Time Analysis}

Let $t(|w_1|+|w_2|)$ be the running time of the Dyck-Approx$(x[w_1]\circ x[w_2])$ that computes $\alpha$ approximation of Dyck edit distance of a string of length $(|w_1|+|w_2|)$. Here $\alpha\ge 1$ is a constant. 

\begin{theorem}\label{thm:maintime}
Let $n$ be a sufficiently large power of $2$ and $\theta \in[\frac{1}{n},1]$ be a power of $2$ and $\epsilon>0$ a constant. Let $x$ be a string of length $n$. Let $s_1,s_2\in [\frac{1}{\theta}\dd n]$, $\delta \in[n]$ be a power of $2$ where $s_2|s_1$ $s_1|n$
and $\frac{s_2}{s_1}\le \theta$. 
\begin{enumerate}
\item The size of set $(\mathcal{W}_L\cup \mathcal{W}_S)$ output by \emph{CertifyingWindowPairs} algorithm is $\tOh(\frac{n^2}{\theta^4 s_2^2})$.
\item Any run that satisfies favorable sampling, \emph{CertifyingWindowPairs} runs in time
\[\tOh\left((|\mathcal{W}_L|+|\mathcal{W}_S|)+\tfrac{n^2}{\theta ^4 \delta s_2^2}\cdot t(10s_2)+ \tfrac{ns_1}{\theta^4 s_2^2}\cdot t(10s_2)+\tfrac{n^2}{\theta^7 s_1s_2}\cdot t(10s_2)+\tfrac{n\delta}{\theta^7 s_1}\cdot t(10s_1)\right)\]
\end{enumerate}

\end{theorem}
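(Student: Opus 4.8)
The plan is to settle the size bound by pure counting and then to bound the running time phase by phase. Every weighted pair ever inserted into $\mathcal{W}_S$ lies in $\mathcal{K}\times\mathcal{K}$, and for a fixed pair $(q,q')$ the algorithm attaches only weights of the form $5\alpha^2 c_i$ or $\text{Dyck-Approx}(x[q]\circ x[q'])$, so at most $\Oh(\log s_2)$ distinct weighted copies of it appear (one per iteration $i$ of the outer loop of \cref{alg:cover}). Hence $|\mathcal{W}_S|=\Oh(|\mathcal{K}|^2\log s_2)=\tOh(n^2/(\theta^4 s_2^2))$ by \cref{claim:sizesmall}, and symmetrically $|\mathcal{W}_L|=\Oh(|\mathcal{J}|^2\log s_2)=\tOh(n^2/(\theta^4 s_1^2))$ by \cref{claim:sizelarge}, which is dominated by the former since $s_2\mid s_1$ forces $s_2\le s_1$. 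For the running time I would condition on the favorable-sampling event of \cref{lem:favsamp} and bound, over the $\Oh(\log s_2)$ phases, the cost of \emph{DeclareSparse}, \emph{CertifyingSmall}, and \emph{CertifyingLarge}; every Dyck-Approx call inside the loop is on a string of length $\le 10 s_2$ (a concatenation of two $\mathcal{K}$-windows) or $\le 10 s_1$ (two $\mathcal{J}$-windows), costing $t(10 s_2)$ or $t(10 s_1)$, and every ``add to $\mathcal{W}$'' step is charged to the $\Oh(|\mathcal{W}_L|+|\mathcal{W}_S|)$ term.

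For \cref{alg:checksparse}, each of the $|\mathcal{K}|$ windows issues $\Oh(\delta^{-1}|\mathcal{K}|\log n)$ Dyck-Approx calls on length-$\le 10 s_2$ strings, giving $\tOh(\delta^{-1}|\mathcal{K}|^2\,t(10 s_2))=\tOh(n^2\theta^{-4}\delta^{-1}s_2^{-2}\,t(10 s_2))$, the second term. For \cref{smallcertified}, the concluding \texttt{for}-loop over pairs $(w,w')\in\mathcal{K}\times\mathcal{K}$ with $s(w')$ close to $s(w)$ has, for each $w$, only $\Oh(s_1/(\theta^2 s_2))$ partners (the start is one of $\Oh(s_1/(\theta s_2))$ multiples of $\theta s_2$ and the length one of $\Oh(1/\theta)$ multiples of $\theta s_2$), so it costs $\tOh(n s_1\theta^{-4}s_2^{-2}\,t(10 s_2))$, the third term. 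The \texttt{while}-loop is the delicate part: I would use \cref{lem:favsamp} to ensure that every $w$ drawn from $\mathcal{L}=\mathcal{K}\setminus S$ fails to be $(c,\delta/4)$-sparse, then invoke \cref{lem:triangle} to argue that the set $A_1$ it produces (essentially the ball of radius $2\alpha c$ around $x[w]$ under $d(u,v):=\dyck(u\rev v)$) has size $\Omega(\delta)$, so that removing $A_1$ from $\mathcal{L}$ shrinks it geometrically and the loop runs $\Oh(|\mathcal{K}|/\delta)$ times; each iteration spends $\Oh(|\mathcal{K}|\,t(10 s_2))$ building $A_1,A_2$, again summing to $\tOh(n^2\theta^{-4}\delta^{-1}s_2^{-2}\,t(10 s_2))$, plus $\Oh(|A_1|\,|A_2|)$ recording pairs, absorbed into $\Oh(|\mathcal{W}_S|)$.

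For \cref{largecertified}, fix $w\in\mathcal{J}$: it samples $\Oh(\theta^{-3}\log^2 n)$ windows $w_1\in\mathcal{K}_w\cap S$, and for each sampled $w_1$ and each $w_2\in\mathcal{K}$ runs one Dyck-Approx on a length-$\le 10 s_2$ string, contributing $\Oh(|\mathcal{J}|\,\theta^{-3}\log^2 n\,|\mathcal{K}|)$ calls, i.e.\ $\tOh(n^2\theta^{-7}s_1^{-1}s_2^{-1}\,t(10 s_2))$, the fourth term. The innermost block fires only when $\text{Dyck-Approx}(x[w_1]\circ x[w_2])\le\alpha c$; since $w_1\in S$, the Chernoff estimate behind \cref{lem:favsamp} (DeclareSparse tests exactly the threshold $\alpha c$) shows that only $\Oh(\delta)$ windows $w_2$ pass this test, and each lies in only $\Oh(1/\theta^2)$ windows $\tilde w\in\mathcal{J}$ (one of $\Oh(1/\theta)$ rounded starts times one of $\Oh(1/\theta)$ rounded ends), each costing one Dyck-Approx on a length-$\le 10 s_1$ string; summing over $w$, the sampled $w_1$, the good $w_2$, and the enclosing $\tilde w$ gives $\Oh(|\mathcal{J}|\,\theta^{-3}\log^2 n\cdot\delta\cdot\theta^{-2})=\tOh(n\delta\theta^{-7}s_1^{-1}\,t(10 s_1))$, the fifth term. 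Multiplying by the $\Oh(\log s_2)$ phases is absorbed by $\tOh(\cdot)$, and summing the five contributions yields the claimed bound.

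The hard part will be the two charging arguments: that each \texttt{while}-iteration of \emph{CertifyingSmall} deletes $\Omega(\delta)$ windows (so the loop is $\Oh(|\mathcal{K}|/\delta)$, not $\Oh(|\mathcal{K}|)$), and that the innermost test of \emph{CertifyingLarge} succeeds for only $\Oh(\delta)$ values of $w_2$ per pair $(w,w_1)$. Both hinge on correctly coupling the density/sparsity dichotomy guaranteed by the favorable-sampling event (\cref{lem:favsamp}) with the triangle inequality (\cref{lem:triangle})---the latter being what lets a single ``hub'' window certify an entire ball and what controls how the thresholds $c,\alpha c,2\alpha c,3\alpha^2 c$ propagate---while keeping track that Dyck-Approx over-estimates by at most the constant factor $\alpha$. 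The remaining ingredients are the elementary window-counting estimates above, which stay within the advertised $\mathrm{poly}(1/\theta)$ factors precisely because $\theta s_1\mid n$, $\theta s_2\mid\theta s_1$, and $s_2/s_1\le\theta$.
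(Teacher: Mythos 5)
Your phase-by-phase accounting matches the paper's and all of the arithmetic (window counts, Dyck-Approx call lengths, the separate handling of the final \texttt{for}-loop of \emph{CertifyingSmall}, the two-level nesting in \emph{CertifyingLarge}) lands on the right five terms. But the argument you sketch for the \texttt{while}-loop of \emph{CertifyingSmall} has a genuine gap: you claim that each iteration removes $\Omega(\delta)$ elements because $A_1$, the $d$-ball of radius $2\alpha c$ around $x[w]$ where $d(u,v)=\dyck(u\rev{v})$, has size $\Omega(\delta)$. That does not follow from $w\notin S$. Favorable sampling gives you that $w$ is not $(c,\delta/4)$-sparse, i.e.\ there are at least $\delta/4$ windows $w'$ with $\dyck(x[w]\circ x[w'])\le c$; but that is a lower bound on $|\{w':\ d(x[w],\rev{x[w']})\le c\}|$, a \emph{different} ball. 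If, say, $x[w]$ is all opening parentheses, the many $w'$ certifying its density are (near) reverse complements of $x[w]$, none of which need lie in $A_1$: a window $w_1\in A_1$ must satisfy $\dyck(x[w_1]\rev{x[w]})\le 2\alpha c$, i.e.\ $x[w_1]$ must be close to $x[w]$ itself, not to $\rev{x[w]}$. So $|A_1|$ can be $O(1)$ even when $w$ is very dense, and the "removes $\Omega(\delta)$ per iteration" charging fails.

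The correct charging is indirect and does not lower-bound $|A_1|$ at all. Suppose $w$ and $w'$ are two distinct windows ever picked as pivots. Since $w'$ survived the removal of $w$'s $A_1$, we have $\text{Dyck-Approx}(x[w']\circ\rev{x[w]})>2\alpha c$, hence $\dyck(x[w']\rev{x[w]})>2c$. Now apply \cref{lem:triangle} with $y=\rev{x[\tilde w]}$: if some $\tilde w$ satisfied both $\dyck(x[w]x[\tilde w])\le c$ and $\dyck(x[w']x[\tilde w])\le c$, then $\dyck(x[w']\rev{x[w]})\le\dyck(x[w']x[\tilde w])+\dyck(x[w]x[\tilde w])\le 2c$, a contradiction. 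So the density balls $\mathcal{B}(w)=\{\tilde w\in\mathcal{K}:\dyck(x[w]x[\tilde w])\le c\}$ are pairwise disjoint over pivots, and each has size $\ge\delta/4$ because $w\notin S$; hence at most $4|\mathcal{K}|/\delta$ pivots. The loop is therefore $\Oh(|\mathcal{K}|/\delta)$ iterations \emph{not} because $\mathcal{L}$ shrinks geometrically or by $\Omega(\delta)$ per step, but because the pivots carve out disjoint $\delta/4$-size pieces of $\mathcal{K}$ under a \emph{different} ball than $A_1$. Everything else in your proposal --- including your sharper observation that \emph{DeclareSparse} and \emph{CertifyingLarge} test the identical threshold $\alpha c$, so the sparsity of $w_1$ can be transferred verbatim to bound the number of $w_2$ passing the inner test --- is sound and matches the paper's accounting.
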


\begin{proof}
 To bound $|\mathcal{W}_S|$ notice by \cref{claim:sizesmall} $|\mathcal{K}|=\tOh(\frac{n}{\theta^2 s_2})$. As $\mathcal{W}_S\subseteq \mathcal{K}\times \mathcal{K}$, we get $|\mathcal{W}_S|=\tOh(\frac{n^2}{\theta^4 s_2^2})$. Similarly, by \cref{claim:sizelarge} $|\mathcal{J}|=\tOh(\frac{n}{\theta^2 s_1})$. As $\mathcal{W}_L\subseteq \mathcal{J}\times \mathcal{J}$, $|\mathcal{W}_L|=\tOh(\frac{n^2}{\theta^4 s_1^2})$. As $s_1>s_2$ we get the required bound.
 
 Next we analyze the running time. The \emph{CertifyingWindowPairs} algorithm outputs a certified window pair in $O(1)$ time, thus the total time required is $O(|\mathcal{W}_L|+|\mathcal{W}_S|)$. Next the algorithm 
 calls three subroutines, \emph{DeclareSparse}, \emph{CertifyingLarge} and \emph{CertifyingSmall} with $O(\log 10s_2)=O(\log n)$ different values of $c_i$. We next compute the running taken of each subroutine for a single call.
 
 We start by computing the running time of \emph{DeclareSparse} that a set of sparse windows $S$. To compute $S$, for each window $w\in \mathcal{K}$ we sample $\tOh(\frac{|\mathcal{K}|}{\delta})$ (here $k_1>0$ is a constant) windows $w'$ from $\mathcal{K}$ and for each pair $(w,w')$ we run the Dyck-Approx algorithm on string $x[w]\circ x[w']$ that takes time $t(|w|+|w'|)\le t(10s_2)$ as $|w|,|w'|\le 5s_2$. For each $w\in \mathcal{K}$ we call Dyck-Approx $\tOh(\frac{|\mathcal{K}|}{\delta})$ times, thus the time required to process each window is $\tOh(\frac{|\mathcal{K}|}{\delta }\cdot t(10s_2))$. Hence, all the windows in $\mathcal{K}$ can be processed in time $\tOh(\frac{|\mathcal{K}|^2}{\delta }\cdot t(10s_2))$. 
 As $|\mathcal{K}|=\tOh(\frac{n}{\theta^2 s_2})$ and $|S|\le |\mathcal{K}|$, \emph{DeclareSparse} runs in time $\tOh(\frac{n^2}{\theta^4 \delta s_2^2}\cdot t(10s_2))$.
 
 Next we analyze \emph{CertifyingSmall}. Let us first compute the time required to process one pivot window $w\in \mathcal{L}=\mathcal{K}\setminus S$. For this to compute set $A_1$ consider each window $w_1\in \mathcal{K}$ and call Dyck-Approx algorithm with input $x[w_1]\circ \overline{x[w]}$ that runs in time $t(|w_1|+|w|)\le t(10s_2)$. Thus, total time required to construct set $A_1$ for one pivot is $|\mathcal{K}|\cdot t(10s_2)$. Similarly, we can construct set $A_2$ in time $|\mathcal{K}|\cdot t(10s_2)$. Thus, total time required to process one pivot window of $\mathcal{L}$ is $\tOh(|\mathcal{K}|\cdot t(10s_2))$. Next we bound the number of windows that are selected as pivot before we process all the windows in $\mathcal{L}$. 
 Notice after processing a pivot window $w$ we remove set $A_1$ from $\mathcal{L}$. Thus, if $w,w'$ be two different pivots selected then Dyck-Approx$(x[w']\circ \overline{x[w]})>2\alpha c$.
 But this implies $\dyck(x[w']\circ \overline{x[w]})>2c$.
 Thus, the set $\mathcal{B}_R(w)=\{\tilde{w}\in \mathcal{K}; \dyck(x[w]\circ x[\tilde{w}])\le c\}$ are disjoint for $w$ and $\tilde{w}$. By favorable sampling $|\mathcal{B}_R(w)|\ge \frac{\delta}{4}$. Thus, total number of iterations required to process all windows is $\tOh(\frac{4|\mathcal{K}|}{\delta})$. Hence, we can process all windows in $\mathcal{L}$ in time $\tOh(\frac{|\mathcal{K}|^2}{\delta}\cdot t(10s_2))=\tOh(\frac{n^2}{\theta^4 \delta s_2^2}\cdot t(10s_2))$.
  Next for each $w\in \mathcal{K}$, the algorithm considers all windows $w'\in \mathcal{K}$ such that $s(w')-s(w)\le 5s_1$ and calls Dyck-Approx on $x([w]\circ [w'])$. Each such call takes time $t(10s_2)$. As for each $w$ there are $\tOh(\frac{s_1}{\theta^2 s_2})$ different choices for $w'$, total time required is $\tOh(\frac{ns_1}{\theta^4 s_2^2}\cdot t(10s_2))$.
 
 Lastly we analyze the running time of \emph{CertifyingLArge}. To process a single window $w\in \mathcal{J}$ we start by sampling $\frac{k_2\log^2 n}{\theta^3}$ windows $w_1$ from $\mathcal{K}_w\cap S$ (where $k_2>0$ is a constant). For each such sampled window $w_1$ we identify all windows $w_2\in \mathcal{K}$ such that Dyck-Approx$(x[w_1]\circ x[w_2])\le \alpha c$. Each such checking can be done in time $t(|w_1|+|w_2|)=t(10s_2)$ as $|w_1|,|w_2|\le 5s_2$. Hence, total time required to identify all windows $w_2$ at a small distance from $w_1$ is $|\mathcal{K}|\cdot t(10s_2)$. Moreover, assuming favorable sampling as $w_1\in \mathcal{K}_w\cap S$, there exists at most $\delta$ such $w_2$ at small distance. Hence, for all sampled $\frac{k_2 \log^2 n}{\theta^3}$ windows $w_1$ we can identify all close windows $w_2$ in time $\tOh(\frac{|\mathcal{K}|}{\theta^3}\cdot t(10s_2))$. Next for a sampled window $w_1$ and a window $w_2$ close to it, we consider each $\tilde{w}\in \mathcal{J}$ such that $w_2\subseteq \tilde{w}$. Notice there are at most $\tOh(\frac{1}{\theta^2})$ such $\tilde{w}$. For each $\tilde{w}$, we again call the Dyck-Approx algorithm with input $x[w]\circ x[\tilde{w}]$ that takes time $t(|w|+|\tilde{w}|)=t(10s_1)$ as $|w|,|\tilde{w}|\le 5s_1$. Hence, we can check for all $\tilde{w}$ in time $\tOh(\frac{1}{\theta^2}\cdot t(10s_1))$. Thus, for a fixed sampled $w_1$ we can process all $\delta$ close windows $w_2$ in time $\tOh(\frac{\delta}{\theta^2}\cdot t(10s_1))$. Total time required to compute the extension of all $\frac{k_2 \log^2 n}{\theta^3}$ sampled windows $\tOh(\frac{\delta}{\theta^5}\cdot t(10s_1))$. Thus, we can process all $\frac{k_2\log^2 n}{\theta^3}$ sampled windows $w_1\in \mathcal{K}_w\cap S$ in time $\tOh(\frac{\delta}{\theta^5}\cdot t(10s_1)+\frac{|\mathcal{K}|}{\theta^3}\cdot t(10s_2))$. Thus, the total running time to process a single window $w\in \mathcal{J}$ is $\tOh(\frac{\delta}{\theta^5}\cdot t(10s_1)+\frac{|\mathcal{K}|}{\theta^3}\cdot t(10s_2))$. Hence, total running time of \emph{CertifyingLarge} is $\tOh(|\mathcal{J}|(\frac{\delta}{\theta^5}\cdot t(10s_1)+\frac{|\mathcal{K}|}{\theta^3}\cdot t(10s_2)))= \tOh(\frac{n\delta }{\theta^7 s_1}\cdot t(10s_1)+\frac{n^2}{\theta^7 s_1 s_2}\cdot t(10s_2))$.
\end{proof}

\subsection{Dynamic Program}

\begin{proposition}
	Given an integer $n\ge 1$ and a set $\W$ of weighted window pairs,
	one can minimize the total weight $\sum_{j=1}^\ell c_j$ over all consistent decompositions of $[1\dd n]$ into weighted window pairs $\{(w_1,w'_1,c_1),\ldots,(w_\ell,w'_\ell,c_\ell)\}\subseteq \W$
	in $\tOh(|C|^3+ |\W|)$ time, where $C = \bigcup_{(w,w',c)\in \W} \{s(w)-1,e(w),s(w')-1,s(w')\}$.
\end{proposition}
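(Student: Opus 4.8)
The plan is to set up a one-dimensional dynamic program over the "coordinate set" $C$, which records all the positions that can serve as a left or right endpoint of any window appearing in $\W$. First I would observe that in any consistent decomposition $\{(w_1,w'_1,c_1),\ldots,(w_\ell,w'_\ell,c_\ell)\}$ of $[1\dd n]$, the $2\ell$ windows are pairwise disjoint and their left-endpoints-minus-one form a non-crossing matching with the right endpoints; hence the intervals $[s(w_i)\dd e(w_i)]$ and $[s(w'_i)\dd e(w'_i)]$ nest in a laminar fashion. Concretely, for a window pair $(w,w')$ with $e(w)<s(w')$, everything strictly between $e(w)$ and $s(w')$ must itself be decomposed recursively, and everything is "enclosed" by this pair in the usual Dyck-like sense. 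So I would define $\mathrm{Opt}(a,b)$, for $a,b\in C\cup\{0,n\}$ with $a\le b$, as the minimum total weight of a consistent decomposition of the interval $(a\dd b]$ using window pairs from $\W$ whose union is exactly $(a\dd b]$, with $\mathrm{Opt}(a,a)=0$.

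The recursion has two cases, mirroring the Dyck recursion \eqref{eq:basic_recursion}. Either the interval $(a\dd b]$ splits at some intermediate coordinate $k\in C\cap(a\dd b)$ into $(a\dd k]$ and $(k\dd b]$, giving $\mathrm{Opt}(a,k)+\mathrm{Opt}(k,b)$; or the leftmost window pair $(w,w')\in\W$ "wraps around," i.e.\ $s(w)-1=a$, $e(w')=b$, and the region strictly between $e(w)$ and $s(w')-1$ is filled recursively, giving $c+\mathrm{Opt}(e(w),s(w')-1)$ — provided $w$ and $w'$ are consistent (non-crossing: $e(w)\le s(w')-1$). I would take the minimum over all choices. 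Before running the DP, I would bucket the window pairs of $\W$ by the key $(s(w)-1,e(w'))$ so that, for each endpoint pair $(a,b)$, the relevant "wrapping" pairs can be retrieved in time proportional to their number; this preprocessing is $\tOh(|\W|)$. The answer is $\mathrm{Opt}(0,n)$. Filling the table has $\Oh(|C|^2)$ entries and each entry's split-case costs $\Oh(|C|)$, plus the total work over all wrapping cases is $\Oh(|\W|)$ amortized, for the claimed $\tOh(|C|^3+|\W|)$ bound (the $\tOh$ absorbs sorting/hashing logs).

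The main thing to get right — the only genuine obstacle — is the correctness argument that this one-dimensional recursion really captures \emph{all} consistent decompositions, and no more. For soundness, given any value produced by the recursion one reads off a decomposition directly. For completeness, take an arbitrary consistent decomposition $\mathcal{S}$ of $(a\dd b]$; the window containing position $a+1$ must be some $w$ with $s(w)-1=a$, and by the non-crossing property its partner $w'$ either lies entirely inside $(a\dd e(w')]$ with $e(w')<b$ — in which case one first splits at $e(w')\in C$ — or satisfies $e(w')=b$, in which case $w,w'$ wrap the interval and everything in $(e(w)\dd s(w')-1]$ is decomposed by a sub-collection of $\mathcal{S}$; induction on $b-a$ closes both branches. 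I also need to check the boundary convention (every endpoint that matters, namely all of $\{s(w)-1,e(w),s(w')-1,s(w')\}$ over $\W$, is in $C$, so the split points and wrap points the recursion needs are always available) and that $0$ and $n$ can be treated as honorary members of $C$ without changing the asymptotics. The weight bookkeeping is then immediate since weights are additive across the disjoint pieces.
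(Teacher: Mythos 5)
Your proposal is correct and follows essentially the same route as the paper: an interval DP over the coordinate set $C$, with exactly the same two transitions (split at an intermediate coordinate, or wrap with a pair $(w,w',c)$ having $s(w)-1=a$ and $e(w')=b$ and recurse on $(e(w)\dd s(w')-1]$), the same bucketing of $\W$ by $(s(w)-1,e(w'))$ to keep the wrap case amortized $\Oh(|\W|)$, and the same completeness argument via the window containing the leftmost position and the non-crossing property. The one minor point the paper handles that you glide past is what to do when several pairs start at $a+1$ (possible only with empty windows), which the paper resolves by picking the one maximizing $e(w')$; this does not affect the substance.
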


\begin{algorithm}[htb]
	$C := \bigcup_{(w,w',c)\in \W} \{s(w)-1,e(w),s(w')-1,s(w')\}$\;
	\For{$i \in C$ (in the decreasing order)}{
		$D[i,i] := 0$\;
		\For{$j := (i\dd n]\cap C$ (in the increasing order)}{
			$D[i,j] := \infty$\;
			\ForEach{$(w,w',c)\in \W$ such that $s(w)-1=i$ and $e(w')=j$}{
				$D[i,j] := (D[i,j], c + D[e(w),s(w')-1])$\;
			}
			\For{$k \in (i\dd j)\cap C$}{
				$D[i,j] := \min(D[i,j], D[i,k]+D[k,j])$\;
			}
		}
	}
	\Return{$D[0,n]$}\;

	\caption{The DP computing the minimum-weight consistent decomposition of $[1\dd n]$.}\label{algo:dp}
\end{algorithm}

\begin{proof}
The algorithm computes a DP table $D[\cdot,\cdot]$
such that $D[i,j]$ is the minimum weight of a consistent decomposition of $(i\dd j]$.
Consider the optimum solution $T\subseteq \W$
and a tuple $(w,w',c)\in T$ such that $s(w)=i+1$ (in case of many options,
choose one maximizing $e(w')$).
If $e(w')=j$, then $T$ is considered in the \textbf{foreach} loop.
Otherwise, by \cref{def:consistent}, all the tuples
have both windows inside $(i\dd e(w')]$ or inside $(e(w')\dd j]$.
Thus, $T$ is considered in the innermost \textbf{for} loop with $k=e(w')$.

As far as the implementation is concerned, we spend $\tOh(|\W|)$ time computing $C$ 
and mapping each endpoint to its rank in $C$.
The remaining running time is $\Oh(|C|^3+|\W|)$.
\end{proof}

The instance arising from \cref{thm:maincorrect,thm:maintime} satisfies $|\W|=\tOh(\frac{n^2}{\theta^4s_2^2})$,
and $|C|=\Oh(\frac{n}{\theta s_2})$ (recall that each window in $\mathcal{J}$ and $\mathcal{K}$ ends at a multiple of $\theta s_2$ and has its length divisible by $\theta s_1$). Consequently, the running time of \cref{algo:dp}
on this instance is $\tOh(\frac{n^3}{\theta^3 s_2^3} + \frac{n^2}{\theta^4s_2^2})$.

\subsection{Proof of Theorem~\ref{thm:gapapproxsubquadratic}}\label{sec:gapproof}
Given a string $x$ of length $n$ over alphabet $\Sigma$ and a parameter $\theta\in [n^{-1/34},1]$ we first create sets of large and small windows $\mathcal{J}, \mathcal{K}$ as described in Section~\ref{sec:window}. Next we run algorithm CertifyingWindowPairs(Algorithm~\ref{alg:cover}) with input string $x$, set of windows $\mathcal{J},\mathcal{K}$ parameters $\theta\in [n^{-1/34},1]$, $\delta, s_1, s_2, \alpha$ that generates a set $\mathcal{W}=\mathcal{W}_L\cup \mathcal{W}_S$ of weighted window pairs such that there exists a subset $T=\{(w_1,w'_1,c_1),\dots,(w_\ell,w'_\ell,c_\ell)\}\subseteq \mathcal{W}$ that forms a consistent decomposition of $[1\dd n]$ and $\sum_{i=1}^\ell c_i \le 10\alpha^2 \dyck(x)+O(\theta n)$ (Theorem~\ref{thm:maincorrect}). Next we run the dynamic program Algorithm~\ref{algo:dp} with input $\mathcal{W}$ that minimizes $\sum_{i=1}^\ell c_i$ over weighted decomposition $T$ and thus provides a cost estimation $10\alpha^2 \dyck(x)+O(\theta n)$. 
Here $\alpha$ is the approximation factor of algorithm Dyck-Approx(Algorithm~\ref{alg:Dyck-Approx}). By Theorem~\ref{thm:approxdyck} setting $\epsilon=1$, we get $\alpha=2$.
Thus, if $\dyck(x)\le \theta n$ we get an estimation $O(\theta n)$. 
Note here $t(|w|)$ is the running time Algorithm~\ref{alg:Dyck-Approx} with input string $w$; thus $t(w)=\tOh(w^2)$. By setting  $s_1,s_2,\delta$ to be the largest power of $2$ satisfying $s_1\le n^{21/34}$, $s_2\le n^{13/34}$ and $\delta\le n^{5/34}$ we get the running time of Algorithm~\ref{alg:cover} and Algorithm~\ref{algo:dp} is $n^{1.971}$. This completes the proof of Theorem~\ref{thm:gapapproxsubquadratic}.

\section{RNA Folding}\label{sec:fold}
\newcommand{\RNA}{\mathsf{FOLD}}
\newcommand{\rna}{\mathsf{fold}}

The alphabet $\Sigma$ is associated with a fixed-point free involution $\rev{\cdot} : \Sigma \to \rev{\Sigma}$ mapping each symbol to its \emph{complement}.
The folding language $\RNA(\Sigma)$ can be defined using a context-free grammar whose only non-terminal $S$ admits productions $S\to SS$, $S\to \emptyset$ (empty string), and $S\to aS\rev{a}$ for $a\in \Sigma$.

\begin{definition}
	The folding edit distance $\rna(x)$ of a string $x\in \Sigma^*$ is the minimum number of character deletions required to transform $x$ to a string in $\RNA(\Sigma)$.
\end{definition}

In this section we provide an algorithm that computes $\tau$ approximation of $\rna(x)$ in time $O(n+n^3/\tau^3)$. Formally, we show the following.

\begin{restatable}{theorem}{thmfold}\label{thm:fold}
There exists an algorithm that, given a string $x\in \Sigma^n$ and a parameter $\tau\in \Z_+$,
in $\Oh(n+\frac{n^3}{\tau^3})$ time computes a value $v$ such that $\rna(x) \le v \le \tau \rna(x)$.
\end{restatable}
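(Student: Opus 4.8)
The plan is to reduce to a structured instance and then run a sparse dynamic program over a grid of spacing $\tau$. Throughout, the \emph{cost} of a folding of $x$ is its number of deleted characters, and $d:=\rna(x)$ denotes the optimum.

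\emph{Preprocessing and structure.} First I would preprocess $x$ exactly as in~\cite{BO16}: repeatedly delete any two adjacent complementary characters $x[i]x[i{+}1]$ (i.e.\ $x[i{+}1]=\rev{x[i]}$); a short exchange argument shows this preserves $\rna(x)$, and the reduction runs in $\Oh(n)$ time with a stack. The purpose of this step is a structural property of the reduced string, which I expect to be the main obstacle of the whole proof: after the reduction, for \emph{every} folding of $x$ of cost $d$, the $n-d$ matched positions split into $\Oh(d)$ pairwise disjoint intervals (``windows'') that the matching pairs up with no errors, i.e.\ each window $W$ is matched onto some window $W'$ with $x[W]=\rev{x[W']}$. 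The windows are the maximal intervals that the non-crossing matching maps onto an interval, so inside each the matching is forced to be the reverse-complement map; and bounding their number by $\Oh(d)$ uses the post-reduction facts that no two matched-together characters are adjacent, hence no interval of matched positions is matched entirely within itself, which forces an unmatched character to lie between the two windows of any pair and between any two windows that are nested.

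\emph{Algorithm.} Let $G=\{0,\tau,2\tau,\ldots\}\cup\{n\}$ be the grid. A \emph{candidate pair} is a pair of equal-length intervals $(W,W')$ with $W$ entirely to the left of $W'$, all four endpoints in $G$, and $x[W]$ equal to $\rev{x[W']}$ up to $\Oh(\tau)$ mismatched boundary characters; each candidate pair gets weight $\Oh(\tau)$, which upper bounds $\rna(x[W]\circ x[W'])$ (fold the agreeing part as a rainbow, delete the $\Oh(\tau)$ fringe). For each of the $\Oh((n/\tau)^2)$ grid intervals $W$, the compatible $W'$ are found by issuing $\Oh(1)$ \textsc{Internal Pattern Matching} queries on $x\,\rev{x}$~\cite{KRRW15,phd} for a robust core of $x[W]$; the occurrences come back as $\Oh(n/|W|)$ arithmetic progressions, and after rounding their positions to $G$ only $\Oh(n/\tau)$ distinct $W'$ remain. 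This yields $\Oh((n/\tau)^3)$ candidate pairs in $\Oh(n+(n/\tau)^3)$ time. I then run a standard interval dynamic program over $G$, in the style of Algorithm~\ref{algo:dp}: $D[i,j]$ is the minimum total weight of a consistent decomposition of $x(i\dd j]$ into candidate pairs and deleted grid cells (charged $\tau$ each), with transitions for concatenation at a grid pivot, deleting the last grid cell, and applying a candidate pair $(W,W')$ with $e(W')=j$ while recursing on $x(i\dd s(W){-}1]$ and on the interior $x(e(W)\dd s(W'){-}1]$. Since $|G|=\Oh(n/\tau)$ and there are $\Oh((n/\tau)^3)$ candidate pairs, this runs in $\Oh((n/\tau)^3)$ time; the algorithm outputs $v=\min(n,D[0,n])$.

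\emph{Correctness.} Every quantity the DP manipulates is the cost of an actual folding of $x$ (fold each pair's agreeing part, delete the rest), so $\rna(x)\le v$. For the upper bound, if $d\ge n/\tau$ then already $v\le n\le\tau d$; otherwise take an optimal folding and its $\Oh(d)$ error-free windows from the first step. Contract each window inward to its largest grid-aligned subinterval: since the original windows are pairwise disjoint and contraction only shrinks them, the contracted windows remain pairwise disjoint and their induced pairing remains non-crossing, so they constitute a consistent decomposition; each contraction, together with rounding its partner, makes a pair's matching a shifted rainbow with only $\Oh(\tau)$ unmatched boundary characters, for a total extra cost of $\Oh(\tau d)$. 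Replacing each contracted closing window's partner by a $G$-rounded surviving occurrence is harmless, since it is still a near-reverse-complement of the same grid window. Hence there is a consistent decomposition into candidate pairs plus $\Oh(\tau d)$ deleted grid cells of total weight $\Oh(\tau d)$, so $D[0,n]=\Oh(\tau\,\rna(x))$; running the whole procedure with $\tau$ scaled down by the hidden constant then gives $v\le\tau\,\rna(x)$ without affecting the $\Oh(n+(n/\tau)^3)$ running time. The delicate points are the structural claim above and the verification that inward contraction preserves consistency while losing only $\Oh(\tau)$ per window.
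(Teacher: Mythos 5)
Your plan captures the paper's approach at the conceptual level -- preprocess to an irreducible string, observe that matched positions decompose into error-free window pairs, discretize window boundaries to a $\Theta(\tau)$-spaced grid, use \textsc{Internal Pattern Matching} to detect reverse-complement occurrences, and run an interval DP over grid positions -- so I would call this the same high-level approach. But the execution differs in two places that are worth comparing, and both of your variants are the more delicate ones. First, the structural lemma. You assert that every cost-$d$ folding of an irreducible string has $\Oh(d)$ error-free windows and use the windows of a fixed optimal folding. This is true, but the paper does not prove or use it in that form: it instead introduces the penalized cost $\rna_\rho(x)$, which charges $\rho$ per window pair, and proves (Lemma~\ref{lem:rho}, by a careful induction exploiting irreducibility) that $\rna_\rho(x)\le(1+2\rho)\rna(x)-\rho$. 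That inequality simultaneously bounds the number of pairs in the \emph{optimal} $\rna_\rho$-decomposition by $2\rna(x)-1$ and directly bounds the additive loss of discretization, which is what the DP analysis actually needs; getting the same mileage out of your raw combinatorial statement requires re-deriving essentially that inequality. Second, the algorithm. The paper never enumerates candidate pairs: its DP table $D[a,b]$ is indexed purely by grid positions, and each transition issues a \emph{single} IPM query asking whether $\rev{x((b-d)s\dd bs]}$ occurs anywhere in $x(as\dd(a+d+2)s]$; the weight $12s$ is chosen large enough to absorb the unknown offset of that occurrence, so the recursion always lands on the fixed grid interval $[(a+d+2)s,(b-d)s]$ regardless of where the match is. Your version materializes candidate pairs $(W,W')$ whose right endpoint $W'$ comes from an IPM hit, and then recurses on the interior determined by that $W'$. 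This makes the correctness argument genuinely harder: the sentence ``replacing each contracted closing window's partner by a $G$-rounded surviving occurrence is harmless'' is not automatic, because a surviving occurrence need not be anywhere near the true partner, and recursing on the wrong interior can destroy the inductive invariant. To make it work you must argue that the \emph{true} rounded partner is itself among the IPM hits (which it is, because the contracted windows match exactly on a core of length $\gg\tau$) and that the DP is free to pick it; the paper's fixed-interior formulation sidesteps this issue entirely and also saves the bookkeeping of arithmetic-progression occurrence lists. In short: the proposal is sound in outline and could be completed, but the paper's $\rna_\rho$ abstraction and its ``existence-only'' IPM transition are exactly the two devices that turn your ``delicate points'' into short proofs, and you should expect to need equivalents of both.
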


Before describing the algorithm approximating $\rna(x)$ we state a few definitions and tools required to design the algorithm.
For a string $x\in \Sigma^n$, we say that a non-crossing matching $M\sub [n]^2$ is a \emph{folding alignment}
of $x$ if $x[j]=\rev{x[i]}$ holds for every $(i,j)\in M$. In this case, we also define its cost to be $\rna(x,M):=n-2|M|$.

\begin{fact}
For every string $x\in \Sigma^n$, the folding distance $\rna(x)$ is equal to the minimum cost $\rna(x,M)$
among all folding alignments $M$ of $x$.
\end{fact}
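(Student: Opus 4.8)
The plan is to prove the two inequalities separately, mirroring the structure of the earlier proof of \cref{fct:al} (the folklore fact for Dyck edit distance) but in the simpler setting where only deletions are allowed and a pair $(i,j)\in M$ carries no additional cost (since $x[j]=\rev{x[i]}$ is required outright). First I would show that $\rna(x)\le \rna(x,M)$ for every folding alignment $M$ of $x$, by induction on $|x|$. The base case $|x|=0$ is trivial. For the inductive step, if $M(1)=\bot$, delete $x[1]$: the restriction of $M$ to the remaining positions is a folding alignment of $x[2\dd|x|]$ of cost $\rna(x,M)-1$, so $\rna(x)\le 1+(\rna(x,M)-1)=\rna(x,M)$. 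If $M(1)=|x|$, then $x=x[1]\,x[2\dd|x|)\,x[|x|]$ with $x[|x|]=\rev{x[1]}$, and $M$ restricted to $[2\dd|x|)$ is a folding alignment of $x[2\dd|x|)$; applying the production $S\to aS\rev{a}$ and the inductive hypothesis gives $\rna(x)\le \rna(x,M)$. Otherwise $(1,p)\in M$ for some $p\in[2\dd|x|)$, and the non-crossing property splits $M$ into a folding alignment of $x[1\dd p]$ and one of $x[p+1\dd|x|]$ with costs summing to $\rna(x,M)$; using the production $S\to SS$ and the inductive hypothesis on both halves concludes this direction.

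For the reverse inequality $\rna(x)\ge$ (minimum folding alignment cost), I would argue that from any string $y\in\RNA(\Sigma)$ obtained from $x$ by $\rna(x)$ deletions one can read off a folding alignment of $x$ of cost at most $\rna(x)$. Concretely, a parse tree of $y$ under the grammar with productions $S\to SS$, $S\to\emptyset$, $S\to aS\rev{a}$ yields a non-crossing matching $M_y\subseteq[|y|]^2$ pairing each $a$ produced by $S\to aS\rev{a}$ with its partner $\rev{a}$; then $|y|=2|M_y|$ and $y[j]=\rev{y[i]}$ for all $(i,j)\in M_y$. Pulling this matching back through the $\rna(x)$ deletions (i.e., re-indexing positions of $y$ to their originals in $x$) gives a folding alignment $M$ of $x$ with $|M|=|M_y|=|y|/2$ and hence cost $\rna(x,M)=n-|y|=\rna(x)$. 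Alternatively, and perhaps cleaner to write, one can induct on $\rna(x)+|x|$ exactly as in the proof of \cref{fct:al}: if $x\in\RNA(\Sigma)$ handle the two nontrivial productions recursively to get a cost-$0$ alignment; otherwise consider the first deletion, obtain by induction a folding alignment $M'$ of the resulting $x'$ with $\rna(x',M')=\rna(x)-1$, and lift $M'$ to a folding alignment $M$ of $x$ by shifting indices across the deleted position, noting the deleted character contributes $1$ to $n-2|M|$ so $\rna(x,M)=\rna(x',M')+1=\rna(x)$.

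I expect no serious obstacle here: unlike the Dyck case, there are no substitutions to account for and the constraint $x[j]=\rev{x[i]}$ makes the per-pair cost identically zero, so the bookkeeping is lighter than in \cref{fct:al}. The only mild care needed is in the reverse direction, making sure the parse-tree-to-matching correspondence (or the index-shifting in the inductive formulation) genuinely produces a \emph{non-crossing} matching respecting the complement condition; this follows because the grammar is non-crossing by construction and deletions preserve the relative order of surviving positions. I would present the inductive formulation to keep the write-up parallel to the earlier fact and self-contained.
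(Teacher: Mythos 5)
Your proposal is correct and takes essentially the same approach as the paper: the first direction by induction on $|x|$ with a case split on $M(1)$, and the second by induction on $\rna(x)+|x|$ (the paper uses $2\rna(x)+|x|$, but since RNA folding allows only deletions the extra factor is unnecessary), handling the two grammar productions for $x\in\RNA(\Sigma)$ and otherwise lifting a folding alignment of $x'$ across the first deletion. The alternative parse-tree argument you sketch is a reasonable reformulation of the same idea, but the inductive write-up you say you would present is what the paper does.
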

\begin{proof}
	We first show that, by induction on $n$, that $\rna(x)\le \rna(x,M)$ holds for every folding alignment $M$ of $x$.
	The claim is trivial if $n=0$.
	If $M(1)=\bot$, then we construct a folding alignment $M'=\{(i-1,j-1) : (i,j)\in M\}$ 
	of $x'=x[2\dd n]$. 
	By the inductive assumption, $\rna(x)\le \rna(x')+1\le \rna(x',M')+1 = (n-1)-2(|M|-1)+1= n-2|M|= \rna(x,M)$.
	If $M(1)=n$, then we construct a folding alignment $M'=\{(i-1,j-1) : (i,j)\in M\sm \{(1,n)\}\}$ of $x'=x[2\dd n)$.
	By the inductive assumption, $\rna(x)\le \rna(x')\le \rna(x',M') = (n-2)-2(|M|-1) = n-2|M|= \rna(x,M)$.
	Otherwise, we have $(1,p)\in M$ for some $p\in [2\dd n)$.
	In this case, we construct a folding alignment $M'=\{(i,j)\in M : j\le p\}$ of $x' = x[1\dd p]$,
	and a folding alignment $M'' = \{(i-p,j-p) : (i,j)\in M\text{ and }i>p\}$ of $x''=x(p\dd n]$.
	By the inductive assumption, $\rna(x)\le \rna(x')+\rna(x'')\le \rna(x',M')+\rna(x'',M'') = p-2|M'| + (n-p)-2|M''|=n-2|M|$; here, the last equality follows from the fact that any $(i,j)\in M$ with $i \le p$ and $j>p$ would violate the non-crossing property of $M$.
	
	As for the inverse inequality, we proceed by induction on $2\rna(x)+n$;
	again, the claim is trivial for $n=0$.
	If $x\in \RNA(\Sigma)$ and $x=ax'\bar{a}$ for $a\in \Sigma$ and $x'\in \RNA(\Sigma)$, then the inductive assumption yields a cost-$0$ folding alignment $M'$ of $x'$. In this case, we construct
	$M=\{(i+1,j) : (i,j)\in M'\}\cup\{(1,|x|)\}$ so that it is a cost-$0$ folding alignment of $x$.
	If $x = x'x''$ for non-empty $x',x''\in \RNA(\Sigma)$, then the inductive assumption yield cost-$0$ folding alignments $M',M''$ of $x'$ and $x''$, respectively.
	In this case, we set $M=M'\cup\{(i+|x'|,j+|x'|) : (i,j)\in M''\}$
	so that it is a cost-$0$ folding alignment of $x$.
	We henceforth assume that $\rna(x)>0$ and that deleting $x[p]$ is the first operation in the optimal sequence of character deletions transforming $x$ to a string in $\DYCK(\Sigma)$. The inductive hypothesis on the resulting string $x'$ yields a folding alignment $M'$ of $x$ with $\rna(x',M')=\rna(x')=\rna(x)-1$.
	We set $M = \{(i,j)\in M' : j < p\} \cup \{(i,j+1) : (i,j)\in M' \text{ and } i< p \le j\}\cup\{(i+1,j+1): (i,j)\in M'\text{ and }p \le i\}$ so that $M$ is a folding alignment of $x$ with $\rna(x,M)=n-2|M|=|x'|+1-2|M'|=\rna(x',M')+1=\rna(x)$.
	\end{proof}
	
Next we show folding edit distance satisfies triangular inequality.

\begin{lemma}\label{lem:rtriangle}
		All strings $x,y,z\in \Sigma^*$ satisfy $\rna(x\rev{z}) = \rna(x\rev{y}y\rev{z})\le \rna(x\rev{y})+\rna(y\rev{z})$. 
\end{lemma}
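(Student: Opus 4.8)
The plan is to mirror the structure of the proof of \cref{lem:triangle} but exploit the fact that the folding language is insensitive to the left/right orientation of matches, which should make the argument considerably shorter. The second inequality $\rna(x\rev{y}y\rev{z})\le \rna(x\rev{y})+\rna(y\rev{z})$ is immediate from the fact that $\RNA(\Sigma)$ is closed under concatenation: concatenating an optimal deletion sequence for $x\rev{y}$ with one for $y\rev{z}$ transforms $x\rev{y}y\rev{z}$ into a concatenation of two folding-language words, which again lies in $\RNA(\Sigma)$. So the whole content is the equality $\rna(x\rev{z})=\rna(x\rev{y}y\rev{z})$.

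For the equality, note first that one direction, $\rna(x\rev{z})\le\rna(x\rev{y}y\rev{z})$, already follows from the displayed inequality chain once we know $\rna(x\rev{y}y\rev{z})\le\rna(x\rev{y})+\rna(y\rev{z})$ together with $\rna(y\rev{y})=0$ — indeed $\rna(x\rev{z})=\rna(x\rev{y}y\rev{z}) $ will follow if we also show $\rna(x\rev{z})\ge\rna(x\rev{y}y\rev{z})$. For the latter, I would use the characterization via folding alignments (the \cref{fact} just proved): take an optimal folding alignment $M$ of $x\rev{z}$ of cost $\rna(x\rev{z})$, and build a folding alignment of $x\rev{y}y\rev{z}$ of the same cost by inserting the block $y\rev{y}$ at the junction between $x$ and $\rev{z}$ and matching the new $y$ with the new $\rev{y}$ in a nested (non-crossing) way — i.e., adding the $|y|$ pairs $(|x|+t,\,|x|+2|y|+1-t)$ for $t\in[1\dd|y|]$, which are valid since $\rev{x[\,\cdot\,]}$-type constraints hold by construction of $y\rev{y}$, and which do not cross any pair of $M$ because they all lie strictly between positions $|x|$ and $|x|+2|y|+1$. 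The number of unmatched characters is unchanged, so the cost is preserved, giving $\rna(x\rev{y}y\rev{z})\le\rna(x\rev{z})$.

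Combining the two directions yields the equality, and the concatenation bound finishes the lemma. I expect the only mildly delicate point to be bookkeeping the index shifts when inserting $y\rev{y}$ into the alignment and verifying the non-crossing property at the seam; this is routine, and — unlike in \cref{lem:triangle} — there is no need for an induction on $|y|$ or a case analysis on short strings, precisely because orientation-agnostic matching lets us nest $y$ against $\rev{y}$ directly regardless of whether the involved symbols are "opening" or "closing." If one prefers a uniform presentation, the same nesting argument can alternatively be phrased purely in terms of deletion sequences (delete from $x\rev{z}$ and from the inserted $y\rev{y}$ independently), avoiding alignments altogether.
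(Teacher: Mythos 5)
Your argument cleanly handles two of the three claims, and both match the paper's. The concatenation bound $\rna(x\rev{y}y\rev{z})\le \rna(x\rev{y})+\rna(y\rev{z})$ is identical to the paper's one-liner. Your nesting construction (taking an optimal folding alignment of $x\rev{z}$, inserting $\rev{y}y$ at the seam, and pairing $(|x|+t,\,|x|+2|y|+1-t)$ for $t\in[1\dd |y|]$) is a correct and slightly more explicit version of what the paper states in one sentence; it proves $\rna(x\rev{y}y\rev{z})\le\rna(x\rev{z})$, i.e.\ $\rna(x\rev{z})\ge\rna(x\rev{y}y\rev{z})$.

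The gap is the other direction, $\rna(x\rev{z})\le\rna(x\rev{y}y\rev{z})$. You assert that it ``already follows from the displayed inequality chain once we know $\rna(x\rev{y}y\rev{z})\le\rna(x\rev{y})+\rna(y\rev{z})$ together with $\rna(y\rev{y})=0$,'' but that is not an argument: the concatenation inequality only lets you build up longer words with bounded cost, not delete the embedded block $\rev{y}y$ without increasing the cost. The difficulty is precisely that an optimal folding alignment of $x\rev{y}y\rev{z}$ may match characters of $x$ or of $\rev{z}$ against the inserted $\rev{y}y$ (instead of nesting $\rev{y}$ against $y$), and discarding those matches when you remove $\rev{y}y$ can a priori raise the cost. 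This is exactly the same obstruction that makes \cref{lem:triangle} nontrivial, and the folding setting does not make it vanish. The paper's proof therefore \emph{does} use the reduction to $|y|=1$ plus induction on $2\rna(x\rev{y}y\rev{z})+|x\rev{z}|$ with a short case analysis: if the optimal alignment of $x\rev{y}y\rev{z}$ leaves a character of $x$ or $\rev{z}$ unmatched, delete it and recurse; if it matches two adjacent characters of $x$, two adjacent characters of $z$, or the first character of $x$ with the last of $z$, remove the pair and recurse; in the remaining case every character of $x\rev{z}$ is matched into $\rev{y}y$, forcing $|x\rev{z}|\le 2$, and a parity argument plus inspection of the $\le 4$-letter base cases finishes. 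Your closing remark that ``there is no need for an induction on $|y|$ or a case analysis on short strings'' is therefore wrong; orientation-agnostic matching only trivializes the $\ge$ direction (which is the direction that fails for Dyck and upgrades the inequality to an equality), not the $\le$ direction you skipped.
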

\begin{proof}
		The inequality follows from the fact that the folding language is closed under concatenations.
		As for the equality, we observe that it suffices to consider $|y|=1$: the case of $|y|=0$ is trivial,
		and the case of $|y|>1$ can be derived from that of $|y|=1$ by processing $y$ letter by letter.

		Next, observe that $\rna(x\rev{z}) \ge  \rna(x\rev{y}y\rev{z})$ because any folding alignment of $x\rev{z}$
		can be transformed to a folding alignment of $x\rev{y}y\rev{z}$ by matching $\rev{y}$ with $y$.

		It remains to prove $\rna(x\rev{z}) \le \rna(x\rev{y}y\rev{z})$.
		For this, we proceed by induction on $2\rna(x\rev{y}y\rev{z})+|x\rev{z}|$.
		If any optimum folding alignment of $x\rev{y}y\rev{z}$ leaves a character in $x$ or $\rev{z}$ unmatched,
		we apply the inductive assumption for an instance $(x',y,z')$ obtained from deleting this character: $\rna(x\rev{z}) \le \rna(x'\rev{z}')+1 \le \rna(x'\rev{y}y\rev{z}')+1 = \rna(x\rev{y}y\rev{z})$.
		If any optimum alignment of  $x\rev{y}y\rev{z}$ matches any two adjacent characters of $x$,
		any two adjacent characters of $z$, or the first character of $x$ with the last character of $z$,
		then we apply the inductive assumption for an instance $(x',y,z')$ obtained from deleting these two characters: $\rna(x\rev{z}) \le \rna(x'\rev{z}') \le \rna(x'\rev{y}y\rev{z}') = \rna(x\rev{y}y\rev{z})$.
		In the remaining case, all characters of $x$ and $\rev{z}$ must be matched to $\rev{y}$ or $y$, so $|x\rev{z}|\le 2$. 
		If $|x\rev{z}|\le \rna(x\rev{y}y\rev{z})$, then trivially $\rna(x\rev{z})\le \rna(x\rev{y}y\rev{z})$, so we may assume $\rna(x\rev{y}y\rev{z})<|x\rev{z}|$.
		Due to the fact that $\rna(x\rev{y}y\rev{z})\equiv |x\rev{y}y\rev{z}| \equiv |x\rev{z}| \pmod{2}$,
		this means that $\rna(x\rev{y}y\rev{z})=0$ and $|x\rev{z}|=2$.
		If $|x|=2$, then the optimum folding alignment is $\{(1,4),(2,3)\}$,
		so $x[1]=\rev{y}$ and $x[2]=\rev{\rev{y}}=y$, i.e., $\rna(x\rev{z})=\rna(x)=\rna(\rev{y}y)=0$.
		If $|\rev{z}|=2$, then the optimum folding alignment is $\{(1,4),(2,3)\}$,
		so $z[1]=\rev{y}$ and $z[2]=y$, i.e., $\rna(x\rev{z})=\rna(\rev{z})=\rna(\rev{y}\rev{\rev{y}})=0$.
		Finally, if $|x|=|\rev{z}|=1$, then the optimum folding alignment is $\{(1,2),(3,4)\}$;
		thus, $x=y=z$ and $\rna(x\rev{z})=0$.
\end{proof}

\newcommand{\Zz}{\mathbb{Z}_{\ge 0}}
\renewcommand{\S}{\mathcal{S}}

Next, we generalize the notion of a folding alignment.
\begin{definition}
Let $x\in \Sigma^n$ and let $\S$ be consistent window decomposition of $P\sub [1\dd n]$.
We say that $\S$ is a \emph{window alignment} of $x$ if $x[w']=\rev{x[w]}$ holds for every $(w,w')\in \S$.
Moreover, for $\rho\in \Zz$, we denote $\rna_\rho(x,\S)=n-|P|+\rho|\S|$.
We also define $\rna_\rho(x)$ as the minimum value $\rna_\rho(x,\S)$ among all window alignments $\S$ of $x$. 
\end{definition}

Observe that every window alignment can be interpreted as a folding alignment (by partitioning each window into individual characters) and every folding alignment can be interpreted as a window alignment (by considering length-1 windows). This yields the following observation:
\begin{observation}\label{obs:zero}
	Every string $x\in \Sigma^n$ satisfies $\rna_0(x)=\rna(x)$.
\end{observation}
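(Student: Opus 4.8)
The statement to prove is \cref{obs:zero}: every string $x\in\Sigma^n$ satisfies $\rna_0(x)=\rna(x)$.

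\bigskip

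The plan is to establish the two inequalities $\rna_0(x)\le\rna(x)$ and $\rna(x)\le\rna_0(x)$ separately, each by exhibiting a conversion between folding alignments and window alignments that preserves the relevant cost. The key observation is the one already flagged in the text: a window alignment and a folding alignment are two views of the same combinatorial object once $\rho=0$, because the $\rho|\S|$ term vanishes and only the count of unmatched positions $n-|P|$ matters.

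First I would prove $\rna(x)\le\rna_0(x)$. Take any window alignment $\S=\{(w_1,w_1'),\dots,(w_\ell,w_\ell')\}$ of $P\sub[1\dd n]$ achieving $\rna_0(x,\S)=n-|P|$. For each pair $(w_i,w_i')\in\S$, since $x[w_i']=\rev{x[w_i]}$ and $|w_i|=|w_i'|=:d_i$, write $w_i=[a_i\dd a_i+d_i)$ and $w_i'=[b_i\dd b_i+d_i)$; then $x[a_i+t]=\rev{x[b_i+d_i-1-t]}$ for each $t\in[0\dd d_i)$, so the pairs $(a_i+t,\,b_i+d_i-1-t)$ form a valid folding alignment on $w_i\cup w_i'$. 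I would check that taking the union over all $i$ yields a non-crossing matching $M$ of $x$: non-crossing within a single pair is immediate from the nested structure, and non-crossing across distinct pairs follows from $\S$ being a consistent window decomposition (\cref{def:consistent}), i.e.\ $\{(s(w_i),s(w_i'))\}$ is non-crossing and the windows are disjoint — a short case analysis on the relative positions of two window pairs (nested or disjoint) shows the induced character matchings are also nested or disjoint. Then $|M|=\sum_i d_i=|P|/2$, so $\rna(x)\le\rna(x,M)=n-2|M|=n-|P|=\rna_0(x,\S)$, and minimizing over $\S$ gives $\rna(x)\le\rna_0(x)$.

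Conversely, for $\rna_0(x)\le\rna(x)$, take an optimal folding alignment $M$ of $x$ with $\rna(x,M)=n-2|M|=\rna(x)$. Interpret each pair $(i,j)\in M$ as a window pair $(\{i\},\{j\})$ of length-$1$ windows; these are trivially disjoint, and $\{(i,j):(i,j)\in M\}=M$ is non-crossing by hypothesis, so $\S:=\{(\{i\},\{j\}):(i,j)\in M\}$ is a consistent window decomposition of $P:=\bigcup_{(i,j)\in M}\{i,j\}$, which is a window alignment because $x[j]=\rev{x[i]}$ means $x[\{j\}]=\rev{x[\{i\}]}$. Here $|P|=2|M|$, so $\rna_0(x,\S)=n-|P|+0\cdot|\S|=n-2|M|=\rna(x)$, giving $\rna_0(x)\le\rna(x)$. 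Combining the two inequalities yields $\rna_0(x)=\rna(x)$.

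I do not expect a real obstacle here; the only point requiring a little care is the non-crossing verification in the first direction — making sure that expanding two nested (or two disjoint) window pairs into their character-level matchings cannot create a crossing — but this is a routine finite case check using \cref{def:consistent}, and the reversal of orientation inside each window (matching $a_i+t$ with $b_i+d_i-1-t$) is exactly what keeps each expanded block internally nested rather than crossing.
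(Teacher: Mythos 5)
Your proof is correct and follows exactly the paper's intended argument: the paper states in a single sentence before \cref{obs:zero} that window alignments and folding alignments are interconvertible (expand each window pair into nested character pairs; wrap each matched pair into length-$1$ windows), and you simply carry out both directions in full detail, including the non-crossing verification that the paper leaves implicit.
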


The following lemma provides a recursive formula for $\rna_\rho(x)$:
\begin{lemma}\label{lem:rhodp}
Let $x\in \Sigma^n$ and $\rho\in \Zz$.
If $n\le 1$, then $\rna_\rho(x)=n$.
Otherwise, 
\[\rna_\rho(x) =
\min\begin{cases}
	\rna_\rho(x[1\dd p]) + \rna_\rho(x(p\dd n]) & \text{\!\!for }p\in [1\dd n),\\
	\rho+\rna_\rho(x(\ell\dd n-\ell]) & \text{\!\!for }\ell\in [1\dd \lfloor{\frac{n}{2}}\rfloor]\text{ such that } x[1\dd \ell]=\rev{x(n-\ell\dd n]}.
\end{cases}
\]
\end{lemma}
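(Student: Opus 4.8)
The plan is to prove both directions of the recursion --- that the right-hand side is an upper bound for $\rna_\rho(x)$ and that it is also a lower bound --- by exhibiting and manipulating window alignments directly. The base case $n\le 1$ is immediate: the only window alignment of a string of length at most $1$ is the empty one, with cost $n$.

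\textbf{Upper bound ($\rna_\rho(x)\le$ RHS).} For the first branch, fix $p\in[1\dd n)$ and take optimal window alignments $\S_1$ of $x[1\dd p]$ and $\S_2$ of $x(p\dd n]$. Shift the indices of $\S_2$ by $p$ and take the union; since no pair in $\S_1$ crosses the cut at $p$ and no shifted pair in $\S_2$ does either, the result is a consistent window alignment $\S$ of $x$, and $\rna_\rho(x,\S) = (p-|P_1|+\rho|\S_1|) + ((n-p)-|P_2|+\rho|\S_2|) = \rna_\rho(x[1\dd p])+\rna_\rho(x(p\dd n])$. For the second branch, fix an admissible $\ell$ and an optimal window alignment $\S'$ of the middle part $x(\ell\dd n-\ell]$; shift its indices by $\ell$, and adjoin the single window pair $\big([1\dd \ell],\,(n-\ell\dd n]\big)$, which is legal precisely because $x[1\dd\ell]=\rev{x(n-\ell\dd n]}$. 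The new pair is nested around everything in the shifted $\S'$, so consistency is preserved; the cost increases by exactly $\rho$ (one extra pair, but both windows are now fully covered so $n-|P|$ is unchanged relative to the middle), giving $\rho+\rna_\rho(x(\ell\dd n-\ell])$. Taking the minimum over all choices yields $\rna_\rho(x)\le$ RHS.

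\textbf{Lower bound ($\rna_\rho(x)\ge$ RHS).} Fix an optimal window alignment $\S$ of $x$ and consider position $1$. If $1$ is not covered by any window in $\S$, then $\S$ is already a window alignment of $x(1\dd n]$ (after a shift), so $\rna_\rho(x)=\rna_\rho(x,\S)\ge 1+\rna_\rho(x(1\dd n])$, which is the first branch with $p=1$. Otherwise, position $1$ lies in the left window $w$ of some pair $(w,w')\in\S$; write $w=[1\dd e(w)]$ (it must start at $1$, else $1$ would be uncovered, using consistency --- I should check this carefully, as windows are intervals and the matching on start-indices is non-crossing). Let $q=e(w')$, the right endpoint of the matched window. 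If $q<n$, then by consistency every pair of $\S$ has both windows inside $(0\dd q]$ or both inside $(q\dd n]$, so $\S$ splits as $\S_1\cup\S_2$ and $\rna_\rho(x)\ge \rna_\rho(x[1\dd q])+\rna_\rho(x(q\dd n])$, matching the first branch with $p=q$. If $q=n$, then $w=[1\dd\ell]$ and $w'=(n-\ell\dd n]$ for $\ell=|w|$; we need $|w'|=\ell$ as well --- this requires an argument, perhaps that if $|w|\ne|w'|$ we can always replace $(w,w')$ by a shorter balanced pair without increasing cost, or restrict attention beforehand to window alignments all of whose pairs have equal-length windows. Granting equal lengths and $\ell\le\lfloor n/2\rfloor$, removing $(w,w')$ leaves a window alignment of the middle $x(\ell\dd n-\ell]$ whose cost is exactly $\rna_\rho(x,\S)-\rho$, and $x[1\dd\ell]=\rev{x(n-\ell\dd n]}$ since $x[w']=\rev{x[w]}$; this is the second branch.

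\textbf{Main obstacle.} The delicate point is the $q=n$ case: I must ensure the matched windows have equal length (so that $w'=(n-\ell\dd n]$ with the \emph{same} $\ell\le\lfloor n/2\rfloor$ appearing in the second branch) and that $\ell\le\lfloor n/2\rfloor$, i.e.\ $w$ and $w'$ do not overlap. For overlap this is automatic from the non-crossing/consistency condition. For equal lengths, the cleanest fix is to observe at the outset that in any optimal window alignment one may assume every pair $(w,w')$ has $|w|=|w'|$: if $x[w']=\rev{x[w]}$ then $|w|=|w'|$ is in fact \emph{forced} by the definition of the complement map on strings (reversal preserves length), so there is nothing to prove --- this is worth stating explicitly to avoid confusion. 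With that settled, the induction goes through on the parameter $2\rna_\rho(x,\S)+n$ or simply on $n$ via the three cases above.
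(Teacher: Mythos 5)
Your proof is correct and follows essentially the same strategy as the paper's, just mirrored: the paper case-splits on whether position $n$ is covered by some window and, if so, whether the matched left window starts at $1$, while you case-split on whether position $1$ is covered and, if so, whether the matched right window ends at $n$. Your flagged worry about $|w|=|w'|$ is indeed a non-issue for exactly the reason you give — $\rev{\cdot}$ preserves length, so $x[w']=\rev{x[w]}$ forces $|w'|=|w|$ — which the paper also relies on implicitly when it writes the pair involving $x[n]$ as $((p\dd p+\ell],(n-\ell\dd n])$ with the same $\ell$ on both sides.
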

\begin{proof}
We proceed by induction on $n$.
If $n\le 1$, then the only window alignment of $x$ is the empty alignment.
Hence, $\rna_\rho(x)=\rna_\rho(x,\emptyset)=n$.

As for the lower bound on $\rna_\rho(x)$, we consider an optimal window alignment $\S$ of $x$.
If $\S$ does not contain any window pair involving $x[n]$, 
then $\S$ is also a window alignment of $x[1\dd n)$.
In particular, $\rna_\rho(x) = \rna_\rho(x,\S)=\rna_\rho(x[1\dd n),\S)+1 \ge \rna_\rho(x[1\dd n-1]) + \rna_\rho(x(n-1\dd n])$.
Otherwise, let $((p\dd p+\ell],(n-\ell\dd n])$ be the window pair involving $x[n]$.
If $p=0$, we define a window alignment $\S'$ of $=x(\ell\dd n-\ell]$ so that all the remaining window pairs in $\S$ are shifted by $\ell$ positions to the left.
We then have $\rna_\rho(x) = \rna_\rho(x,\S)= \rho + \rna_\rho(x(\ell\dd n-\ell],\S')\ge \rho + \rna_\rho(x(\ell\dd n-\ell])$. Moreover, $x[1\dd \ell]=\rev{x(n-\ell\dd n]}$ follows from the definition of a window alignment.
In the remaining case of $p>0$, we define window alignments $\S',\S''$ of $x[1\dd p]$ and $x(p\dd n]$, respectively, so that all the window pairs in $\S$ are either preserved in $\S'$ or shifted by $p$ positions to the left and inserted to $\S''$.
We then have $\rna_\rho(x)=\rna_\rho(x,\S)=\rna_\rho(x[1\dd p],\S')+\rna_\rho(x(p\dd n],\S'')\ge \rna_\rho(x[1\dd p])+\rna_\rho(x(p\dd n])$.

As for the upper bound on $\rna_\rho(x)$, let us first consider
$\rna_\rho(x[1\dd p])+\rna_\rho(x(p\dd n])$ for some $p\in [1\dd n)$.
Consider optimal window alignments $\S'$ of $x[1\dd p]$ and $\S''$ of $x(p\dd n]$, respectively.
Taking all the window pairs in $\S'$ and shifting all the window pairs in $\S''$ by $p$ positions to the right,
we obtain a window alignment $\S$ of $x$ that satisfies
$\rna_\rho(x)\le \rna_\rho(x,\S)=\rna_\rho(x[1\dd p],\S')+\rna_\rho(x(p\dd n],\S'') = \rna_\rho(x[1\dd p])+\rna_\rho(x(p\dd n])$.
Next, consider $\ell\in [1\dd \floor{\frac{n}{2}}]$ such that $x[1\dd \ell]=\rev{x(n-\ell\dd n)}$. 
Consider an optimal window alignment $\S'$ of $x(i+\ell\dd j-\ell]$.
Taking all the window pairs in $\S'$ shifted by $\ell$ positions to the right,
plus a window pair $([1\dd \ell],(n-\ell\dd n])$, we obtain a window alignment $\S$ of $x$ that satisfies  
$\rna_\rho(x)\le \rna_\rho(x,\S)=\rho+\rna_\rho(x(\ell\dd n-\ell],\S')=\rho+\rna_\rho(x(\ell\dd n-\ell])$.
\end{proof}

We say that a string $x$ is \emph{irreducible} if $x[i+1]\ne \rev{x[i]}$ holds for all $i\in [1\dd |x|)$.
Due to the equality in \cref{lem:rtriangle}, the task of computing or approximating $\rna(x)$
easily reduces to the special case when $x$ is irreducible (subsequent characters $x[i+1]=\rev{x[i]}$
can be removed without affecting the folding edit distance).

\begin{lemma}\label{lem:rho}
For every $x\in \Sigma^n$ and $\rho\in \Zz$, we have $\rna_\rho(x)\ge \rna(x)$.
Moreover, if $x$ is non-empty and irreducible, then $\rna_\rho(x) \le (1+2\rho)\rna(x)-\rho$.
\end{lemma}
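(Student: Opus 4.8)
The plan is to prove the two bounds separately, both by building on the recursive formula of \cref{lem:rhodp} together with the triangle inequality \cref{lem:rtriangle} and the identity $\rna_0(x)=\rna(x)$ of \cref{obs:zero}.

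\textbf{Lower bound $\rna_\rho(x)\ge \rna(x)$.} Since $\rho\ge 0$, this should follow by a straightforward induction on $n$ using \cref{lem:rhodp}. For $n\le 1$ both sides equal $n$. For $n\ge 2$, we take an optimal window alignment $\S$ and look at which of the two cases of \cref{lem:rhodp} realizes the optimum. In the splitting case we get $\rna_\rho(x)=\rna_\rho(x[1\dd p])+\rna_\rho(x(p\dd n])\ge \rna(x[1\dd p])+\rna(x(p\dd n])\ge \rna(x)$, where the first inequality is the inductive hypothesis and the second is superadditivity of $\rna$ (or, equivalently, the trivial direction of \cref{lem:rtriangle} applied with the appropriate substrings, viewing $\rna(x)=\rna(x[1\dd p]\circ x(p\dd n])$). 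In the matching case, $\rna_\rho(x)=\rho+\rna_\rho(x(\ell\dd n-\ell])\ge \rna_\rho(x(\ell\dd n-\ell])\ge \rna(x(\ell\dd n-\ell])$; and since $x[1\dd\ell]=\rev{x(n-\ell\dd n]}$, we have $\rna(x)=\rna(x[1\dd\ell]\,x(\ell\dd n-\ell]\,x(n-\ell\dd n])\le \rna(x(\ell\dd n-\ell])$ by repeatedly matching the outer characters (again the trivial direction of \cref{lem:rtriangle}). So $\rna_\rho(x)\ge\rna(x)$ in both cases.

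\textbf{Upper bound $\rna_\rho(x)\le(1+2\rho)\rna(x)-\rho$ for non-empty irreducible $x$.} The idea is to take an optimal folding alignment $M$ of $x$ (so $\rna(x)=n-2|M|$, using the fact from the excerpt) and convert it into a window alignment with few window pairs. Since $x$ is irreducible, no two adjacent characters are matched by $M$, so every maximal run of consecutively-matched positions alternates in a way that forces unmatched positions to separate matched blocks; more precisely, I expect that the positions matched by $M$ decompose into runs, and because $M$ is non-crossing, consecutive matched positions in a run are matched to consecutive positions in the partner run, so each run gives one window pair. The key combinatorial point is bounding the number of window pairs $|\S|$: between any two window pairs there must be an unmatched position (otherwise irreducibility is violated or the runs would merge), so $|\S|\le \rna(x)+1$ if $\rna(x)\ge 1$, and one can shave this to the clean bound by a more careful accounting — here I would argue that if $k:=|\S|$ then the $k$ window pairs are separated by at least $k-1$ unmatched characters among interior gaps, while the single unmatched character count $n-2|M|=\rna(x)$ must cover all gaps, giving roughly $k\le \rna(x)$ in the nondegenerate case and $k\le \rna(x)+1$ only when a gap is shared; the case $\rna(x)=0$ needs $x$ empty or forces $|\S|$ small directly from irreducibility. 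Then $\rna_\rho(x)\le \rna_\rho(x,\S)=(n-2|M|)+\rho|\S|=\rna(x)+\rho|\S|\le \rna(x)+\rho(2\rna(x)-1)=(1+2\rho)\rna(x)-\rho$, which is exactly the claimed bound once the count $|\S|\le 2\rna(x)-1$ is established.

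\textbf{Main obstacle.} The delicate step is the precise bound $|\S|\le 2\rna(x)-1$ on the number of window pairs in the window alignment obtained from an optimal folding alignment of an irreducible string. Intuitively each window pair $(w,w')$ with $w$ to the left must have an unmatched character immediately after $w$ and another immediately before $w'$ (else irreducibility or non-crossingness would let us extend the window), and a careful charging of these "boundary" unmatched characters to window pairs — each unmatched character charged by at most two pairs — yields $|\S|\le 2(n-2|M|)=2\rna(x)$, and the $-1$ comes from the extreme window pair at the ends of the string contributing only one boundary. I would handle the $\rna(x)=0$ case (forcing $n=0$ by irreducibility, since a cost-$0$ folding alignment of a non-empty string matches some adjacent pair) and the $\rna(x)=1$ case separately, and otherwise run the charging argument; the bookkeeping of which boundary characters exist and are distinct is the part that requires genuine care rather than routine calculation.
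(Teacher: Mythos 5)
Your lower-bound argument is correct, though more roundabout than necessary; the paper gets it in one step by unfolding the optimal window alignment $\S$ into the folding alignment $M=\bigcup_{(w,w')\in\S}\{(s(w)+i,\,e(w')-i):i\in[0\dd|w|)\}$, which gives $\rna(x)\le\rna(x,M)=\rna_\rho(x,\S)-\rho|\S|\le\rna_\rho(x)$ directly.

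For the upper bound you correctly isolate the crux, namely $|\S|\le 2\rna(x)-1$ for the window alignment $\S$ extracted from an optimal folding alignment $M$ — but both combinatorial arguments you offer for this count are false. First, a maximal run of consecutively matched positions need not form a single window: for $M\supseteq\{(1,10),(2,9),(3,7),(4,6)\}$, positions $1,2,3,4$ are all matched yet split into the two windows $[1,2]$ (partner $[9,10]$) and $[3,4]$ (partner $[6,7]$), because $M(2)=9$ and $M(3)=7$ are not consecutive even though $2$ and $3$ are. Second, and fatally, it is false that every window pair $(w,w')$ has an unmatched character immediately after $e(w)$ and immediately before $s(w')$: take $x=a\,b\,u\,\rev{b}\,c\,v\,\rev{c}\,\rev{a}$ with $u,v$ unique letters, so the unique optimal $M=\{(1,8),(2,4),(5,7)\}$ yields $\S=\{([1],[8]),([2],[4]),([5],[7])\}$, $\rna(x)=2$, $|\S|=3=2\rna(x)-1$, and the outer pair $([1],[8])$ has both boundary positions $2$ and $7$ matched, so it never gets charged. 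Your interim bound $|\S|\le\rna(x)+1$ is off by an unbounded amount: iterating $S(k)=a_k\,S(k-1)\,S(k-1)\,\rev{a_k}$ with $S(0)=a_0\,u_0\,\rev{a_0}$ gives $\rna(S(k))=2^k$ but $|\S|=2^{k+1}-1$, showing $2\rna-1$ is tight. The paper sidesteps the window count entirely by strengthening the induction hypothesis: it proves $\rna_\rho(x)\le(1+2\rho)\rna(x)-2\rho$ whenever $x$ admits an optimal folding alignment not containing $(1,n)$, by induction on $|x|$ via \cref{lem:rhodp}; the base and split cases yield $-2\rho$ directly, and in the nesting case the maximality of the matched outer block lets one apply the $-2\rho$ hypothesis to the inside and recover $\rho+(1+2\rho)\rna(x)-2\rho=(1+2\rho)\rna(x)-\rho$. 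If you insist on the combinatorial route, you must analyse the nesting forest of $\S$ and charge unmatched positions only to leaves and degree-one internal nodes; the degree-$\ge 2$ nodes are then bounded by the leaf count, which is noticeably subtler than the local charging you describe.
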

\begin{proof}
As for the first inequality, consider the optimal window alignment $\S$ of $x$.
We construct $M=\bigcup_{(w,w')\in \S} \{(s(w)+i, e(w')-i) : i\in [0\dd |w|)\}$.
It is easy to see that $M$ is a non-crossing matching and that $\rna(x,M)=\rna_\rho(x,\S)-\rho|\S|$.
Hence, $\rna(x) \le \rna(x,M) \le \rna_\rho(x,\S)=\rna_\rho(x)$.

As for the second inequality, we proceed by induction on $|x|$.
However, we prove a stronger inequality $\rna_\rho(x) \le (1+2\rho)\rna(x)-2\rho$
when $x$ admits an optimal folding alignment that does not contain $(1,n)$.
In the base case of $n=1$, we have $\rna_\rho(x) = 1 = (1+2\rho)-2\rho = (1+2\rho)\rna(x)-2\rho$.

If $\rna(x)=\rna(x[1\dd p]) + \rna(x(p\dd n])$ for some $p\in (1\dd n)$,
then we consider $x'=x[1\dd p]$ and $x''=x(p\dd n]$.
By \cref{lem:rhodp} and the inductive assumption,
$\rna_\rho(x) \le \rna_\rho(x')+\rna_\rho(x'')\le (1+2\rho)\rna(x')-\rho+(1+2\rho)\rna(x'')-\rho
=(1+2\rho)\rna(x)-2\rho$.
Otherwise, let $M$ be an optimal folding alignment of $x$ and let $\ell$ be the largest integer such that $M \supseteq \{(p,n-p) : p\in [1\dd \ell]\}$.
The latter condition implies $x(n-\ell\dd n]=\rev{x[1\dd \ell]}$, whereas irreducibility
of $x$ guarantees $\ell \le \lfloor \frac{n}{2}\rfloor$.
In this case, \cref{lem:rhodp} yields $\rna_\rho(x)\le 1+\rna_\rho(x(\ell\dd n-\ell]) \le \rho+(1+2\rho)\rna(x(\ell\dd n-\ell])-2\rho = (1+2\rho)\rna(x)-\rho$.
\end{proof}



\begin{algorithm}
	$m:=\lfloor{n/s}\rfloor$\;
	\For{$a=m$ \KwSty{down to} $1$}{
		$D[a,a]:=0$\;
		\lIf{$a<m$}{$D[a,a+1]:=s$}
		\For{$b:=a+2$ \KwSty{to} $m$}{
			$D[a,b]=\infty$\;
			\For{$d : = 1$ \KwSty{to} $\floor{\frac{b-a-2}{2}}$}{
				\If{$\rev{x((b-d)s\dd bs]}\text{ is a substring of }x(as\dd (a+d+2)s]$}{\label{ln:ipm}
					$D[a,b] :=\min(D[a,b],12s+D[a+d+2,b-d])$\;
				}
			}
			\For{$c=a+1$ \KwSty{to} $b-1$}{
				$D[a,b]:=\min(D[a,b], D[a,c]+D[c,b])$\;
			}
		}
	}
	\Return{$D[0,m]+n\bmod s$}
	\caption{Approximate folding distance.}\label{alg:fold}
\end{algorithm}

The algorithm for approximating folding edit distance takes a string $x$ and a parameter $s\in \Z_{+}$ as input and outputs a matrix $D$ where $D[\lfloor \frac{i}{s}\rfloor, \lfloor \frac{j}{s}\rfloor]$ provides an estimation of the distance of $x(i\dd j]$. To compute an entry $D[a,b]$, the algorithm searches for substrings $x((b-d)s\dd bs]$, where $d\in[1\dd \lfloor \frac{b-a-2}{2}\rfloor]$, (the substring starts and ends at indices divisible by $s$) whose inverses occur in $x(as\dd (a+d+2)s]$; if such a match exists,
the algorithm sets $D[a,b]:=\min(D[a,b],12s+D[a+d+2,b-d])$. Next the algorithm considers all possible pivots $c\in(a\dd b)$ and sets $D[a,b]:=\min(D[a,b], D[a,c]+D[c,b])$.

\begin{lemma}\label{lem:fold}
Consider running \cref{alg:fold} on a string $x\in \Sigma^n$ with parameter $s\in \Z_{+}$.
For every fragment $x(i\dd j]$ of $x$, we have $\frac13 D[\floor{\frac{i}{s}},\floor{\frac{j}{s}}] \le \rna_{8s}(x(i\dd j])+i\bmod s - j\bmod s \le D[\floor{\frac{i}{s}},\floor{\frac{j}{s}}]$.
In particular, the returned value is between $\rna_{8s}(x)$ and $3\rna_{8s}(x)$.
\end{lemma}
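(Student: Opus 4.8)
The plan is to prove the two-sided bound
\[
\tfrac13 D[\floor{\tfrac{i}{s}},\floor{\tfrac{j}{s}}] \le \rna_{8s}(x(i\dd j]) + i\bmod s - j\bmod s \le D[\floor{\tfrac{i}{s}},\floor{\tfrac{j}{s}}]
\]
by induction on $b-a$, where $a=\floor{i/s}$ and $b=\floor{j/s}$; the ``in particular'' clause then follows by taking $i=0$, $j=n$ (so $a=0$, $b=m$, $i\bmod s=0$, and $D[0,m]+n\bmod s$ is the returned value, while $\rna_{8s}(x(0\dd n-(n\bmod s)]) = \rna_{8s}(x) + O(s)$ --- actually one should just track the $n\bmod s$ correction term carefully, noting $j\bmod s = n\bmod s$ for $j=n$).

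First I would set up notation: write $\rho = 8s$ and, for a fragment $x(i\dd j]$, let $\phi(i,j) = \rna_\rho(x(i\dd j]) + i\bmod s - j\bmod s$. The base cases $a=b$ and $b=a+1$ are immediate from $D[a,a]=0$, $D[a,a+1]=s$, and the fact that $\rna_\rho$ of a fragment of length $\le s$ (resp.\ between $s$ and $2s$) is that fragment's length (resp.\ at most $2s$, handled by the $\tfrac13$ slack). For the \textbf{upper bound} $\phi(i,j) \le D[a,b]$ in the inductive step: each way $D[a,b]$ gets updated corresponds to a genuine window-alignment construction for $x(i\dd j]$. A pivot update $D[a,c]+D[c,b]$ corresponds to concatenation (matching the recursion of Lemma~\ref{lem:rhodp}, first case), and the induction hypothesis on both halves gives the bound since the $i\bmod s$, $j\bmod s$ correction terms telescope (the middle term $cs\bmod s = 0$). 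A match update $12s + D[a+d+2,b-d]$ corresponds to matching a window of length $ds$ on the left with a window of length $ds$ on the right (this is legal precisely because $\rev{x((b-d)s\dd bs]}$ occurs as a substring of $x(as\dd (a+d+2)s]$, so some length-$ds$ window inside $x(as\dd(a+d+2)s]$ equals the reverse complement of $x((b-d)s\dd bs]$); the ``slack'' of $2s$ on each side plus the cost $\rho = 8s$ of the single window pair is absorbed into the $12s$, and the remaining middle fragment is handled by induction. The \textbf{lower bound} $D[a,b] \le 3\,\phi(i,j)$ is the direction that needs more care: I would take an optimal window alignment $\S$ of $x(i\dd j]$, apply the recursion of Lemma~\ref{lem:rhodp}, and show that whichever case is optimal can be ``simulated'' by an update of $D[a,b]$ at a cost of at most a factor $3$. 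If the optimal $\S$ splits at a pivot $p$, round $p$ to the nearest multiple of $s$; the rounding moves at most $s$ characters, which is covered by the $\tfrac13$ slack on each side. If the optimal $\S$ peels off a matched window pair $([1\dd\ell],(\text{len}-\ell\dd\text{len}])$ of the fragment, I would round the left window's right endpoint up and the right window's left endpoint down to multiples of $s$ (this is where the ``$12s$'' and the requirement $d \le \floor{\frac{b-a-2}{2}}$ come from --- one needs at least two $s$-blocks of room to round both ends inward), pay the $\rho=8s$ window cost plus $O(s)$ rounding, and recurse; the condition that such a window pair exists translates exactly to the substring-occurrence test on line~\ref{ln:ipm}.

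The main obstacle I expect is the lower-bound direction, specifically the bookkeeping in the ``peel off a matched window'' case: one must argue that an arbitrary matched window pair in the optimal alignment of $x(i\dd j]$, whose endpoints need not be multiples of $s$, can be replaced by a window pair with $s$-aligned endpoints whose lengths are equal and whose induced substring-occurrence relation is exactly what line~\ref{ln:ipm} checks, all while losing only a constant factor and an additive $O(s)$ absorbed by $\rho = 8s$. The choice of constants ($12s$ for the update, $8s$ for $\rho$, factor $3$ overall) is forced by this argument, and I would verify them by carefully adding up: at most $s$ lost at each of the (up to) four rounded endpoints, plus the single window's contribution $\rho$, against a lower bound on $\rna_\rho$ of the original fragment which is at least $\rho = 8s$ whenever any window pair is used (so the additive losses are always a bounded fraction of the optimum, giving the multiplicative $3$). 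The telescoping of the $i\bmod s$ / $j\bmod s$ correction terms through the recursion is routine once one observes these terms vanish at every intermediate cut point (which is a multiple of $s$) and only the two extreme endpoints $i,j$ contribute.
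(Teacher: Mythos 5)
Your plan is correct and follows essentially the same route as the paper's proof: induction on the fragment length, with the upper bound obtained by reading each DP update as a window-alignment construction via Lemma~\ref{lem:rhodp}, and the lower bound by expanding $\rna_{8s}(x(i\dd j])$ through the same recursion and simulating the chosen case by a DP update at the cost of a factor $3$. One small refinement worth noting: in the lower-bound pivot case you describe rounding $p$ ``to the nearest multiple of $s$'' with an $O(s)$ loss, but in fact no loss is incurred --- the paper uses $\floor{k/s}$ and the terms $k\bmod s$ simply cancel in the telescoping sum, which is precisely what the $i\bmod s - j\bmod s$ correction is designed to accomplish; similarly, in the match case it is worth recording explicitly that only $\ell \ge 4s$ needs to be handled (smaller $\ell$ would make the window pair suboptimal given $\rho = 8s$), which is what guarantees $d\ge 1$ and makes the $s$-aligned choice $d = \floor{(i+\ell)/s}-a-2$ valid.
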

\begin{proof}
We proceed by induction on $j-i$.
In the base case of $j-i\in \{0,1\}$,
we have $D[\floor{\frac{i}{s}},\floor{\frac{j}{s}}]= \floor{\frac{j}{s}}-\floor{\frac{i}{s}} = j-j\bmod s-(i-i\bmod s) = (j-i)+i\bmod s - j\bmod s = \rna_{8s}(x(i\dd j])+i\bmod s - j\bmod s$.

As for the inductive step, let us denote $a=\floor{\frac{i}{s}}$
and $b=\floor{\frac{j}{s}}$. 
We start with the upper bound on $\rna_{8s}(x(i\dd j])$.
If $D[a,b]=D[a,c]+D[c,b]$ holds for some $c\in (a\dd b)$,
then \cref{lem:rhodp} and the inductive assumption yield
\begin{multline*}
\rna_{8s}(i,j)\le \rna_{8s}(i,cs)+\rna_{8s}(cs,j) \le D[a,c]-i\bmod s + D[c,b]+j\bmod s \\ = D[a,b]+j\bmod s - i\bmod s.\end{multline*}
Next, suppose that $D[a,b]=12s+D[a+d+2,b-d]$ holds for some $d\in [1\dd \floor{\frac{b-a-2}{2}}]$
such that $\rev{x((b-d)s\dd bs]}$ is a substring of $x(as\dd (a+d+2)s]$.
The latter condition implies that $x(i'\dd i'+ds]=\rev{x((b-d)s\dd bs]}$ holds for some $i'\in [as\dd (a+2)s]$.
Then, \cref{lem:rhodp} and the inductive assumption yield
\begin{align*}\rna_{8s}(x(i\dd j]) &\le \rna_{8s}(x(i'\dd bs]) + |i'-i| + (j-bs)\\
&\le \rna_{8s}(x(i'+ds\dd (b-d)s]) + 8s + |i'-i|+ j\bmod s\\
&\le \rna_{8s}(x((a+d+2)s\dd (b-d)s]) + ((a+2)s-i') + 8s+ |i'-i| + j\bmod s\\
&\le D[a+d+2,b-d] + 8s + \max((a+2)s-i, (a+2)s+i-2i') + j\bmod s\\
&= D[a,b] - 4s + \max(2s-i\bmod s, 2s+i\bmod s) + j\bmod s\\
&= D[a,b] - 2s +  i\bmod s + j \bmod s \\
&< D[a,b] -i\bmod s + j\bmod s.\end{align*}

It remains to prove the lower bound on $\rna_{8s}(x(i\dd j])$.
If $\rna_{8s}(x(i\dd j])=\rna_{8s}(x(i\dd k])+\rna_{8s}(x(k\dd j])$ holds for some $k\in (i\dd j)$,
then the inductive assumption yields 
\begin{align*}D[a,b]
&\le D[a,\floor{\tfrac{k}{s}}]+D[\floor{\tfrac{k}{s}},b] \\
&\le 3(\rna_{8s}(x(i\dd k])+i\bmod s - k \bmod s)+3(\rna_{8s}(x(k\dd j])+k\bmod s - j \bmod s)\\
&= 3(\rna_{8s}(x(i\dd j])+i\bmod s - j\bmod s).\end{align*}
Otherwise, \cref{lem:rhodp} implies that $\rna_{8s}(x(i\dd j])=8s + \rna_{8s}(x(i+\ell\dd j-\ell])$
holds for some $\ell\in [4s\dd \floor{\frac{j-i}{2}}]$ such that $x(i\dd i+\ell]=\rev{x(j-\ell\dd j]}$.
Let $d := \floor{\frac{i+\ell}{s}}-a-2$ and observe that $d \ge \frac{\ell-3s}{s}\ge 1$
and $d \le \frac{\ell-s}{s}$.
Moreover, $(a+d+2)s \le i+\ell \le j-\ell < (b+1)s-\ell \le (b-d)s$,
so $d\le \floor{\frac{b-a-2}{2}}$,
and $(a+d+2)s-i=(d+2)s-i\bmod s > (d+1)s > ds + j\bmod s = j-(b-d)s$,
so $\rev{x((b-d)s\dd j]}$ is a prefix of $x(i\dd (a+d+2)s]$ and thus $\rev{x((b-d)s\dd bs]}$ is a substring of $x(as\dd (a+d+2)s]$.
Consequently, the inductive assumption yields 
\begin{align*}D[a,b]&\le 12s + D[a+d+2,b-d]\\
&\le 3(\rna_{8s}(x((a+d+2)s\dd (b-d)s])+4s) \\
&\le 3(\rna_{8s}(x(i+\ell \dd j-\ell]) + (i+\ell)-(a+d+2)s + (b-d)s-(j-\ell)+4s)\\
&= 3(\rna_{8s}(x(i \dd j]) + (i-as)-(j-bs)+2(\ell-3s-ds))\\
&\le 3(\rna_{8s}(x(i \dd j])+i\bmod s - j\bmod s).\qedhere\end{align*}\end{proof}

\begin{fact}\label{fct:time}
\cref{alg:fold} can be implemented in $\Oh(n+\frac{n^3}{s^3})$ time.
\end{fact}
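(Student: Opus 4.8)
The plan is to bound the running time of \cref{alg:fold} by separately accounting for the outer double loop over $(a,b)$ and the two inner loops, using fast substring queries to handle \cref{ln:ipm}. First I would observe that the table $D$ has $\Oh(m^2)=\Oh(n^2/s^2)$ entries with $m=\floor{n/s}$, and for each entry $D[a,b]$ the pivot loop over $c\in(a\dd b)$ costs $\Oh(m)$ time, for a total of $\Oh(m^3)=\Oh(n^3/s^3)$ across the whole table; this is one of the two terms in the claimed bound. The other inner loop runs over $d\in[1\dd\floor{(b-a-2)/2}]$, which is also $\Oh(m)$ iterations per entry, so provided each iteration (in particular the substring test in \cref{ln:ipm}) runs in amortized $\Oh(1)$ time after a linear-time preprocessing, this loop also contributes $\Oh(m^3+n)=\Oh(n^3/s^3+n)$.

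The crux is therefore implementing the test in \cref{ln:ipm}: decide whether $\rev{x((b-d)s\dd bs]}$ occurs as a substring of $x(as\dd(a+d+2)s]$, i.e., whether a fragment of $\rev{x}$ of length $ds$ occurs inside a fragment of $x$ of length $(d+2)s$. The plan is to build, in $\Oh(n)$ time, the data structure supporting \textsc{Internal Pattern Matching} (IPM) queries of~\cite{KRRW15,phd} on a single string obtained by concatenating $x$ and $\rev{x}$ (with a separator). An IPM query, given two fragments $P$ and $T$ of the indexed string with $|T|<2|P|$, reports all occurrences of $P$ in $T$ in $\Oh(1)$ time; here $P=\rev{x((b-d)s\dd bs]}$ has length $ds$ and $T=x(as\dd(a+d+2)s]$ has length $(d+2)s<2ds$ (using $d\ge1$), so the size constraint is met and each query runs in $\Oh(1)$ time. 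Thus \cref{ln:ipm} is answered in constant time per $(a,b,d)$, and the initialization lines ($D[a,a]:=0$, $D[a,a+1]:=s$) together with the final return add only $\Oh(m^2)$ more.

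Summing up: the preprocessing for IPM queries is $\Oh(n)$; filling $D$ costs $\Oh(m^2)$ for initialization plus $\Oh(m^3)$ for the pivot loops plus $\Oh(m^3)$ for the $d$-loops (each iteration $\Oh(1)$); the total is $\Oh(n+m^3)=\Oh(n+\frac{n^3}{s^3})$, where I use $m=\floor{n/s}\le n/s$ to absorb $m^2$ into $m^3$ when $m\ge1$ (and the $n\ge\ s$ case when $m=0$ is trivial, as then the algorithm does essentially nothing beyond returning $n\bmod s$). The main obstacle is verifying that the IPM machinery applies cleanly — in particular checking the length condition $|T|<2|P|$ holds for every $d$ in range (which follows from $d\ge1$, so $(d+2)s\le 3ds/? $ — more precisely $(d+2)s < 2ds \iff d+2<2d \iff d>2$, so for $d\in\{1,2\}$ one instead uses that $|T|=(d+2)s\le 4s$ is $\Oh(s)$ and the query region is tiny, handled either by a direct $\Oh(s)$ scan or by noting the IPM structure of~\cite{KRRW15} still supports such short patterns) — and confirming that indexing $x\#\rev{x}$ lets us express every fragment $\rev{x((b-d)s\dd bs]}$ as a fragment of the indexed string. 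These are routine once the correct statement of the IPM result is invoked, so I would cite~\cite{KRRW15,phd} and spell out only the index bookkeeping.
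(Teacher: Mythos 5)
Your overall structure matches the paper's: $\Oh(m^3)$ for the two inner loops, linear preprocessing, and one \textsc{IPM} query per inner iteration on a text built from $x$ and $\rev{x}$. The flaw is in the handling of $d\in\{1,2\}$. You correctly observe that the constraint $|T|<2|P|$, i.e.\ $(d+2)s<2ds$, fails for $d\le 2$, but your proposed fallback — a direct $\Oh(s)$ scan per such query — does not stay within the claimed budget. There are $\Theta(m^2)$ pairs $(a,b)$ for which $d=1$ is in range, so this fallback alone costs $\Theta(m^2 s)=\Theta(n^2/s)$, and $n^2/s$ is not $\Oh(n+n^3/s^3)$ across the whole parameter range: at $s=n^{2/3}$, for example, the fallback contributes $\Theta(n^{4/3})$ while the target bound is $\Theta(n)$.

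The paper avoids any case distinction by citing the stronger form of the \textsc{IPM} guarantee from~\cite{KRRW15,phd}: after $\Oh(n)$ preprocessing of the text, a query for a pattern fragment $P$ inside a text fragment $T$ costs time proportional to the ratio $|T|/|P|$, not merely $\Oh(1)$ under the restriction $|T|<2|P|$. Here $(d+2)s/(ds)\le 3$ for every $d\ge 1$, so each query is $\Oh(1)$ and the total is $\Oh(n+n^3/s^3)$ uniformly. Equivalently, you can keep the restricted $|T|<2|P|$ formulation and cover $T$ by at most three overlapping sub-fragments of length below $2|P|$, issuing one $\Oh(1)$-time query per piece; either repair makes your argument go through.
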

\begin{proof}
Not accounting for \cref{ln:ipm}, \cref{alg:fold} already works in $\Oh(\frac{n^3}{s^3})$.
As for \cref{ln:ipm}, the task is to verify whether a fragment of $\rev{x}$ of length $ds$ has an occurrence 
within a fragment of $x$ of length $(d+2)s$. This is precisely the problem solved by the \textsc{Internal Pattern Matching} queries of~\cite{KRRW15,phd} on the text $x\bar{x}$. After linear-time preprocessing of the text,
\textsc{IPM} queries can be answered in time proportional to the ratio of the lengths of the two fragments involved,
which is $\frac{(d+2)s}{ds}\le 3$ in our setting.
Hence, the total running time is $\Oh(n+\frac{n^3}{s^3})$, as claimed.
\end{proof}

\thmfold*
\begin{proof}
If $\tau \le 51$, we use the classic algorithm computing $\rna(x)$ exactly in $\Oh(n^3)$ time (it also follows from \cref{lem:rhodp,obs:zero}). Thus, we henceforth assume that $\tau \ge 51$.

In the preprocessing step, we reduce $x$ by exhaustively removing adjacent characters $x[i]x[i+1]$
such that $x[i+1]=\rev{x[i]}$. This operation preserves $\rna(x)$ due to the equality in \cref{lem:rtriangle}.
Moreover, it can be implemented in $\Oh(n)$ time via a left-to-right scan of $x$
that maintains an auxiliary string $y$ representing the reduced version of the already processed prefix $x[1\dd p]$.
While scanning the subsequent character $x[p+1]$, we check whether $\rev{x[p+1]}$ matches the last character of $y$.
Depending on the result, we either remove the last character of $y$ or append $x[p+1]$ to $y$.

After this $\Oh(n)$-time preprocessing, we are left with the task of approximating $\rna(x)$
for an irreducible string $x$. If $x$ is empty, we return $\rna(x)=0$.
Otherwise, we run \cref{alg:fold} with $s = \floor{\frac{\tau-3}{48}}\in \Z_+$ and return its result.
By \cref{fct:time}, this call costs $\Oh(n+\frac{n^3}{s^3})=\Oh(n+\frac{n^3}{\tau^3})$ time,
Moreover, \cref{lem:fold,lem:rho} guarantee that the resulting value $v$ satisfies
\[\rna(x) \le \rna_{8s}(x) \le v \le 3\rna_{8s}(x) \le 3((1+16s)\rna(x)-8s) < (3+48s)\rna(x)\le \tau \rna(x).\qedhere\]
\end{proof}

\bibliographystyle{alphaurl}
\bibliography{main}

\end{document}